\def\BibTeX{{\rm B\kern-.05em{\sc i\kern-.025em b}\kern-.08em
    T\kern-.1667em\lower.7ex\hbox{E}\kern-.125emX}}
\def\proofsnamefont{\itshape}
\def\proofsindent{\noindent}
\newenvironment{proofSketch}[1][Proof sketch]{\par
    \pushQED{\qed}%
    \normalfont 
    \trivlist
    \item[\proofsindent\hskip\labelsep
    {\proofsnamefont #1.}]\ignorespaces
}{%
    \popQED\endtrivlist
}
\newtheorem{theorem}{Theorem}
\newtheorem{remark}{Remark}
\newtheorem{corollary}{Corollary}
\newtheorem{lemma}{Lemma}
\newtheorem{proposition}{Proposition}
\newtheorem{definition}{Definition}
\newtheorem{example}{Example}
\newtheorem{assumption}{Assumption}
\newcommand{\cmark}{\ding{51}}%
\newcommand{\xmark}{\ding{55}}%
\newcommand{\firstvisit}[3]{\mathit{firstVisit}_{{#1}}(#2, #3)}
\newcommand{\lastvisit}[3]{\mathit{lastVisit}_{{#1}}(#2, #3)}
\newcommand{\proph}{\mathfrak{P}}
\newcommand{\pro}[2]{\xi_{{#1}, {#2}} }
\newcommand{\prov}[2]{p_{{#1}, {#2}} }
\newcommand{\proRel}[3]{\xi_{{#1}, {#2}, #3} }
\newcommand{\provRel}[3]{p_{{#1}, {#2}, #3} }
\newcommand{\lpromod}[3]{\mathfrak{P}^{\mathit{bin}}_{#2, #1, #3} }
\newcommand{\lpromin}[3]{\mathfrak{P}^{\mathit{min}}_{#2, #1, #3}}
\newcommand{\lpro}[2]{\mathfrak{P}_{{#1}, {#2}} }
\newcommand{\lproRel}[3]{\mathfrak{P}_{#1, #2, {#3}} }
\newcommand{\lproRell}[3]{\mathfrak{P}_{#1, #2, {#3}} }
\newcommand{\lproRab}[4]{\mathfrak{P}_{#1, #2, #3, {#4}} }
\newcommand{\ap}[0]{\mathit{AP}}
\newcommand{\vertex}{\mathbf{v}}
\newcommand{\sSeq}[1]{s_\vertex(#1)}
\newcommand{\ssSeq}[1]{s'_\vertex(#1)}
\newcommand{\bSeq}[1]{A_\vertex(#1)}
\newcommand{\sTr}[1]{t_\vertex(#1)}
\newcommand{\ssTr}[1]{t'_\vertex(#1)}
\newcommand{\trs}{t_\vertex}
\newcommand{\trss}{t'_\vertex}
\newcommand{\qSeqHat}[1]{\hat{q}_\vertex(#1)}
\newcommand{\veri}{\mathfrak{V}}
\newcommand{\refu}{\mathfrak{R}}
\newcommand{\aut}{\mathcal{A}}
\newcommand{\tss}{\mathcal{T}}
\newcommand{\ltlN}{\LTLnext}
\newcommand{\ltlG}{\LTLglobally}
\newcommand{\ltlF}{\LTLeventually}
\newcommand{\ltlU}{\LTLuntil}
\newcommand{\zip}[1]{\mathit{zip}(#1)}
\newcommand{\tracevars}{\mathcal{V}}
\newcommand{\traceSet}{\mathbb{T}}
\newcommand{\sucs}[1]{\mathit{Sucs}(#1)}
\newcommand{\tssto}{\xrightarrow{\tss}}
\newcommand{\traces}[1]{\mathit{Traces}(#1)}
\newcommand{\paths}[1]{\mathit{Paths}(#1)}
\newcommand{\game}[2]{\mathcal{G}_{#1, #2}}
\newcommand{\winss}[2]{#1 \vdash #2}
\newcommand{\calL}{\mathcal{L}}
\newcommand{\calO}{\mathcal{O}}
\newcommand{\gamenode}[1]{\langle #1 \rangle}
\newcommand{\reponse}[3]{\mathit{T}(#1, #2, #3)}
\newcommand{\optPath}[3]{\mathit{opt}(#1, #2, #3)}
\newcommand{\nat}{\mathbb{N}}
\newcommand{\refApp}[1]{Appendix \ref{#1}}
\newcommand{\ldot}{\mathpunct{.}}
\newcommand{\forallu}{\accentset{\sim}{\forall}}
\newcommand{\existsu}{\accentset{\sim}{\exists}}
\newcommand{\decode}[1]{\mathit{dec}(#1)}
\newcommand{\encode}[1]{\mathit{enc}(#1)}
\crefname{strat}{strategy}{strategies}
\Crefname{strat}{Strategy}{Strategies}
\newcommand\fs@plainruled{\def\@fs@cfont{\rmfamily}\let\@fs@capt\floatc@plain
	\def\@fs@pre{\hrule height1pt depth0pt \kern5pt}%
	\def\@fs@post{}%
	\def\@fs@mid{\kern2pt\hrule height1pt depth0pt\relax\kern\abovecaptionskip}%
	\let\@fs@iftopcapt\iffalse}
\newenvironment{alg}[1][htb]{%
	\renewcommand{\ALG@name}{Alg.}
	\begin{algorithm}[#1]%
	}{\end{algorithm}}
\newcommand\copyrighttext{%
	\footnotesize \textcopyright 2022 IEEE. Personal use of this material is permitted. Permission from IEEE must be obtained for all other uses, in any current or future media, including reprinting/republishing this material for advertising or promotional purposes, creating new collective works, for resale or redistribution to servers or lists, or reuse of any copyrighted component of this work in other works.}
\newcommand\copyrightnotice{%
	\begin{tikzpicture}[remember picture,overlay]
		\node[anchor=south,yshift=10pt] at (current page.south) {\fbox{\parbox{\dimexpr\textwidth-\fboxsep-\fboxrule\relax}{\copyrighttext}}};
	\end{tikzpicture}%
}
\newif\iffullversion
\newcommand{\ifFull}[2]{\iffullversion#1\else#2\fi}
\title{Prophecy Variables for Hyperproperty Verification}
\author{\IEEEauthorblockN{Raven Beutner} \IEEEauthorblockA{CISPA Helmholtz Center for \\Information Security\\ Germany} \and \IEEEauthorblockN{Bernd Finkbeiner} \IEEEauthorblockA{CISPA Helmholtz Center for \\Information Security\\ Germany}}
\begin{document}

\maketitle

\begin{abstract}
Temporal logics for hyperproperties like HyperLTL use trace quantifiers to express properties that relate multiple system runs. 
In practice, the verification of such specifications is mostly limited to formulas without quantifier alternation, where verification can be reduced to checking a trace property over the self-composition of the system. 
Quantifier alternations like $\forall \pi. \exists \pi'. \phi$, can either be solved by complementation or with an interpretation as a two-person game between a $\forall$-player, who incrementally constructs the trace $\pi$, and an $\exists$-player, who constructs $\pi'$ in such a way that $\pi$ and $\pi'$ together satisfy $\phi$. 
The game-based approach is significantly cheaper but incomplete because the $\exists$-player does not know the future moves of the $\forall$-player. 
In this paper, we establish that the game-based approach can be made complete by adding ($\omega$-regular) temporal prophecies.
Our proof is constructive, yielding an effective algorithm for the generation of a complete set of prophecies.
\end{abstract}

\begin{IEEEkeywords}
Hyperproperties, HyperLTL, Hyperliveness, Verification, Prophecy Variables, Completeness 
\end{IEEEkeywords}

\iffullversion
\copyrightnotice
\fi

\section{Introduction}\label{sec:intro}

Hyperproperties \cite{ClarksonS08} are system properties that relate multiple execution traces in a system and commonly arise, e.g., in information-flow policies.
An increasingly popular logic for the specification of general hyperproperties is HyperLTL \cite{ClarksonFKMRS14}, which extends linear-time temporal logic (LTL) with explicit trace quantification.
In HyperLTL we can, for example, express a simple variant of non-interference (NI)~\cite{RoscoeWW96} as follows:
\begin{align*}
    \forall \pi. \forall \pi'\ldot \ltlG \Big( \bigwedge_{a \in L_\mathit{in}} a_{\pi} \leftrightarrow a_{\pi'} \Big) \to \ltlG \Big( \bigwedge_{a \in L_\mathit{out}} a_{\pi} \leftrightarrow a_{\pi'} \Big)
\end{align*}
Here $L_\mathit{in}$ and $L_\mathit{out}$ are sets of atomic propositions denoting low-security inputs and outputs. 
Sets $H_\mathit{in}$ and $H_\mathit{out}$ are the high-security counterparts.
The HyperLTL property states that any two traces with identical low-security inputs have identical low-security outputs, i.e., the system behaves deterministically for a low-security user.
A less strict notation of non-interference, in the literature often referred to as \emph{generalized non-interference} (GNI) \cite{McCullough88}, can be expressed as follows:
\begin{align*}
    \forall \pi. \forall \pi'. \exists \pi''\ldot \ltlG \Big( \!\! \bigwedge_{a \in L_\mathit{in} \cup L_\mathit{out}} \!\!\!\!\!a_{\pi} \leftrightarrow a_{\pi''} \Big) \land \ltlG \Big( \!\! \bigwedge_{a \in H_\mathit{in}} \!\!a_{\pi'} \leftrightarrow a_{\pi''} \Big)
\end{align*}
GNI states that for all traces $\pi$ and $\pi'$, there \emph{exists} a third trace $\pi''$ that agrees with the low-security inputs and outputs of $\pi$ but with the high-security inputs of $\pi'$.
Phrased differently, any input-output behavior observable by a low-security user is compatible with any sequence of high-security inputs.
GNI is of particular interest as it applies to non-deterministic systems where the simple variant of NI is violated when the nondeterminism influences the low-security output.

In this paper, we study the verification of HyperLTL, i.e., the question of whether a given system satisfies a given property.
For HyperLTL, the structure of the quantifier prefix has direct implications on the complexity of the verification problem.
For our example properties, the fundamental difference (w.r.t.~verification) between NI and GNI, is that NI uses only universal quantification over traces (we say NI is alternation-free) whereas GNI involves a quantifier alternation.
Verification of alternation-free properties is well understood and is reducible to the verification of a trace property on a suitable self-composition of the system \cite{BartheDR11,FinkbeinerRS15}. 
By contrast, verification of properties involving alternations is much more challenging.
In the complementation-based approach  \cite{FinkbeinerRS15} a quantifier alteration like  $\forall \pi. \exists \pi'. \phi$ is interpreted as $\forall \pi. \neg \forall \pi'. \neg \phi$ which can be checked by incrementally eliminating quantifiers with interposed system complementation.
This complementation is infeasible for larger systems.

\subsection{Strategy-based Verification}

A first scalable verification method for $\forall^*\exists^*$ HyperLTL properties (i.e., properties that involve an arbitrary number of universal quantifiers followed by an arbitrary number of existential quantifiers, such as GNI) has been proposed by Coenen et al. \cite{CoenenFST19}, which we call \emph{strategy-based verification}.
The key idea is to interpret a $\forall \pi. \exists \pi'. \phi$ formula as a game.
The $\forall$-player controls the universally quantified trace by moving through the system (thereby producing a trace $\pi$) while the $\exists$-player reacts with moves in a separate copy of the system (thereby producing a trace $\pi'$). 
The $\exists$-player wins if $\pi$ combined with $\pi'$ satisfies $\phi$. 
The resulting verification approach is sound (i.e., a winning strategy for the $\exists$-player implies that the property holds) and much cheaper than the complementation-based method (the game can be solved in polynomial time whereas the complementation incurs an exponential blow-up). 
The method is, however, incomplete. 
The $\exists$-player can, in step $i$, only react to the moves of the $\forall$-player up to step $i$ (i.e., only a finite prefix of the trace constructed by the $\forall$-player) and has no access to future behavior.
See \Cref{sec:overview} for examples.

\subsection{Prophecies to the Rescue}

A common proof technique to make information about future events accessible are prophecy variables \cite{AbadiL91}.
In the context of hyperproperty verification, a prophecy provides the $\exists$-player with information about the future behavior of the $\forall$-player. 
Appropriately chosen prophecies result in the existence of a winning strategy for the $\exists$-player (who, in each step, has access to the prophecies), even in cases where there is no winning strategy without the prophecies \cite{CoenenFST19}. 
However, in the context of hyperproperty verification, prophecies have, so far, been used as an \emph{ad hoc} method where prophecies are provided by the user on a case-by-case basis \cite{CoenenFST19}.
With this paper, we conduct a first formal study into the expressive power of prophecies.
In particular, we show that ($\omega$-regular) prophecies are complete, i.e., prophecies always suffice for successful verification.
Our main result informally reads as follows:

\begin{center}
    \parbox{0.9\columnwidth}{\textit{For any finite-state system $\tss$ and $\forall^*\exists^*$ HyperLTL property $\varphi$, there exist finitely many ($\omega$-regular) prophecies such that the $\exists$-player has a winning strategy (with access to the prophecies) if and only if $\tss$ satisfies $\varphi$.}}
\end{center}

\noindent
When given such a complete set of prophecies, verification of a hyperproperty reduces (in a sound-and-complete manner) to solving a finite-state two-player game.
Notably, our proof of the above result is constructive, i.e., we give an explicit (and effective) construction of a complete set of prophecies, represented as $\omega$-automata. 

\subsection{Prototype Implementation}

We have implemented our prophecy construction in a prototype model checker for $\forall^*\exists^*$ HyperLTL formulas, called \texttt{HyPro} (short for \textbf{Hy}perproperty Verification with \textbf{Pro}phecies).
If required, \texttt{HyPro} automatically constructs a complete set of prophecies and thus constitutes the first \emph{complete} verifier for $\forall^*\exists^*$ HyperLTL formulas with a safety matrix (see \Cref{sec:eval}). 
We emphasize that this paper's main contribution is a completeness proof for prophecies in hyperproperty verification.
While \texttt{HyPro} demonstrates that our explicit prophecy construction is applicable in practice, it is, currently, limited to small systems.

\subsection{Structure}

The remainder of this paper is structured as follows.
In \Cref{sec:overview} we demonstrate the need for prophecies on a small example and outline our automatic prophecy construction.
In \Cref{sec:relatedWork} we discuss related approaches, and in \Cref{sec:prelim} define preliminaries and introduce HyperLTL. 
We define strategy-based verification and prophecies in \Cref{sec:gameBased}, and discuss completeness in \Cref{sec:comp}.
Afterward, we first outline our prophecy construction for HyperLTL specifications where the matrix is a safety property (in \Cref{sec:compSafety}), and then extend it to full $\omega$-regularity in \Cref{sec:compGeneral}. 
In \Cref{sec:eval} we discuss prophecy-based verification and evaluate our prototype model checker \texttt{HyPro}.
Lastly, we outline further applications of (and future directions for) prophecy-based verification (in \Cref{sec:conclusionFuture}).

\section{Overview}\label{sec:overview}

In this section, we demonstrate the need for prophecies in hyperproperty verification on two small examples (in \Cref{sec:runningExample,sec:example}). Afterward, we sketch our automated prophecy construction (in \Cref{sec:overviewConstruction}).

\subsection{Strategy-based Verification and Prophecies}\label{sec:runningExample}

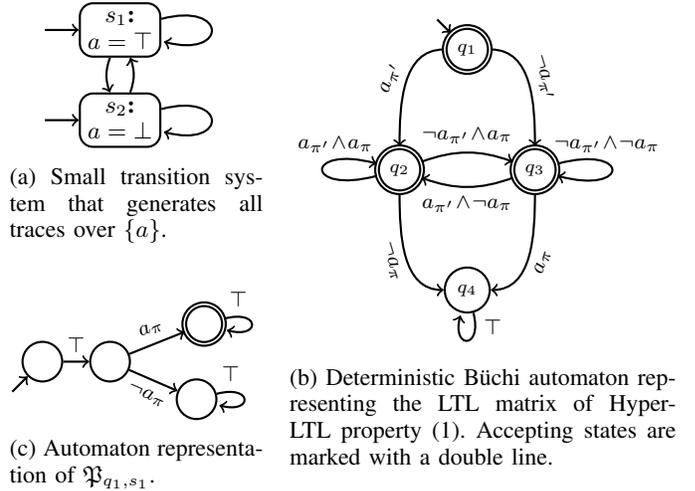
\begin{figure}
	\begin{minipage}{0.38\columnwidth}
		\begin{subfigure}{\linewidth}
			\centering
			\small
			\begin{tikzpicture}
				\node[draw, rectangle, thick,align=center, rounded corners=5pt] at (0,0) (n0) {$s_1$\textbf{:}\\$a = \top$};
				\node[draw, rectangle, thick,align=center, rounded corners=5pt] at (0,-1.2) (n1) {$s_2$\textbf{:}\\$a = \bot$};

				\draw[->, thick] (n0)+(-1,0) -- (n0);
				
				\draw[->, thick] (n1)+(-1,0) -- (n1);
				
				\path (n0) edge [thick, ->, bend right=20] (n1);
				\path (n1) edge [thick, ->, bend right=20] (n0);
				
				\path (n0) edge [loop right, thick, ->] (n0);
				
				\path (n1) edge [loop right, thick, ->] (n1);
			\end{tikzpicture}
			
			\subcaption{Small transition system that generates all traces over $\{a\}$.}\label{fig:ilEx1}
		\end{subfigure}
		
		\vspace{5mm}
		
		\begin{subfigure}{\linewidth}
			\centering
			\begin{tikzpicture}
				\node[circle, draw, thick,minimum size=15pt] at (-0.9,0) (ns) {};
				\node[circle, draw, thick,minimum size=15pt] at (0,0) (n0) {};
				\node[circle, draw, thick,minimum size=15pt] at (1.15,-0.5) (n1) {};
				\node[circle, draw,double, thick,minimum size=15pt] at (1.25,0.5) (n2) {};
				
				\draw[->,thick] (ns) to node[sloped,above] {\footnotesize$\top$} (n0);
				
				\draw[->,thick] (n0) to node[sloped,above] {\footnotesize$a_\pi$} (n2);
				\draw[->,thick] (n0) to node[sloped,below] {\footnotesize$\neg a_\pi$} (n1);
				
				\path (n1) edge [loop right,thick] node[above,yshift=3pt, xshift=-5pt] {\footnotesize$\top$} (n1);
				\path (n2) edge [loop right,thick] node[above,yshift=3pt, xshift=-5pt] {\footnotesize$\top$} (n1);
				
				\draw[->,thick] (ns)+(-0.4, -0.4) -- (ns);
			\end{tikzpicture}
			\vspace{-4mm}
			\addtocounter{subfigure}{1}
			\subcaption{Automaton representation of $\lpro{q_1}{s_1}$.}\label{fig:ilEx3}
		\end{subfigure}
	\end{minipage}\hfill
	\begin{minipage}{0.58\columnwidth}
		\begin{subfigure}{\linewidth}
			\centering
			\small
			\begin{tikzpicture}[scale=1.0]
				\node[draw, circle, thick,double,minimum size=15pt] at (0,-1) (n1) {\scriptsize$q_1$};
				
				\node[draw, circle, thick,double,minimum size=15pt] at (-0.9,-2.6) (n2) {\scriptsize$q_2$};
				
				\node[draw, circle, thick,double,minimum size=15pt] at (0.9,-2.6) (n3) {\scriptsize$q_3$};
				
				\node[draw, circle, thick,minimum size=15pt] at (0,-4.2) (n4) {\scriptsize$q_4$};

				\draw[->, thick] (n1)+(-0.4,0.4) -- (n1);
				
				\path (n1) edge [thick, ->,out=180, in =90] node[above, sloped] {\footnotesize$a_{\pi'}$} (n2);
				\path (n1) edge [thick, ->,out=0, in = 90] node[above, sloped] {\footnotesize$\neg a_{\pi'}$} (n3);

				\path (n2) edge [thick, ->, loop left, min distance=10mm,looseness=10] node[above,yshift=3pt, xshift=5pt] {\footnotesize$ a_{\pi'} \! \land  \! a_\pi$} (n2);
				
				\path (n3) edge [thick, ->, loop right,min distance=10mm,looseness=10] node[above,yshift=3pt, xshift=-2pt] {\footnotesize$\neg a_{\pi'} \! \land \! \neg a_\pi$} (n3);

				\path (n3) edge [thick, ->, bend left = 20] node[sloped, below] {\footnotesize$a_{\pi'} \!\land \!\neg a_\pi$} (n2);
				
				\path (n2) edge [thick, ->, bend left = 20] node[sloped, above] {\footnotesize$\neg a_{\pi'}\! \land \!a_\pi$} (n3);

				\path (n2) edge [thick, ->,out=-90, in=180] node[sloped, below] {\footnotesize$\neg  a_\pi$} (n4);
				\path (n3) edge [thick, ->,out=-90, in=0] node[sloped, below] {\footnotesize$a_\pi$} (n4);

				\path (n4) edge [thick, ->,loop below] node[right, xshift=3pt,yshift=5pt] {\footnotesize$\top$} (n4);
			\end{tikzpicture}
			\addtocounter{subfigure}{-2}
			\subcaption{Deterministic Büchi automaton representing the LTL matrix of HyperLTL property (\ref{eq:exampleProp}). Accepting states are marked with a double line. }\label{fig:ilEx2}
		\end{subfigure}
	\end{minipage}

	\caption{A simple example that demonstrates why prophecies are needed for successful strategy-based verification. \Cref{fig:ilEx3} depicts the (minimized) automaton resulting from our prophecy construction. }
\end{figure}

As a (very) small example, consider the transition system $\tss$ in \Cref{fig:ilEx1}, which generates all traces over atomic propositions $\ap = \{a\}$, and the HyperLTL specification  
\begin{align}\label{eq:exampleProp}
	\varphi \coloneqq \forall \pi. \exists \pi'. \ltlG (a_{\pi'} \leftrightarrow \ltlN a_\pi).
\end{align}
The property states that for every trace $\pi$ there should be a trace $\pi'$ that mimics $\pi$ \emph{one step into the future}.
Clearly $\tss \models \varphi$, i.e., the system satisfies the property. 

To automatically check this using strategy-based verification \cite{CoenenFST19}, we construct a game where, in each step, the $\forall$-player chooses a successor state for trace $\pi$ (in the first step the $\forall$-player chooses any initial state), and the $\exists$-player reacts by choosing a successor state for trace $\pi'$ (in a separate copy of the system). 
The $\exists$-player tries to construct trace $\pi'$ such that $\pi'$ combined with $\pi$ satisfies the LTL matrix of (\ref{eq:exampleProp}).
However, even though $\tss \models \varphi$, the $\exists$-player loses this game.
In every step of the game, the $\exists$-player needs to move to either $s_1$ or $s_2$.
With either choice, the $\forall$-player can (in the next step of the game) move its copy to the opposite state (i.e., move to $s_2$ if the $\exists$-player moved to $s_1$ and vice versa) and thereby ensure that $a_{\pi'} \not\leftrightarrow \ltlN a_\pi$ holds;
strategy-based verification fails.

To win the game, the $\exists$-player would need to base its decision on the \emph{next} move of the $\forall$-player.
Prophecies can provide this necessary information about the future behavior of the $\forall$-player.
Consider the LTL-definable prophecy $\xi \coloneqq \ltlN a_\pi$.\footnote{In our setting, a prophecy is a $\omega$-regular set of behaviors of the universally quantified traces. 
	If possible, we can represent this set as an LTL formula. }
If the $\exists$-player has access to this prophecy (i.e., has access to an oracle that tells him, in each step of the game, if $\xi$ currently holds or not), a winning strategy exists. 
For example, if $\xi$ holds (so $a$ holds in the next step on $\pi$), the $\exists$-player moves to $s_1$ as this ensures $a_{\pi'} \leftrightarrow \ltlN a_\pi$.

\begin{figure}[t!]
	\begin{subfigure}{0.48\columnwidth}
		\centering
		\begin{algorithmic}[1]
			\RepeatOne
			\If{$\star$}\label{line:choice}
			\State $h \coloneqq \star$
			\State $o \coloneqq h$
			\Else
			\State $h \coloneqq \star$
			\State $o \coloneqq \neg h$
			\EndIf
		\end{algorithmic}
		\subcaption{Simple example program. Here, $\star$ denotes a non-deterministic choice of a (boolean) value. }\label{fig:introExample2}
	\end{subfigure}
	\hfil
	\begin{subfigure}{0.48\columnwidth}
		\centering
		\small
		\begin{tikzpicture}
			\node[draw, rectangle, thick,align=center, rounded corners=5pt] at (0,0) (n0) {$s_1$\textbf{:}\\$o = \bot$\\$h = \bot$};
			
			\node[draw, rectangle, thick,align=center, rounded corners=5pt] at (1.5,0) (n1) {$s_2$\textbf{:}\\$o = \bot$\\$h = \bot$};
			
			\node[draw, rectangle, thick,align=center, rounded corners=5pt] at (1.5,1.5) (n2) {$s_3$\textbf{:}\\$o = \bot$\\$h = \top$};
			
			\node[draw, rectangle, thick,align=center, rounded corners=5pt] at (1.5, -1.5) (n3) {$s_4$\textbf{:}\\$o = \top$\\$h = \bot$};
			
			\node[draw, rectangle, thick,align=center, rounded corners=5pt] at (-1.5,0) (n4) {$s_5$\textbf{:}\\$o = \bot$\\$h = \bot$};
			
			\node[draw, rectangle, thick,align=center, rounded corners=5pt] at (-1.5,1.5) (n5) {$s_6$\textbf{:}\\$o = \bot$\\$h = \bot$};
			
			\node[draw, rectangle, thick,align=center, rounded corners=5pt] at (-1.5, -1.5) (n6) {$s_7$\textbf{:}\\$o = \top$\\$h = \top$};
			
			\draw[->, thick] (n0)+(0,1) -- (n0);
			
			\draw[->, thick] (n0) -- (n1);
			\draw[->, thick] (n1) -- (n2);
			\draw[->, thick] (n1) -- (n3);
			\draw[->, thick] (n2) -- (n0);
			\draw[->, thick] (n3) -- (n0);
			
			\draw[->, thick] (n0) -- (n4);
			\draw[->, thick] (n4) -- (n5);
			\draw[->, thick] (n4) -- (n6);
			\draw[->, thick] (n5) -- (n0);
			\draw[->, thick] (n6) -- (n0);
		\end{tikzpicture}
		
		\subcaption{A simplified transition system obtained from the program in  \Cref{fig:introExample2}.}\label{fig:introExample3}
	\end{subfigure}

	\caption{Simple example program that requires prophecies to successfully verify GNI using strategy-based verification.  }
\end{figure}
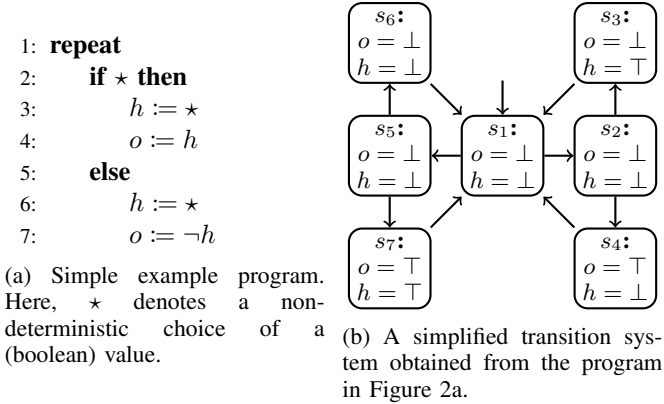

\subsection{Prophecies and GNI}\label{sec:example}

Prophecies are also needed when applying strategy-based verification to more realistic systems and properties.
As a second example, consider the program in \Cref{fig:introExample2} where $h$ is a high-security input and $o$ a low-security output, and the GNI property from \Cref{sec:intro}.
\Cref{fig:introExample3} depicts a simplified version of the program as a transition system.
In state $s_1$ the system can non-deterministically transition into $s_2$ or $s_5$. 
From $s_2$ the values of $h$ and $o$ disagree (as in the second branch of the conditional in \Cref{fig:introExample2}) whereas from $s_5$ the values agree (as in the first branch).
It is easy to see that this program (and/or transition system) satisfies GNI, but strategy-based verification fails.
In order to resolve the non-deterministic choice in line \ref*{line:choice} of the program (or in state $s_1$ of the transition system), the $\exists$-player needs to know the \emph{next} input and output on traces $\pi$ and $\pi'$.
Prophecies can provide the needed information about the future behavior on $\pi, \pi'$.
This information is, for example, made available via the LTL-definable prophecies $\xi_1 \coloneqq \ltlN o_{\pi}$ and $\xi_2 \coloneqq  \ltlN h_{\pi'}$.
With access to these prophecies, a winning strategy for the $\exists$-player exists. 
For example, if $\xi_1$ holds (so the next value of $o$ on $\pi$ is $\top$) and $\xi_2$ does not hold (so the next value of $h$ on $\pi'$ is $\bot$), the $\exists$-player moves to $s_2$ as this supports a later transition to the state $s_4$ (where $o = \top, h = \bot$ as required).
Note that, different from the example in \Cref{sec:runningExample}, this dependency on the next state is \emph{not} explicit in the property (as GNI does not involve any $\ltlN$s).

\subsection{Automated Prophecy Construction}\label{sec:overviewConstruction}

We now sketch how to \emph{automatically} construct a complete set of prophecies, i.e., a set of prophecies that ensures that the $\exists$-player can win the game (provided the property holds).

As a concrete example, we use the system and property from \Cref{sec:runningExample}.
For this example, the LTL matrix of (\ref{eq:exampleProp}) is a safety property, which simplifies the prophecy construction significantly. 
Conceptually, the idea is to design prophecies that directly identify those states that the $\exists$-player can move to without losing the game.
We observe that the $\exists$-player can (safely) move to state $s$ (for $s \in \{s_1, s_2\}$) iff the trace $\pi$ constructed by the $\forall$-player is such that there exists \emph{some} trace $\pi'$ starting in $s$ that serves as a witness for $\pi$.
To formalize this, \Cref{fig:ilEx2} depicts a deterministic Büchi automaton $\aut$ that tracks the matrix of (\ref{eq:exampleProp}).
Given an automaton state $q$ (for $q\in \{q_1,q_2, q_3, q_4\}$) and a system state $s$, we summarize all traces constructed by the $\forall$-player on which a witness trace starting in state $s$ exists (where the automaton begins tracking in state $q$).
Formally we define
\begin{align*}
	\lpro{q}{s}\coloneqq \{ t \in (2^{\{a\}})^\omega \mid \exists t' \in \mathit{Traces}(\tss_{s})\ldot  t \otimes t' \in \calL(\aut_q)  \}
\end{align*}
where $\mathit{Traces}(\tss_s)$ are all traces starting in state $s$, $\calL(\aut_q)$ are all traces accepted by $\aut$ starting in state $q$, and $t \otimes t'$ is the pointwise product of $t$ and $t'$. 
See \Cref{sec:compSafety} for a formal treatment.

The resulting prophecies determine which move is safe for the $\exists$-player:
If during the game the current state of the automaton tracking the matrix of (\ref{eq:exampleProp}) is $q$, and prophecy $\lpro{q}{s}$ holds (i.e., the trace constructed by the $\forall$-player is contained in this set), the $\exists$-player can safely pick $s$ as its successor.

As an example, we consider the set $\lpro{q_1}{s_1}$.
By taking the product of $\tss$ and $\aut$, we obtain an automaton representation of $\lpro{q_1}{s_1}$, which, after minimization, results in the Büchi automaton depicted in \Cref{fig:ilEx3}.
Coincidentally, this automaton directly corresponds to the LTL prophecy $\xi \coloneqq \ltlN a_\pi$ identified in \Cref{sec:runningExample}.
As we already argued in \Cref{sec:runningExample}, the single prophecy $\lpro{q_1}{s_1}$ thus provides sufficient information for the $\exists$-player.\footnote{In general, our completeness result (for cases where the matrix of the HyperLTL formula is a safety property) states that the set $\{\lpro{q}{s}\}_{q \in Q, s \in S}$ where $Q$ is the set of automaton states and $S$ the set of system states \emph{always} provides sufficient information for the $\exists$-player to win (provided the property holds). For properties where the matrix does not denote a safety property, a more involved construction is necessary (see \Cref{sec:compGeneral}). }

\section{Related Work}\label{sec:relatedWork}

\subsubsection{Hyperproperty Verification}
Recently, the automated verification of hyperproperties expressed in general logics has received significant attention.
Verification of alternation-free formulas (and, in particular, $k$-safety) is reducible to the verification of a trace property on the self-composition of the system \cite{FinkbeinerRS15,BartheDR11}.
In contrast, few attempts at the automatic verification of properties involving a quantifier alteration have been made.
This is in stark contrast to the fact that many relevant properties (especially in non-deterministic systems) require alternation.
Examples include information-flow policies like GNI, refinement properties, fairness, and robust cleanness.
Barthe et al.~\cite{BartheCK13} describe an asymmetric product of the system such that only a subset of the behavior of the second system is preserved, thereby allowing the verification of $\forall\exists$ properties. 
It is challenging to construct an asymmetric product and verify its correctness (i.e., show that the product preserves all behavior of the first, universally quantified, system). 
Unno et al.~\cite{UnnoTK21} describe a constraint-based approach to verify functional (opposed to temporal) $\forall\exists$ properties.
In their framework, both the existentially quantified traces and the scheduling of the system are encoded in an extension of constraint Horn clauses. 
Lamport and Schneider \cite{LamportS21} outline a deductive approach to verify hyperproperties by reducing the verification to TLA.
This is possible as existential trace quantification can be internalized into the TLA specification.
Hsu et al.~\cite{HsuSB21} present a bounded model checking algorithm for hyperproperties.
As usual for bounded approaches, a property can only be refuted if there exists a finite set of finite paths refuting it; bounded model checking for hyperproperties is incomplete. 
A first practical (albeit incomplete) algorithm for the verification of temporal properties involving quantifier alternation (expressed in HyperLTL) was proposed by Coenen et al.~\cite{CoenenFST19} in the form of strategy-based verification, which forms the basic setting of this work. 
Strategy-based verification is also applicable to infinite-state systems \cite{BeutnerF22CAV}.

\subsubsection{Prophecy Variables}

Abadi and Lamport have introduced the concept of prophecies as a proof technique in the context of refinement mappings between state machines, and have shown completeness in this setting \cite{AbadiL91}.
Coenen et al.~\cite{CoenenFST19} use prophecies to strengthen the $\exists$-player in strategy-based verification.
It is important to note that the use of temporal prophecies advocated in \cite{CoenenFST19} (and studied in this paper) differs from the setting of Abadi and Lamport \cite{AbadiL91} in several key regards.  
In \cite{AbadiL91}, a prophecy variable changes the system by adding a variable that records the future behavior of the system as a sequence of states.\footnote{In particular, the completeness proof in \cite{AbadiL91} is purely semantic. The history and prophecy variables describe the past and future behavior of the system, which, in the worst case, turns a finite-state system into an infinite-state one.}
We take a different point of view: In our setting, we do not manipulate the system but define a prophecy as a $\omega$-regular set of behavior (expressed in temporal logic). 
The $\exists$-player is only provided with a single bit of information that indicates if the future behavior of $\forall$-player lies within the prophecy or not.

While Coenen et al.~\cite{CoenenFST19} already discuss prophecies, they consider them as an \emph{ad hoc} feature where the user must provide prophecies on a case-by-case basis.
We study prophecies in the same setting (albeit our prophecies are $\omega$-regular and not necessarily LTL-definable as in \cite{CoenenFST19}) but conduct a systematic analysis of the expressiveness of strategy-based verification when enriched with prophecies.
In particular, we establish that prophecies \emph{always} suffice to verify a property and give an explicit (and fully automatic) algorithm for the construction of a complete set of prophecies. 
Compared to the purely semantic construction of Abadi and Lamport \cite{AbadiL91}, we work in the fixed framework of $\omega$-regularity and represent prophecies as $\omega$-automata.

Prophecies as a proof technique have found application in various settings. 
They have been used for the verification of branching-time properties \cite{CookKP15}, the construction of simulations between automata \cite{LynchV95}, to strengthen proofs in program logics \cite{JungLPRTDJ20,ZhangFFSL12,Vafeiadis08}, and to construct liveness-to-safety transformations \cite{PadonHMPSS21}.
Cook and Koskinen \cite{CookK11} introduce prophecies in the form of decision predicates to verify LTL properties using CTL solvers on infinite-state systems.
A decision predicate can be seen as a limited form of (non-boolean) temporal prophecy that predicts the \emph{number} of occurrences of a particular event in the future. 
Closely related to our setting is the work by Unno et al.~\cite{UnnoTK21}.
They show that for the verification of functional $\forall\exists$ properties, it is sufficient to have a prophecy variable that simply predicts the final state of the universally quantified execution. 
In our temporal setting, the prophecy construction is necessarily more complex as it needs to provide information about the temporal behavior of the universally quantified execution, and the information communicated per prophecy is restricted to a single bit.

\section{Preliminaries}\label{sec:prelim}

We fix a set of atomic propositions $\ap$ and define $\Sigma \coloneqq 2^\ap$.
A trace is an element $t \in \Sigma^\omega$. 
We write $t(i)$ to denote the $i$th element (starting with $0$) and $t[i,\infty]$ for the infinite suffix starting at position $i$.
For traces $t_1, \ldots, t_n \in \Sigma^\omega$ we define $\zip{t_1, \ldots, t_n} \in (\Sigma^n)^\omega$ as the pointwise product of the traces, i.e.,  $\zip{t_1, \ldots, t_n}(i) \coloneqq (t_1(i), \ldots, t_n(i))$. 
We occasionally write $t_1 \otimes t_2$ instead of $\zip{t_1, t_2}$.

\subsubsection{Transition Systems} 

A transition system is a tuple $\tss= (S, S_0, \varrho, L)$ where $S$ is a finite set of states, $S_0 \subseteq S$ a set of initial states, $\varrho \subseteq S \times S$ a transition relation, and $L : S \to \Sigma$ a state labelling. 
We write $s \xrightarrow{\tss} s'$ whenever $(s, s') \in \varrho$ and define $\sucs{s} \coloneqq \{s' \mid  (s, s') \in \varrho \}$.
We assume $\sucs{s} \neq \emptyset$ for every $s \in S$.
A path in $\tss$ is an infinite sequence $p \in S^\omega$ such that $p(0) \in S_0$ and for every $i \in \mathbb{N}$, we have $p(i+1) \in \sucs{p(i)}$.
Each path $p$ denotes a trace $L(p) \in \Sigma^\omega$ by applying the labelling pointwise, i.e., $L(p)(i) \coloneqq L(p(i))$.
We write $\paths{\tss}$ for the set of all paths and $\traces{\tss}$ for the set of all traces. 
For $s \in S$ we define $\tss_s$ as the transition systems obtained by changing the initial states to $\{s\}$. 

\subsubsection{$\omega$-Automata}

A deterministic $\omega$-automaton over alphabet $\Sigma$ is a tuple $\mathcal{A} = (Q, q_0, \delta, \mathit{Acc})$ where $Q$ is a finite set of states, $q_0 \in Q$ an initial state, $\delta : Q \times \Sigma \to Q$ a transition function, and $\mathit{Acc} \subseteq Q^\omega$ the acceptance condition. 
For every finite word $u \in \Sigma^*$, we define $\delta^*(u) \in Q$ as the unique state reached when reading $u$ (starting in $q_0$).
For a trace $t \in \Sigma^\omega$, the unique run $r_t \in Q^\omega$ is given by $r_t(i) \coloneqq \delta^*(t[0, i-1])$ where $t[0, i-1]$ is the prefix of length $i$.
We write $\calL(\mathcal{A})$ for the language of the automaton, which consists of all traces $t$ whose unique run $r_t$ satisfies $r_t \in \mathit{Acc}$.
In a Büchi automaton, the acceptance is given by a set $F \subseteq Q$ of accepting states, and a run is accepting if it visits states in $F$ infinity many times. 
In a parity automaton, the acceptance is given by a coloring $c : Q \to \mathbb{N}$, and a run is accepting if the minimal color occurring infinitely often (as given by $c$) is even. 
In a safety automaton, the acceptance is given by a set $B \subseteq Q$ of bad states, and a run is accepting if it never visits a state in $B$.
A language $\calL \subseteq \Sigma^\omega$ is $\omega$-regular if there exists a deterministic parity automaton (DPA) that recognizes it.\footnote{Throughout this paper, we work with deterministic $\omega$-automata. Any non-deterministic Büchi automaton (NBA) (see, e.g., \cite{0020348} for a formal definition) can be effectively translated into a DPA \cite{Safra88,Piterman07}. 
	On the other hand, deterministic Büchi and deterministic safety automata are strictly less expressive and do not capture full $\omega$-regularity.
}
A language $\calL \subseteq \Sigma^\omega$ is \emph{safety} \cite{AlpernS85,KupfermanV99}, if it can be recognized by a deterministic safety automata.
Given $q \in Q$ we define $\aut_q$ as the automaton obtained by replacing the initial state with $q$.
For a set $X \subseteq Q$ and a trace $t \in \Sigma^\omega$, we define $\mathit{firstVisit}_X(\aut, t) \in \mathbb{N} \cup \{\infty\}$ as the first time step where the unique run of $\aut$ on $t$ visits a state in $X$ (if it exists and $\infty$ otherwise).

\subsubsection{Parity Games}

A parity game is a tuple $\mathcal{G} = (V_\veri, V_\refu,\allowbreak T, c)$ where $V \coloneqq V_\veri \mathbin{\mathaccent\cdot\cup} V_\refu$ is the finite set of states. 
The states in $V_\veri$ are controlled by the verifier $\veri$ and those in $V_\refu$ are controlled by the refuter $\refu$. $T \subseteq V \times V$ is the transition relation (we assume that for each $v$ there is at least one $v'$ with $(v, v') \in T$), and $c : V \to \mathbb{N}$ the coloring of each node.
 A strategy $\sigma$ for player $p \in \{\veri, \refu\}$ is a function $\sigma : V^* \times V_p \to V$  such that for every $\vertex \in V^*, v \in V_p$, $(v, \sigma(\vertex, v)) \in T$.
A play in $\mathcal{G}$ is an infinite sequence $r \in V^\omega$ such that for every $i$, $(r(i), r(i+1)) \in T$.
The play $r$ is compatible with strategy $\sigma$ for player $p$ if for every $i$ where $r(i) \in V_p$ we have that $r(i+1) = \sigma(r(0)\cdots r(i-1), r(i))$.
A play $r$ is won by player $\veri$ if the minimal color occurring infinitely often in $r$ (according to $c$) is even. Otherwise, it is won by $\refu$. 
We say that player $p$ wins node $v$ if there exists a strategy $\sigma$ for $p$ such that every play that starts in $v$ and is compatible with $\sigma$ is won by $p$. 
As parity games are positionally determined \cite{martin1975borel}, every node is either won by $\veri$ or by $\refu$.

\subsubsection{HyperLTL}
As the basic specification language for hyperproperties we use HyperLTL \cite{ClarksonFKMRS14}, which extends linear-time temporal logic (LTL) with explicit trace quantification. 
We assume a fixed set of trace variables $\tracevars$.
Formulas in HyperLTL are generated by the following grammar.
\begin{align*}
    \varphi &\coloneqq \exists \pi. \varphi \mid \forall \pi. \varphi \mid \phi \\
    \phi &\coloneqq a_\pi \mid \neg \phi \mid \phi_1 \land \phi_2 \mid \ltlN \phi \mid \phi_1 \ltlU \phi_2
\end{align*}%
where $\pi \in \tracevars$ and $a \in \ap$.
We use the derived boolean connectives $\lor, \to, \leftrightarrow$, boolean constants $\top, \bot$, and temporal operators eventually ($\ltlF \phi \coloneqq \top \ltlU \phi$) and globally ($\ltlG \phi \coloneqq \neg \ltlF \neg \phi$).
We consider only closed formulas, i.e., formulas where for each atom $a_\pi$ the trace variable $\pi$ is bound by some trace quantifier.
The semantics of HyperLTL is given with respect to a set of traces $\traceSet \subseteq \Sigma^\omega$ and a trace assignment $\Pi$, which is a partial mapping $\Pi : \tracevars \rightharpoonup \Sigma^\omega$.
For $\pi \in \tracevars$ and trace $t$, we write $\Pi[\pi \mapsto t]$ for the trace assignment obtained by updating the value of $\pi$ to $t$.
\begin{align*}
    \Pi, i &\models  a_\pi &\text{ iff } \quad  &a \in \Pi(\pi)(i)\\
     \Pi, i &\models  \neg \phi &\text{ iff }\quad & \Pi, i \not\models  \phi \\
    \Pi, i &\models  \phi_1 \land \phi_2 &\text{ iff }\quad  &\Pi, i \models \phi_1 \text{ and }  \Pi, i \models  \phi_2\\
    \Pi, i&\models  \LTLnext  \phi &\text{ iff }\quad & \Pi, i + 1 \models \phi \\
    \Pi, i&\models  \phi_1 \ltlU \phi_2 &\text{ iff } \quad& \exists j \geq i\ldot \Pi, j\models  \phi_2 \text{ and } \\
    &&&\quad\forall i \leq k < j\ldot  \Pi, k \models  \phi_1\\
    \Pi &\models_\traceSet  \phi &\text{ iff }\quad &\Pi, 0 \models \phi\\
    \Pi &\models_\traceSet \exists \pi. \varphi &\text{ iff }\quad &\exists t \in \traceSet\ldot \Pi[\pi \mapsto t] \models_{\mathbb{T}}  \varphi\\
    \Pi &\models_\traceSet  \forall \pi. \varphi &\text{ iff }\quad &\forall t \in \traceSet\ldot \Pi[\pi \mapsto t] \models_{\mathbb{T}}  \varphi
\end{align*}%
We say a transition system $\tss$ satisfies $\varphi$, written $\tss \models \varphi$, if $\emptyset \models_{\mathit{Traces}(\tss)} \varphi$ where $\emptyset$ denotes the empty trace assignment.

\begin{figure*}[!t]
	\begin{prooftree}
		\AxiomC{$s_1 \tssto s_1'$}
		\AxiomC{$\cdots$}
		\AxiomC{$s_k \tssto s_k'$}
		\AxiomC{$q' = \delta^\phi\Big(q, \big(L(s_1), \cdots, L(s_{k+l})\big) \Big)$}
		\RightLabel{\footnotesize($\forall$)}
		\QuaternaryInfC{$\gamenode{(s_1, \ldots, s_k, s_{k+1}, \ldots, s_{k+l}), q, \forall} \to  \gamenode{(s'_1, \ldots, s'_k, s_{k+1}, \ldots, s_{k+l}), q', \exists}$}
	\end{prooftree}
	
	\begin{prooftree}
		\AxiomC{$s_{k+1} \tssto s_{k+1}'$}
		\AxiomC{$\cdots$}
		\AxiomC{$s_{k+l} \tssto s_{k+l}'$}
		\RightLabel{\footnotesize($\exists$)}
		\TrinaryInfC{$\gamenode{(s_1, \ldots, s_k, s_{k+1}, \ldots, s_{k+l}), q, \exists} \to  \gamenode{(s_1, \ldots, s_k, s'_{k+1}, \ldots, s'_{k+l}), q, \forall}$}
	\end{prooftree}
	
	\begin{prooftree}
		\AxiomC{$s_{k+1} \in S_0$}
		\AxiomC{$\cdots$}
		\AxiomC{$s_{k+l} \in S_0$}
		\RightLabel{\footnotesize(init)}
		\TrinaryInfC{$(s_1, \ldots, s_k) \to \gamenode{(s_1, \ldots, s_k, s_{k+1}, \ldots, s_{k+l}), q^\phi_0, \forall}$}
	\end{prooftree}
	
	\caption{Transition rules for the parity-game-based synthesis of winning strategies for the $\exists$-player. }\label{fig:transitions}
\end{figure*}

\subsubsection{Quantified Propositional Temporal Logic (QPTL)}

The prophecies we study in this paper are $\omega$-regular sets. 
LTL is limited to non-counting properties and can consequently not express arbitrary $\omega$-regular properties \cite{DiekertG08}. 
To nevertheless support prophecies on a syntactic level (where we represent prophecies as formulas instead of $\omega$-automata), we use Quantified Propositional Temporal Logic (QPTL) \cite{sistla1983theoretical}.
We assume a fresh set of propositional variables $\mathit{PV}$.
We define QPTL formulas by the following grammar.
\begin{align*}
    \phi \coloneqq  a_\pi \mid \neg \phi \mid \phi_1 \land \phi_2 \mid \ltlN \phi \mid \phi_1 \ltlU \phi_2 \mid \existsu q. \phi \mid q
\end{align*}
where $\pi \in \tracevars$, $a \in \ap$ and $q \in \mathit{PV}$. 
QPTL allows the quantification of a proposition variable $q$ using $\existsu q. \phi$ and to refer to the truth value of each propositional variable. 
We abbreviate $\forallu q. \phi \coloneqq \neg \existsu q. \neg \phi$. 
Note that we write $\existsu$ and $\forallu$ for propositional quantification to visually distinguish them from the trace quantifiers in HyperLTL. 
The semantics of QPTL is defined similarly to before with an additional mapping $\Delta : \mathit{PV} \rightharpoonup \mathbb{B}^\omega$ that handles propositional quantification (where $\mathbb{B} = \{\top, \bot\}$).
\begin{align*}
    \Pi, \Delta, i &\models  a_\pi &\text{ iff } \quad &a \in \Pi(\pi)(i)\\
    \Pi,\Delta, i &\models  \neg \phi &\text{ iff } \quad& \Pi,\Delta, i \not\models  \phi \\
    \Pi,\Delta, i &\models  \phi_1 \land \phi_2 &\text{ iff } \quad &\Pi,\Delta, i \models \phi_1 \text{ and }  \Pi,\Delta, i \models  \phi_2\\
    \Pi,\Delta, i&\models  \LTLnext  \phi &\text{ iff } \quad& \Pi,\Delta, i + 1 \models \phi \\
    \Pi,\Delta, i&\models  \phi_1 \ltlU \phi_2 &\text{ iff } \quad& \exists j \geq i\ldot \Pi,\Delta, j\models  \phi_2 \text{ and } \\
    & & &\quad\forall i \leq k < j\ldot  \Pi,\Delta, k \models  \phi_1\\
    \Pi,\Delta, i &\models  \existsu q. \phi &\text{ iff } \quad&\exists \tau \in \mathbb{B}^\omega\ldot\Pi, \Delta[q \mapsto \tau], i \models  \varphi\\
     \Pi, \Delta, i &\models  q &\text{ iff } \quad &\Delta(q)(i) = \top
\end{align*}%
The main advantage of QPTL (over LTL) stems from the following result:
\begin{theorem}[\cite{sistla1983theoretical}]\label{theo:QPTL}
    A language $\calL$ is $\omega$-regular if and only if it is definable in QPTL. 
\end{theorem}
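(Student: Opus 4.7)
The plan is to prove both directions of the equivalence with standard automaton-theoretic constructions, following the original Sistla--Vardi--Wolper argument.

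For the ``QPTL-definable $\Rightarrow$ $\omega$-regular'' direction, I proceed by structural induction on the formula (fixing a single trace variable $\pi$, so that QPTL formulas describe subsets of $\Sigma^\omega$; the general case over several trace variables is analogous by enlarging the alphabet). The base cases ($a_\pi$ and propositional variables $q$) are trivially $\omega$-regular. Boolean combinations follow from closure of $\omega$-regular languages under union, intersection, and complementation (the latter via Safra-style determinization of NBA to DPA, as invoked in the preliminaries). The temporal operators $\ltlN$ and $\ltlU$ are handled by the standard LTL-to-NBA translation. The critical case is $\existsu q.\phi$: if the inductive hypothesis gives an NBA recognizing the language of $\phi$ over an alphabet carrying a bit for $q$, then projecting this bit out (collapsing transitions that agree modulo the $q$-component, thereby introducing non-determinism) yields an NBA for $\existsu q.\phi$.

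For the ``$\omega$-regular $\Rightarrow$ QPTL-definable'' direction, I start with an NBA $\aut = (Q, q_0, \delta, F)$ with $Q = \{p_1, \ldots, p_n\}$ recognizing the target language. I introduce fresh propositional variables $x_1, \ldots, x_n$ with the intent that $x_i$ is true at step $t$ iff the guessed run is in state $p_i$ at step $t$, and define
\begin{align*}
    \varphi_\aut \coloneqq \existsu x_1. \cdots \existsu x_n. \Big( &\varphi_\mathit{init} \land \ltlG \varphi_\mathit{unique} \\
    &{}\land \ltlG \varphi_\mathit{trans} \land \ltlG \ltlF \!\!\bigvee_{p_j \in F}\!\! x_j \Big),
\end{align*}
where $\varphi_\mathit{init} \coloneqq x_{i_0}$ (with $p_{i_0} = q_0$), $\varphi_\mathit{unique}$ asserts that exactly one $x_i$ holds at each step, and $\varphi_\mathit{trans} \coloneqq \bigvee_{i} \bigvee_{\sigma \in \Sigma} \bigvee_{p_j \in \delta(p_i, \sigma)} \big(x_i \land \chi_\sigma \land \ltlN x_j\big)$ enforces the transition relation, using $\chi_\sigma \coloneqq \bigwedge_{a \in \sigma} a_\pi \land \bigwedge_{a \notin \sigma} \neg a_\pi$ to characterize the current input letter. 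By construction, a propositional assignment for $x_1, \ldots, x_n$ that satisfies the body is exactly an encoding of an accepting run of $\aut$ on the input trace, so $\varphi_\aut$ defines $\calL(\aut)$.

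The main obstacle is bookkeeping rather than mathematical depth: one must carefully track which propositional variables are in scope during the inductive construction, how the alphabet of the intermediate automaton changes when a propositional bit is projected out, and how the $a_\pi$ literals interact with the $x_i$ variables in the transition encoding. Once these alphabets are properly aligned, both directions reduce to invoking standard closure properties of $\omega$-regular languages and the well-known LTL-to-NBA construction, together with the observation that propositional quantification semantically corresponds to alphabet projection.
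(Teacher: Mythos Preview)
The paper does not actually prove \Cref{theo:QPTL}; it is stated with a citation to Sistla's thesis and used as a black box. Your proposal is the standard Sistla--Vardi--Wolper argument and is correct as written, so there is nothing to compare against.

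It is worth noting, however, that your ``$\omega$-regular $\Rightarrow$ QPTL'' encoding is exactly the technique the paper later deploys in the proof of \Cref{prop:omegaReagular}: there the formula $\mathit{valid}$ uses propositional variables $\{p_s \mid s \in S\}$ and $\{p_q \mid q \in Q^\phi\}$ to encode a path through $\tss$ together with a run of $\aut^\phi$, with a uniqueness conjunct $\mathit{un}(\cdot)$ playing the role of your $\varphi_\mathit{unique}$ and the Büchi acceptance expressed as $\ltlG\ltlF \bigvee_{q \in F^\phi} p_q$. So while the paper takes \Cref{theo:QPTL} on faith, it implicitly relies on precisely the construction you spell out.
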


\begin{example}\label{ex:qptlProperty}
	Take the property ``$a$ holds on trace $\pi$ in at least one even position''.
	While not expressible in LTL \cite{DiekertG08}, we can express it in QPTL as 
	\begin{align*}
		\existsu q\ldot q \land \ltlG (q \leftrightarrow \ltlN \neg q) \land \ltlF (a_\pi \land q).
	\end{align*}
\end{example}

In the remainder of this paper, we assume no particular familiarity with QPTL and only use it when absolutely necessary.
We resort to QPTL as a tool to express $\omega$-regular properties as formulas which allows us to treat prophecies at a syntactic level. 
Our prophecy construction itself is language-theoretic.

\section{Strategy-based Verification}\label{sec:gameBased}

The problem we are tackling in this paper is the following:
Given a transition system $\tss$ and a $\forall^*\exists^*$-HyperLTL property $\varphi$, check if $\tss \models \varphi$. 
A first practical verification approach was proposed by Coenen et al.~\cite{CoenenFST19}, which we refer to as \emph{strategy-based verification}.
The idea is to instantiate existential quantification with a strategy that incrementally constructs a trace by reacting to the moves of the $\forall$-player.
Coenen et al.~formalize the strategy as a finite state transducer that determines the next move of all existentially quantified copies. 
The automated synthesis of a strategy is then expressed as a SMT constraints.
We phrase the problem as a parity game which serves as an easier formal foundation to discuss our completeness results.

\subsection{Strategy-based Verification as a Parity Game}
The idea is that the parity game mimic the iterative trace construction of both players.
Assume we are given a system $\tss = (S, S_{0}, \varrho, L)$ and a HyperLTL formula 
\begin{align*}
    \varphi = \forall \pi_1 \ldots \pi_k. \exists \pi_{k+1} \ldots \pi_{k+l}\ldot \phi.
\end{align*}
We define a parity game $\game{\tss}{\varphi}$ as follows.
Let $\aut^\phi = (Q^\phi, q^\phi_0, \delta^\phi, c^\phi)$ be a deterministic parity automaton (DPA) over $\Sigma^{k+l}$ for $\phi$ that accepts exactly the zippings of traces that satisfy the formula, i.e.,  $[\pi_1 \mapsto t_1, \ldots, \pi_{k+l} \mapsto t_{k+l}] \models \phi$ if and only if $\zip{t_1, \ldots, t_{k+l}} \in \calL(\aut^\phi)$.
The construction of this automaton can be performed via a standard LTL to DPA translation (see, e.g., \cite{VardiW94,FinkbeinerRS15}).

The game $\game{\tss}{\varphi}$ comprises two node kinds:
Nodes are either of the form $(s_1, \ldots, s_k)$ where $s_i \in S$ for all $1 \leq i \leq k$ to encode the initial states of the universally quantified copies.
Or they are of the form $\gamenode{(s_1, \ldots, s_{k+l}), q, \flat}$ where $s_i \in S$ for all $1 \leq i \leq k+l$, $q \in Q^\phi$ and $\flat \in \{\forall, \exists\}$. 
Here $(s_1, \ldots, s_{k+l})$ gives the current state of all copies of $\tss$, $q$ is the current state of the DPA tracking $\phi$, and $\flat$ defines whether the universal ($\flat = \forall$) or existential ($\flat = \exists$) copies move next. 
Nodes of the form $\gamenode{(s_1, \ldots, s_{k+l}), q, \forall}$ are controlled by the refuter (who takes the role of the $\forall$-player), and nodes of the form $\gamenode{(s_1, \ldots, s_{k+l}), q, \exists}$ and $(s_1, \ldots, s_k)$ are controlled by the verifier (who takes the role of the $\exists$-player).
The transitions of the game are given in \Cref{fig:transitions}.
The ($\forall$) and ($\exists$)-transition rules are the game's main rules.
In the ($\forall$)-rule all universally quantified copies are updated by moving to successor states within $\tss$.
Simultaneously, we update the automaton state of $\aut^\phi$. 
Similarly, in the ($\exists$)-rule, the existentially quantified copies are updated. 
The (init)-rule is used at the beginning where the universal copies have already chosen a state and the existential copies can select any initial state for themself.
Lastly, the coloring of the nodes is obtained by assigning each node of the form $\gamenode{(s_1, \ldots, s_{k+l}), q, \flat}$ the color given by $c^\phi(q)$. 
The color of nodes of the form $(s_1, \ldots, s_k)$ is irrelevant as they are visited at most once. 

\subsection{Soundness of Strategy-based Verification}

The game $\game{\tss}{\varphi}$ mimics the strategic behavior of the $\exists$-player.
In each step, the refuter chooses successors for the $k$ universally quantified traces, followed by the verifier who selects successors for the $l$ existentially quantified traces.
The automaton state in the nodes of $\game{\tss}{\varphi}$ tracks the (unique) run of $\aut^\phi$ on the resulting $k+l$ traces. 
To verify that $\tss \models \varphi$, the verifier should win from every possible combination of initial states for the universally quantified copies. 
We define
\begin{align*}
    V_{\mathit{init}} \coloneqq \{(s_1, \ldots, s_k) \mid \forall 1 \leq i \leq k\ldot s_i \in S_0\}.
\end{align*}
We write $\winss{\veri}{\game{\tss}{\varphi}}$ if the verifier wins $\game{\tss}{\varphi}$ from all nodes in $V_{\mathit{init}}$.
We can show the soundness of our verification method.

\begin{restatable}{theorem}{soundnessTheo}\label{theo:soundness}
    If $\winss{\veri}{\game{\tss}{\varphi}}$ then $\tss \models \varphi$.
\end{restatable}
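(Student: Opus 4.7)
The plan is to show that a winning strategy $\sigma$ for the verifier in $\game{\tss}{\varphi}$ can be converted into a Skolem function that produces witnesses for the existential quantifiers of $\varphi$. Concretely, given any $k$ traces $t_1, \ldots, t_k$ drawn from $\mathit{Traces}(\tss)$, I will use $\sigma$ to produce traces $t_{k+1}, \ldots, t_{k+l}$ such that the full assignment satisfies the matrix $\phi$.

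First I would fix arbitrary paths $p_1, \ldots, p_k \in \paths{\tss}$ with induced traces $t_i = L(p_i)$. Since each $p_i(0) \in S_0$, the node $(p_1(0), \ldots, p_k(0))$ lies in $V_{\mathit{init}}$, so by assumption $\sigma$ wins from it. I then define a refuter strategy $\tau$ that ``replays'' the given paths: whenever the play reaches a node $\gamenode{(s_1, \ldots, s_{k+l}), q, \forall}$ after $n$ applications of the ($\forall$)-rule (so that $s_i = p_i(n)$ for $i \leq k$), $\tau$ sends each universal copy to $p_i(n+1)$. Note that the initial transition via (init) is a verifier move, so $\sigma$ picks the initial states for the existential copies. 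The play $r$ compatible with $\sigma$ against $\tau$ is thus uniquely determined, and from its existential components I read off paths $p_{k+1}, \ldots, p_{k+l} \in \paths{\tss}$ and set $t_{k+j} \coloneqq L(p_{k+j})$.

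The key bookkeeping step is an induction on the round number showing that the automaton component $q$ appearing in every node of $r$ equals $(\delta^\phi)^*$ applied to the prefix of $\zip{t_1, \ldots, t_{k+l}}$ read so far. This is immediate from the ($\forall$)-rule, which updates $q$ via $\delta^\phi$ using exactly the labels $L(s_1), \ldots, L(s_{k+l})$ of the just-reached joint state. Since $\sigma$ is winning, the minimal color visited infinitely often along $r$ is even, and by the above correspondence the same holds for the unique run of $\aut^\phi$ on $\zip{t_1, \ldots, t_{k+l}}$; hence this word is in $\calL(\aut^\phi)$, and by the defining property of $\aut^\phi$ the assignment $[\pi_1 \mapsto t_1, \ldots, \pi_{k+l} \mapsto t_{k+l}]$ satisfies $\phi$.

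Finally, since $p_1, \ldots, p_k$ were arbitrary and every element of $\mathit{Traces}(\tss)$ arises from some such path, this establishes $\emptyset \models_{\mathit{Traces}(\tss)} \varphi$, i.e., $\tss \models \varphi$. I do not anticipate any deep obstacle: the game was engineered precisely to mirror HyperLTL's $\forall^*\exists^*$ semantics round by round, so the crux is simply a careful alignment between plays of $\game{\tss}{\varphi}$ and trace assignments, together with the standard correctness of the DPA construction for $\phi$.
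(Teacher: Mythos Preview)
Your proposal is correct and follows essentially the same approach as the paper: fix arbitrary universal paths, simulate the verifier's winning strategy against a refuter who replays those paths, and read off the existential witness traces from the resulting play, using the fact that the automaton component of the game nodes tracks the run of $\aut^\phi$ on the zipped traces. The only cosmetic difference is that the paper explicitly takes a \emph{positional} strategy and builds the play step by step, whereas you phrase it via a refuter strategy $\tau$ and the unique compatible play; both formulations yield the same argument.
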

\begin{proofSketch}
    We use a positional winning strategy $\sigma$ for $\veri$ that witnesses $\winss{\veri}{\game{\tss}{\varphi}}$ to iteratively construct traces for the existentially quantified traces by simulating $\sigma$ on finite prefixes of the universally quantified traces.
    We give a detailed proof in \ifFull{\refApp{sec:appSoundness}}{the full version \cite{fullVersion}}.
\end{proofSketch}

\subsection{Prophecies and Prophecy Variables}

As we saw in \Cref{sec:runningExample}, strategy-based verification of $\forall^*\exists^*$ properties is incomplete, i.e., 
$\veri$ might lose $\game{\tss}{\varphi}$ even though the system satisfies the property.
Intuitively, this is the case when the $\exists$-player (the verifier in $\game{\tss}{\varphi}$) needs future information that is not available by observing only a prefix of the universally quantified traces.
To counteract this lack of information, we introduce prophecies.

\begin{definition}
    A \emph{prophecy} is a $\omega$-regular subset $\proph \subseteq (\Sigma^k)^\omega$.  
\end{definition}

If a prophecy $\proph$ holds at step $i$, the $\exists$-player can assume that the $\forall$-player (the refuter in $\game{\tss}{\varphi}$) starting in step $i$, constructs traces $t_1, \ldots, t_k$ for the $k$ universal quantifiers such that $\zip{t_1, \ldots, t_k} \in \proph$.
The prophecy thereby provides limited information (in form of the binary information on whether or not the prophecy holds) about the future behavior of the universally quantified traces. 

To formally introduce prophecies into our framework, we need to enable the $\exists$-player to, in each step, determine which prophecies hold.  
We delegate this step to the universal player who determines the truth value for each prophecy in its (modified) state space.
Formally, we accomplish this in two steps.
(1) We extend the system by fresh boolean variables (called \emph{prophecy variables}) that, in each step, can be chosen non-deterministically, and (2) we relax the specification to ensure that the prophecy variables set by the $\forall$-player correspond to the truth value of the prophecies.

\subsubsection{System Manipulation}
We begin by modifying the transition system to allow the $\forall$-player to set the prophecy variables.

\begin{definition}\label{def:addProphToTs}
    Given a transition system $\tss = (S, S_0, \varrho, L)$ and a set of fresh propositions $P$ (with $P \cap \ap = \emptyset$) we define the modified transition system $\tss^P \coloneqq (S^P, S_0^P, \varrho^P, L^P)$ over $\ap \mathbin{\mathaccent\cdot\cup} P$ where $S^P \coloneqq S \times 2^P$, $S_0^P \coloneqq S_0 \times 2^P$, $\varrho^P \coloneqq \{ ((s, A), (s', A')) \mid (s, s') \in \varrho \, \land  \,A, A' \in 2^P \}$ and $L^P(s, A) \coloneqq L(s) \cup A$.
\end{definition}

In particular, we have 
\begin{align*}
    \mathit{Traces}(\tss^P) = \{ t \cup t' \mid t \in \mathit{Traces}(\tss), t' \in (2^{P})^\omega \} 
\end{align*}
where $t \cup t'$ denotes the pointwise union of both traces. 

\subsubsection{Property Manipulation}
We modify the matrix of the hyperproperty such that the original property is only required to hold, if all prophecies by the universal player are set correctly, i.e., a prophecy variable in $P$ is set to true iff the universally quantified traces produced by the $\forall$-player are contained in the corresponding prophecy. 
To express this at the logical level, we make use of the fact that we can express a ($\omega$-regular) prophecy $\proph$ as a QPTL formula (cf.~\Cref{theo:QPTL}).\footnote{In practice, we would not express prophecies in QPTL and instead operate directly on an automaton-based representation of a prophecy.
    By taking this detour, we can keep the notation succinct and can express the assumption that the $\forall$-player correctly sets the prophecy variables as a logical implication.}

\begin{definition}\label{def:addProphToForm}
    Given a set of QPTL formulas $\Xi = \{\xi_1, \ldots, \xi_n\}$ using only trace variables in $\{\pi_1, \ldots, \pi_k\}$  and a fresh set of atomic propositions $P = \{p_1, \ldots, p_n\}$, define the modified formula $\varphi^{P, \Xi}$ as 
    \begin{align*}
        \forall \pi_1 &\ldots \forall \pi_k. \exists \pi_{k+1} \ldots \exists \pi_{k+l}\ldot  \bigg[\ltlG \bigwedge_{j=1}^n ({p_j}_{\pi_1} \leftrightarrow {\xi_j})\bigg]\to \phi.
    \end{align*}
\end{definition}

That is, we only require $\phi$ to hold, if in every step and for every $1 \leq j \leq n$, the prophecy formula $\xi_j$ holds exactly when the prophecy variable $p_j$ is set on trace $\pi_1$.\footnote{
With the construction of $\varphi^{P, \Xi}$ we ensure that each prophecy variable on $\pi_1$ reflects the truth value of the prophecy.
However, any of the universally quantified trace variables would work equally well. 
Note that each prophecy formula $\xi_j$ captures a behavior of the combined executions of the universally quantified traces $\pi_1, \ldots, \pi_k$ (as $\xi_j$ uses trace variables in $\{\pi_1, \ldots, \pi_k\}$) and not necessarily the behavior of a single trace.
In fact, local prophecies (i.e., prophecies that only capture behavior on one trace) are insufficient for completeness (cf.~\Cref{ex:multipleUniversalTraces}).
}

\subsubsection{Soundness of Prophecies}

The combination of the modified transition system (which allows the prophecy variables to take any value) and the modified property does not impact the satisfaction of the original property on the original system as stated in the following theorem (see, e.g., \cite[Thm.~5]{CoenenFST19}).

\begin{theorem}\label{theo:introProph}
    Let $\Xi = \{\xi_1, \ldots, \xi_n\}$ and $P = \{p_1, \ldots, p_n\}$ be as in \Cref{def:addProphToForm}. 
    Then $\tss \models \varphi$ if and only if $\tss^P \models \varphi^{P, \Xi}$.
\end{theorem}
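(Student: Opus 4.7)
The plan is to establish the two directions separately, both by explicit manipulation of traces: stripping off the added $P$-labels in one direction and extending with suitable $P$-labels in the other. The core observation that makes everything work is that $\mathit{Traces}(\tss^P) = \{t \cup t' \mid t \in \mathit{Traces}(\tss), t' \in (2^P)^\omega\}$, so moving between traces of $\tss$ and $\tss^P$ is just a matter of adding or removing a layer of $P$-labels, and the matrix $\phi$ of $\varphi$ is insensitive to this layer because $\phi$ mentions only propositions in $\ap$, while $P$ is disjoint from $\ap$.

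For the forward direction, I assume $\tss \models \varphi$ and fix an arbitrary universal assignment $\Pi$ mapping $\pi_1, \ldots, \pi_k$ to traces in $\mathit{Traces}(\tss^P)$. Projecting each $\Pi(\pi_i)$ onto $\ap$ yields traces $t_1, \ldots, t_k \in \mathit{Traces}(\tss)$, and by assumption there exist witnesses $t_{k+1}, \ldots, t_{k+l} \in \mathit{Traces}(\tss)$ with $[\pi_i \mapsto t_i]_{i=1}^{k+l} \models \phi$. Extending each $t_{k+j}$ to a trace in $\mathit{Traces}(\tss^P)$ by any choice of $P$-labels gives existential witnesses; since $\phi$ does not mention $P$, it is preserved under this extension, so the conclusion of the implication in $\varphi^{P,\Xi}$ holds (in particular the implication is satisfied regardless of the premise), and $\tss^P \models \varphi^{P,\Xi}$ follows.

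For the backward direction, I assume $\tss^P \models \varphi^{P,\Xi}$ and fix arbitrary $t_1, \ldots, t_k \in \mathit{Traces}(\tss)$. The key step is to extend these to $\tss^P$-traces $\hat t_1, \ldots, \hat t_k$ such that the premise $\ltlG \bigwedge_{j=1}^n ({p_j}_{\pi_1} \leftrightarrow \xi_j)$ is forced to hold. Concretely, I leave the $P$-labels on $\hat t_2, \ldots, \hat t_k$ arbitrary, and for $\hat t_1$ I set, at each position $i \in \nat$, the label of $p_j$ to $\top$ iff $[\pi_1 \mapsto t_1, \ldots, \pi_k \mapsto t_k], i \models \xi_j$ (for each $j$). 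Since $\tss^P$ permits arbitrary $P$-labels on any underlying state sequence of $\tss$, this indeed produces valid traces of $\tss^P$. Instantiating $\tss^P \models \varphi^{P,\Xi}$ with this universal assignment yields existential witnesses $\hat t_{k+1}, \ldots, \hat t_{k+l} \in \mathit{Traces}(\tss^P)$ for which the implication holds; by construction its premise holds, so the matrix $\phi$ holds. Stripping $P$-labels from the witnesses gives traces $t_{k+1}, \ldots, t_{k+l} \in \mathit{Traces}(\tss)$, and since $\phi$ does not mention $P$, the satisfaction of $\phi$ transfers to the stripped assignment, establishing $\tss \models \varphi$.

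The argument is essentially straightforward and the only small technical point to be careful about is the semantics of the prophecy formulas $\xi_j$: each $\xi_j$ is a QPTL formula over the universal trace variables $\pi_1, \ldots, \pi_k$ (plus possibly propositional quantifiers internal to $\xi_j$), and so its truth value at position $i$ is well-defined given the trace assignment, which is exactly what we need to define the $P$-labels on $\hat t_1$. I expect the main (and only) obstacle to be making this encoding step precise, in particular verifying that the constructed $\hat t_1$ really does satisfy $\ltlG \bigwedge_j ({p_j}_{\pi_1} \leftrightarrow \xi_j)$ under the chosen assignment, which follows directly from unfolding the HyperLTL/QPTL semantics position by position.
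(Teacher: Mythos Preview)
Your proof is correct and follows the natural argument. The paper itself does not give a proof of this theorem; it simply cites \cite[Thm.~5]{CoenenFST19} for the result, so there is nothing to compare against beyond noting that your argument is the standard one underlying that citation.
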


\begin{remark}
    A brief remark about nomenclature is in order.
    A \emph{prophecy} is a $\omega$-regular set of traces $\proph$. 
    We represent this prophecy as a QPTL formula $\xi_i \in \Xi$ which we also refer to as a prophecy or \emph{prophecy formula}.
    Lastly, $p_i \in P$ is a \emph{prophecy variable} that corresponds to prophecy (formula) $\xi_i$.
\end{remark}

\begin{figure*}[!t]
	
	\begin{subfigure}{1\columnwidth}
		\centering
		\begin{tikzpicture}[scale=1]
			\node[] at (0,0) (n0) {$\winss{\veri}{\game{\tss^P}{\varphi^{P, \Xi}}}$};
			\node[] at (-3,-1.5) (n1) {$\tss \models \varphi$};
			
			\node[] at (0,-1.5) (n3) {$\tss^P \models \varphi^{P, \Xi}$};

			\draw[->, thick] (n0) --node[right] {\scriptsize Thm.~\ref*{theo:soundness}} (n3);
			
			\draw[->, thick,transform canvas={yshift=0.5ex}] (n1) -- (n3);
			\draw[->, thick,transform canvas={yshift=-0.5ex}] (n3) --node[below] {\scriptsize Thm.~\ref*{theo:introProph}} (n1);
			
		\end{tikzpicture}
		\subcaption{Implications for any set of prophecies.}\label{fig:results}
	\end{subfigure}%
	\begin{subfigure}{1\columnwidth}
		\centering
		\begin{tikzpicture}[scale=1]
			\node[] at (0,0) (n0) {$\winss{\veri}{\game{\tss^{{P}}}{\varphi^{{P}, {\Xi}}}}$};
			\node[] at (-4,-1.5) (n1) {$\tss \models \varphi$};

			\node[] at (0,-1.5) (n3) {$\tss^P \models \varphi^{{P}, {\Xi}}$};
			
			\draw[->, thick] (n0) --node[right] {\scriptsize Thm.~\ref*{theo:soundness}} (n3);
			
			\draw[->, thick,transform canvas={yshift=0.5ex}] (n1) -- (n3);
			\draw[->, thick,transform canvas={yshift=-0.5ex}] (n3) --node[below] {\scriptsize Thm.~\ref*{theo:introProph}} (n1);

			\draw[->, thick] (n1) --node[above, sloped] {\scriptsize Thm.~\ref*{theo:comp}} (n0);
		\end{tikzpicture}
		\subcaption{Implications for a \emph{complete} set of prophecies.}\label{fig:resultsComp}
	\end{subfigure}

	\caption{Implications between the satisfaction of a hyperproperty and the existence of a winning strategy for the $\exists$-player.
		We display implications with an arbitrary set of prophecies (\Cref{fig:results}) and a complete set (\Cref{fig:resultsComp}). 
	}
\end{figure*}
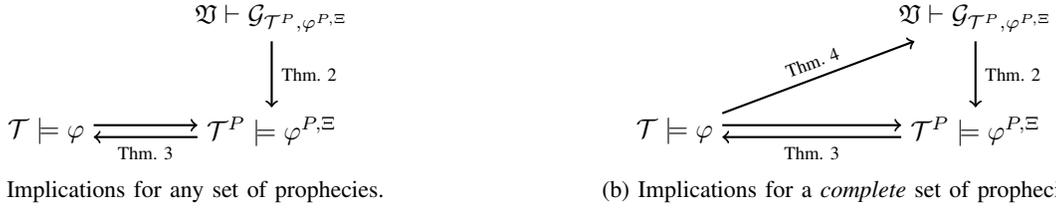

\subsubsection{Prophecies for Strategy-based Verification}

While the addition of prophecies does not alter the satisfaction of the property in the HyperLTL semantics (as stated in \Cref{theo:introProph}), it can impact the existence of a winning strategy for the $\exists$-player during strategy-based verification. 
 That is, it might be that $\winss{\veri}{\game{\tss}{\varphi}}$ does not hold, but $\winss{\veri}{\game{\tss^P}{\varphi^{P, \Xi}}}$ does.
 Thus, prophecies provide a natural tool to strengthen strategy-based verification and allow the user to, e.g., introduce domain knowledge in the form of user-defined prophecies. 
 The soundness of the addition of prophecies can be argued easily: 
If $\winss{\veri}{\game{\tss^P}{\varphi^{P, \Xi}}}$ holds, then (by \Cref{theo:soundness}) $\tss^P \models \varphi^{P, \Xi}$ so (by \Cref{theo:introProph}) $\tss \models \varphi$.
 The situation is depicted graphically in \Cref{fig:results}. 

\begin{example}\label{ex:simpleExample}
	With our notation fixed, we revisit the transition system $\tss$ and HyperLTL formula $\varphi$ from \Cref{sec:runningExample}.
	In this case, $\veri \not\vdash \game{\tss}{\varphi}$, i.e., strategy-based verification fails without the addition of prophecies.
	Let $\Xi = \{\ltlN a_\pi\}$ and let $P = \{p\}$ be a fresh set of prophecies variables. Using \Cref{def:addProphToForm} we construct
	\begin{align*}
		\varphi^{P, \Xi} = \forall \pi. \exists \pi'\ldot \ltlG (p_\pi \leftrightarrow \ltlN a_\pi) \to \ltlG (a_{\pi'} \leftrightarrow \ltlN a_\pi).
	\end{align*}
	It is easy to see that $\winss{\veri}{\game{\tss^P}{\varphi^{P, \Xi}}}$:
	the prophecy variable $p$ hints at the next move of $\refu$. 
	If, for example, $\refu$ sets $p$ to true, $\veri$ can assume that $\ltlN a_\pi$ holds (if it does not, the premise of $\varphi^{P, \Xi}$ is violated and so the play is trivially won by $\veri$).
	The verifier can thus move to state $s_1$ (in \Cref{fig:ilEx1}) and thereby correctly predict the next move on $\pi$.
	As argued in \Cref{fig:results}, $\winss{\veri}{\game{\tss^P}{\varphi^{P, \Xi}}}$ implies that $\tss \models \varphi$.
\end{example}

\section{Completeness}\label{sec:comp}

We have argued that strategy-based verification remains sound when adding prophecies.
The natural question that arises is the following: 
\begin{center}
	\parbox{0.85\columnwidth}{
		\begin{center}
			\textit{Assume that $\tss \models \varphi$. Does there exist \emph{some} finite set of prophecies ${\Xi}$ such that $\winss{\veri}{\game{\tss^{{P}}}{\varphi^{{P}, {\Xi}}}}$?}
		\end{center}
	}
\end{center}%
As already observed by Coenen et al.~\cite{CoenenFST19}, this does not hold if we only allow LTL-definable prophecies. 

\begin{example}\label{ex:ltlex}
    Consider a system $\tss$ that generates all traces over $\ap = \{a, b\}$ and the following property $\varphi$
    \begin{align*}
      \forall \pi.\exists \pi'\ldot a_{\pi'} \land \ltlG (a_{\pi'} \leftrightarrow \ltlN \neg a_{\pi'}) \land (b_{\pi'} \leftrightarrow \ltlF (b_{\pi} \land a_{\pi'})).
    \end{align*}
	That is, $b$ should hold in the first step on $\pi'$ iff $b$ holds at some \emph{even} position on $\pi$.
    Clearly, $\tss \models \varphi$ but $\veri \not\vdash \game{\tss}{\varphi}$.
    Moreover, LTL cannot express that (on $\pi$) $b$ ever holds at an even position (cf.~\Cref{ex:qptlProperty}).
    Consequently, no LTL-definable prophecy can provide sufficient information to the $\exists$-player, i.e., $\veri \not\vdash \game{\tss^{{P}}}{\varphi^{{P}, {\Xi}}}$ for any (finite) set of LTL formulas $\Xi$.
\end{example}

While this incompleteness result for LTL-definable prophecies is interesting in its own right, we usually do not represent prophecies as LTL formulas but work with some automaton representation. 
Consequently, we are less interested in LTL-definable prophecies but in the existence of $\omega$-regular prophecies. 
And indeed, in this paper, we show that we can answer the above question positively if we shift from LTL-definable prophecies to arbitrary $\omega$-regular prophecies.
The main result of this paper reads as follows:

\begin{theorem}\label{theo:comp}
	Let $\tss$ be a (finite-state) transition system and let $\varphi$ be a $\forall^*\exists^*$ HyperLTL property such that $\tss \models \varphi$.
	There exist finitely many QPTL prophecies ${\Xi} = \{\xi_1, \ldots, \xi_n\}$ such that for a fresh set ${P} = \{p_1, \ldots, p_n\}$ we get $\winss{\veri}{\game{\tss^{{P}}}{\varphi^{{P}, {\Xi}}}}$.
\end{theorem}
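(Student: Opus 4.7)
My plan is to construct an explicit finite family of $\omega$-regular prophecies directly from a deterministic parity automaton for the matrix, and then to exhibit a winning strategy for $\veri$ in the extended game that chooses existential successors on the basis of which prophecies currently hold. Let $\aut^\phi = (Q,q_0^\phi,\delta^\phi,c^\phi)$ be a DPA for the matrix $\phi$ of $\varphi = \forall \pi_1 \ldots \pi_k. \exists \pi_{k+1}\ldots \pi_{k+l}\ldot \phi$. The central ``summary'' set I want to use is, for each automaton state $q \in Q$ and each tuple $\vec{s} = (s_{k+1},\ldots,s_{k+l}) \in S^l$ of candidate initial states for the existential copies,
\[
\lpro{q}{\vec{s}} \coloneqq \{\, t \in (\Sigma^k)^\omega \mid \exists t_{k+1}\in \traces{\tss_{s_{k+1}}},\ldots,t_{k+l}\in \traces{\tss_{s_{k+l}}}.\ \zip{t,t_{k+1},\ldots,t_{k+l}} \in \calL(\aut^\phi_q) \,\}.
\]
Each $\lpro{q}{\vec{s}}$ is $\omega$-regular: project out the existentially quantified components of the product of $\aut^\phi$ with $l$ copies of $\tss$, and determinize. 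By \refTheo{QPTL} we can then represent each such prophecy by a QPTL formula $\xi_{q,\vec{s}}$ using only $\pi_1,\ldots,\pi_k$, so the family $\Xi = \{\xi_{q,\vec{s}}\}_{q\in Q, \vec{s}\in S^l}$ is finite and well-formed.

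The next step is the soundness of basing $\veri$'s strategy on $\Xi$. I would first handle the \emph{safety} case, where $\aut^\phi$ is a deterministic safety automaton with bad states $B$: here $\lpro{q}{\vec{s}}$ is nonempty precisely when moving the existential copies to $\vec{s}$ with the automaton in state $q$ keeps some future witness alive. The assumption $\tss \models \varphi$ guarantees that for every finite prefix produced by $\refu$ on $\pi_1,\ldots,\pi_k$, at least one such $\vec{s}$ with $\lpro{q}{\vec{s}}$ holding is reachable in one step, so $\veri$'s strategy ``pick any existential successor whose prophecy currently holds'' is well-defined and keeps the automaton out of $B$ forever. A key lemma here is a \emph{progress} argument: by choosing $\vec{s}$ consistently with the prophecies, the run staying in $\lpro{q}{\vec{s}}$ from step to step exactly mirrors the \emph{existence} of a suffix witness, which on a safety automaton is equivalent to acceptance.

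To lift from safety to full $\omega$-regularity, I would refine the summary sets so that they also record acceptance information, and show that the same ``pick an $\vec{s}'$ whose prophecy holds'' strategy now makes parity-progress. The natural refinement is to index the prophecy family by an additional color parameter,
\[
\lproRel{q}{\vec{s}}{c} \coloneqq \{\, t \mid \exists t_{k+1},\ldots,t_{k+l} \text{ as above such that } \zip{t,t_{k+1},\ldots,t_{k+l}} \in \calL(\aut^\phi_q) \text{ and the induced run visits color } c \text{ in the next phase} \,\},
\]
together with auxiliary prophecies expressing ``a good color will be seen before any smaller odd color is seen again''. Each such set is again $\omega$-regular and hence QPTL-expressible. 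The combined family of QPTL formulas then forms the claimed $\Xi$, and $\veri$'s strategy is: among existential successors whose basic prophecy $\lpro{q}{\vec{s}}$ still holds, prefer one whose refined prophecy promises the next good color. A rank/signature argument on the parity condition (essentially the standard progress-measure argument) shows that this strategy produces an accepting run of $\aut^\phi$ and therefore witnesses $\winss{\veri}{\game{\tss^P}{\varphi^{P,\Xi}}}$.

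The main obstacle is exactly this last step: whereas in the safety case the prophecies $\lpro{q}{\vec{s}}$ already suffice because acceptance is a \emph{closed} condition preserved step by step, in the $\omega$-regular case the $\exists$-player must commit, based only on single-bit prophecy information, to a choice that guarantees infinitely many accepting visits without ever being allowed to revisit that commitment. Encoding ``which color will dominate infinitely often'' as a finite set of $\omega$-regular prophecies, and proving that $\veri$'s greedy selection against these prophecies is winning, is the delicate part of the construction and is what forces the use of full $\omega$-regular (as opposed to merely LTL-definable, cf.~\Cref{ex:ltlex}) prophecies.
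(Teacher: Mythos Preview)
Your treatment of the safety case is essentially the paper's construction: the sets $\lpro{q}{\vec{s}}$ coincide with the paper's $\lpro{q}{s}$ (generalized to $l$ existential copies), and the ``pick any successor whose prophecy holds'' strategy together with the one-step progress lemma is exactly how the paper proves \Cref{theo:compSafety}.

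The gap is in the extension to full $\omega$-regularity. Your refined prophecies $\lproRel{q}{\vec{s}}{c}$ and the auxiliary ``a good color will be seen before a smaller odd color'' predicates are still \emph{absolute} properties of a single candidate successor $\vec{s}$: they assert that \emph{some} witness from $\vec{s}$ has a certain acceptance behavior. But this does not let $\veri$ discriminate among successors when every successor admits an eventually-accepting witness. The paper's \Cref{ex:inc} (matrix $\ltlG\ltlF(a_{\pi'}\leftrightarrow\ltlN a_\pi)$ on the system generating all traces) is precisely this situation: from every reachable $(q,s)$ and for every trace $t$, both successors $s_1,s_2$ lie in $\lpro{q}{s_i}$ and in every color-refined variant you propose, so all your prophecies hold simultaneously and the greedy selection is vacuous. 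No finite family of prophecies that speak only about one successor at a time can encode the unbounded ``distance to the next accepting visit'' that a progress-measure argument needs.

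The missing idea is to make the prophecies \emph{relative}: the paper's prophecy $\lproRel{q}{x}{s}$ is indexed additionally by the current existential state $x$ and asserts that some witness from $s\in\sucs{x}$ reaches $F^\phi$ \emph{at least as fast as} any witness from any alternative $s'\in\sucs{x}$. This comparison between successors is what collapses the unbounded step-count into a single bit per $(q,x,s)$, and the ranking function $\optPath{q}{x}{t}$ then strictly decreases along any play that follows the recommendation (\Cref{prop:progress}). For parity/Rabin acceptance the paper layers a second relative comparison on top (``last visit to $B$ at least as early as any alternative''). Your progress-measure intuition is correct, but without the relative-optimality encoding there is no mechanism by which finitely many one-bit prophecies can transmit the rank to $\veri$.
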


If $\tss \models \varphi$ we call a set of prophecies ${\Xi}$ \emph{complete} if $\winss{\veri}{\game{\tss^{{P}}}{\varphi^{{P}, {\Xi}}}}$, i.e., $\Xi$ is a witness to \Cref{theo:comp}.
The resulting situation is depicted in \Cref{fig:resultsComp}. 
Combined with \Cref{theo:soundness} and \Cref{theo:introProph} we can rephrase \Cref{theo:comp} as follows:

\begin{corollary}
		Let $\tss$ be a (finite-state) transition system and let $\varphi$ be a $\forall^*\exists^*$ HyperLTL property.
		There exist finitely many QPTL prophecies ${\Xi} = \{\xi_1, \ldots, \xi_n\}$ such that for a fresh set ${P} = \{p_1, \ldots, p_n\}$ we get $\winss{\veri}{\game{\tss^{{P}}}{\varphi^{{P}, {\Xi}}}}$ if and only if $\tss \models \varphi$.
\end{corollary}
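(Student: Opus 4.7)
The plan is to construct, for every relevant ``configuration'' in the verification game, a prophecy that characterizes exactly those future behaviors of the universally quantified traces under which the $\exists$-player can still produce witnessing existential traces. Together, such prophecies allow $\veri$ to reduce the infinite-horizon decision at each step to a finite lookup.

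First I build a deterministic parity automaton $\aut^\phi = (Q, q_0^\phi, \delta^\phi, c^\phi)$ for the matrix, exactly as in the construction of $\game{\tss}{\varphi}$. A game configuration consists of the current states $(s_{k+1}, \ldots, s_{k+l})$ of the existentially quantified copies together with the current automaton state $q \in Q$. For every such $\gamma = (s_{k+1}, \ldots, s_{k+l}, q)$ I define
\begin{align*}
    \mathfrak{P}_\gamma \coloneqq \big\{ \zip{t_1, \ldots, t_k} \,\big|\, \exists t_{k+1}, \ldots, t_{k+l}\ldot t_j \in \traces{\tss_{s_j}} \text{ and } \zip{t_1, \ldots, t_{k+l}} \in \calL(\aut^\phi_q) \big\}.
\end{align*}
This is a projection of the $\omega$-regular language cut out by intersecting $\calL(\aut^\phi_q)$ with the product of $\traces{\tss_{s_j}}$ for $k+1 \le j \le k+l$, hence itself $\omega$-regular. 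By Theorem \ref{theo:QPTL} each $\mathfrak{P}_\gamma$ is expressible by a QPTL formula, yielding a finite set $\Xi$ of prophecies and a corresponding fresh set $P$ of prophecy variables.

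Next I construct a winning strategy for $\veri$ in $\game{\tss^P}{\varphi^{P, \Xi}}$ that maintains the invariant that at every visited $\exists$-node the prophecy variable belonging to the current configuration has been set to $\top$ by $\refu$ on $\pi_1$ (if not, the premise of $\varphi^{P, \Xi}$ is violated and the play is trivially won by $\veri$). The initial step uses $\tss \models \varphi$ directly: for any initial universal states picked by $\refu$, choosing witnessing initial existential states makes the corresponding prophecy hold. For the inductive step I prove a \emph{one-step transfer} lemma: if $\mathfrak{P}_\gamma$ holds on the current universal suffix, then for every move of $\refu$ there exists a move of $\veri$ such that the prophecy of the new configuration holds on the shifted suffix. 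This follows by unfolding the projection defining $\mathfrak{P}_\gamma$ — a witness family $(t_{k+1}, \ldots, t_{k+l})$ for the full suffix restricts to a witness family for the shifted suffix, starting from the next states along each $t_j$ — and using determinism of $\aut^\phi$ to match the automaton state tracked by the game with the one reached by the witness.

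The main obstacle is passing from step-by-step invariance to actually winning the parity condition. For a safety matrix (Section \ref{sec:compSafety}) the argument is immediate: maintaining non-bad configurations forever yields accepting runs. For a general $\omega$-regular matrix, however, locally keeping some $\mathfrak{P}_\gamma$ alive does not \emph{a priori} force the produced run of $\aut^\phi$ to visit good colors infinitely often, because the witnesses used to certify $\mathfrak{P}_\gamma$ at different steps may be entirely different traces. I would therefore refine the configuration with a progress component — intuitively the minimal color seen since the last ``reset'' — so that the prophecies encode not only the existence of a witness but a memoryless winning choice in an auxiliary parity game over configurations. With such refined prophecies in place, step-by-step invariance upgrades to a genuine winning strategy via the standard positional determinacy of parity games, yielding $\winss{\veri}{\game{\tss^P}{\varphi^{P, \Xi}}}$.
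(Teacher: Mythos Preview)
Your construction for the safety case is essentially the paper's: the prophecies $\mathfrak{P}_\gamma$ coincide with the sets $\lpro{q}{s}$ of \Cref{sec:compSafety} (generalized to several existential copies), and your one-step transfer lemma is exactly \Cref{lem:progress}. So far so good; the invariant argument you outline is the paper's \Cref{prop:claim}.

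The gap is in the general $\omega$-regular case. You correctly diagnose the obstacle---keeping $\mathfrak{P}_\gamma$ alive step by step does not force the run of $\aut^\phi$ to be accepting, because the witnesses certifying $\mathfrak{P}_\gamma$ at different steps need not cohere---but your proposed fix (``refine the configuration with a progress component \ldots\ a memoryless winning choice in an auxiliary parity game'') is not a construction. You do not say what the auxiliary game is, and the natural candidate, $\game{\tss}{\varphi}$ itself, is precisely the game $\veri$ may lose. If instead you mean a game in which $\refu$'s moves are already fixed to the given universal traces, that game has infinite state space (its positions must encode the remaining suffix), so positional determinacy gives you a strategy that depends on the full future---which is what you are trying to compress into finitely many $\omega$-regular prophecies in the first place. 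Nothing in your sketch explains why such a compression exists, nor why the refined prophecies would be $\omega$-regular.

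The paper's solution is a genuinely different idea that your sketch does not anticipate. Rather than tracking ``minimal color since last reset'', the paper makes the prophecy itself encode a \emph{relative optimality} comparison among the $\exists$-player's possible next moves: $\lproRel{q}{x}{s}$ holds iff from successor $s$ there is a witness reaching an accepting state at least as fast as from \emph{any other} successor $s'$ of $x$ (and, for Rabin pairs, reaching the \emph{last} bad visit as fast as possible). This comparison is what makes the ranking argument (\Cref{prop:progress}) go through: following a prophecy-recommended successor strictly decreases a well-founded quantity whenever the current automaton state is non-accepting, forcing infinitely many visits to $F^\phi$. Showing that this comparative condition is $\omega$-regular is itself nontrivial (\Cref{prop:omegaReagular}) and requires a careful QPTL encoding with nested propositional quantification over paths and runs. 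None of this is captured by ``add a progress component and invoke determinacy''.

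A minor point: the corollary is an equivalence, and you only argue the direction $\tss \models \varphi \Rightarrow \winss{\veri}{\game{\tss^P}{\varphi^{P,\Xi}}}$. The converse is immediate from \Cref{theo:soundness} and \Cref{theo:introProph}, but you should say so.
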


We note that our prophecy construction used to prove \Cref{theo:comp} yields prophecies \emph{without} first checking if $\tss \models \varphi$.
This allows us to use our construction to (algorithmically) check if $\tss \models \varphi$ (we discuss this in \Cref{sec:mc}).

\begin{remark}
	We can strengthen \Cref{theo:comp} further.
	Our prophecy construction treats the LTL matrix of the HyperLTL property as an $\omega$-automaton.
	The constructions thus generalize to all logics that utilize the trace quantification mechanism of HyperLTL but express arbitrary $\omega$-regular property within their matrix.
	For example, our result also applies to HyperQPTL, i.e., formulas where the trace-quantifier prefix is followed by a QPTL formula. 
	We thus show that $\omega$-regular prophecies suffice for all $\forall^*\exists^*$ hyperproperties with $\omega$-regular matrix.
	In contrast,  \Cref{ex:ltlex} shows that LTL-definable prophecies are not sufficient for $\forall^*\exists^*$ hyperproperties with LTL-definable matrix (aka.~HyperLTL). 
\end{remark}

\begin{example}\label{ex:multipleUniversalTraces}
	We can show that in the case of more than a single universally quantified trace (i.e., cases where $k > 1$), prophecies must necessarily reason about the joint future behavior of all $k$ universally quantified traces.
	Consider the transition system $\tss$ in \Cref{fig:ilEx1} that generates all traces over $\ap = \{a\}$ and the property 
	\begin{align*}
		\varphi = \forall \pi. \forall \pi'. \exists \pi''\ldot a_{\pi''} \leftrightarrow \ltlG (a_{\pi} \leftrightarrow a_{\pi'}).
	\end{align*}
	That is, $a$ should hold on $\pi''$ in the first step iff $\pi$ and $\pi'$ are equal.
	Clearly, $\tss \models \varphi$ but this cannot be verified using strategy-based verification without prophecies.
	The LTL-definable prophecy $\xi \coloneqq \ltlG (a_{\pi} \leftrightarrow a_{\pi'})$ provides enough information to the $\exists$-player on whether or not to set $a$ in the first step.
	However, any finite set of local prophecies (i.e., prophecy formulas that only refer to $\pi$ or only refer to $\pi'$) is incomplete.
\end{example}

The following two sections are devoted to a proof of \Cref{theo:comp}.
To avoid clustered notation, we give our proof for hyperproperties of the form $\forall \pi. \exists \pi'. \phi$.
Our result generalizes easily to the entire $\forall^*\exists^*$ fragment.
We begin our proof by considering HyperLTL formulas of the form $\forall \pi. \exists \pi'. \phi$ where $\phi$, when interpreted as a trace property, is a \emph{safety} property (in the traditional sense \cite{AlpernS85}).
This allows for a simpler construction (in \Cref{sec:compSafety}).
In \Cref{sec:compGeneral} we then incrementally extend the construction to general temporal properties.

\begin{remark}
	It is important to note that the class of safety used in \Cref{sec:compSafety} only refers to the LTL matrix (the body) of the HyperLTL property.
	If the matrix is safety, this does \emph{not} imply that the HyperLTL formula is hypersafety (as defined by Clarkson and Schneider \cite{ClarksonS08}).
	For example, the matrix of GNI (cf.~\Cref{sec:intro}) is a safety property (and thus lends itself to the simpler construction in \Cref{sec:compSafety}), but GNI is a hyperliveness property \cite{ClarksonS08,CoenenFST19}.
	On the other hand, as shown in \cite{BeutnerCFHK22}, the class of formulas with safety matrix (called \emph{temporal safety} in \cite{BeutnerCFHK22}) already contains all $\forall^*\exists^*$ hypersafety properties.
\end{remark}

\section{Completeness for Safety Matrix}\label{sec:compSafety}

We first consider the case where $\phi$ is a safety property.
Let $\aut^\phi = (Q^\phi, q^\phi_0, \delta^\phi, B^\phi)$ be a deterministic safety automaton over $\Sigma \times \Sigma$ for $\phi$.

\subsection{Prophecy Construction}\label{sec:safetyConstruction}

The main idea behind our completeness result (which in a modified form also applies to the general case in \Cref{sec:compGeneral}) is to design prophecies that directly identify those states that the $\exists$-player should move to. 
As we assume that $\phi$ denotes a safety property, we can accomplish this by identifying all states that are \emph{safe}, i.e., all states that the $\exists$-player can move to without losing the game immediately. 
Formally, we add a prophecy for each state $s$ of the game and design them such that a trace constructed by the $\forall$-player lies within a prophecy for state $s$ if and only if choosing $s$ as a successor is safe for the $\exists$-player.
For every $q \in Q^\phi$  and $s \in S$ we define
\begin{align*}
	\lpro{q}{s}\coloneqq \{ t \in \Sigma^\omega \mid \exists t' \in \mathit{Traces}(\tss_s)\ldot  t \otimes t' \in \calL(\aut^\phi_q)  \}.
\end{align*}
Recall that $\tss_s$ is $\tss$ with $s$ fixed as the initial state and similarly for $\aut^\phi_q$. 
That is, a trace $t$ (chosen for the universally quantified trace in $\varphi$)  is in $\lpro{q}{s}$ if there exists some trace (chosen for the existentially quantified trace in $\varphi$) that starts in $s$ and, in combination with $t$, is accepted by $\aut^\phi$ (when starting in $q$).

To (informally) see why these prophecies are useful for the $\exists$-player, let us assume that the current state of $\aut^\phi$ (on the current prefix of the game) is $q$.
If prophecy $\lpro{q}{s}$ holds, the $\exists$-player can move to state $s$ knowing that the $\forall$-player plays such that $s$ is a safe move (as \emph{some} trace starting from $s$ is still winning).

\begin{remark}
	In our formalization, the $\forall$-player sets the prophecy variables.
	Conceptually, we can thus consider prophecies as a binding contract between the $\forall$-player and the $\exists$-player. 
	When the $\forall$-player indicates that $\lpro{q}{s}$ holds (by setting the respective prophecy variable), the $\forall$-player enters a binding agreement that guarantees that the constructed trace is contained in $\lpro{q}{s}$ (as otherwise, the premise of $\varphi^{P, \Xi}$ is violated so the $\exists$-player wins trivially).
	From this point of view, our prophecies defer the selection of a successor state from the $\exists$-player to the $\forall$-player: By setting the variables, the $\forall$-player implicitly fixes all valid moves for the $\exists$-player. 
\end{remark}

We can easily see that the resulting prophecies are $\omega$-regular (by constructing the product of $\tss_s$ and $\aut^\phi_q$). 
Consequently, we can represent  each prophecy $\lpro{q}{s}$ as a QPTL prophecy formula $\pro{q}{s}$ (cf.~\Cref{theo:QPTL}).
The resulting set of prophecies is complete in the sense of \Cref{theo:comp}.

\begin{restatable}{theorem}{compltnessSafety}\label{theo:compSafety}
	Assume $\tss \models \varphi$.
	Define $\Xi = \{ \pro{q}{s} \mid q \in Q^\phi, s \in S \}$ and let $P = \{\prov{q}{s} \mid q \in Q^\phi, s \in S\}$ be  a  fresh set of atomic propositions. 
    Then $\winss{\veri}{\game{\tss^P}{\varphi^{P, \Xi}}}$. 
\end{restatable}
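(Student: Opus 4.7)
The plan is to construct an explicit strategy $\sigma$ for $\veri$ in $\game{\tss^P}{\varphi^{P,\Xi}}$ and prove it winning from every initial node. The idea is to let $\sigma$ delegate its choice of $\pi'$-successor to the prophecy flags that $\refu$ writes on $\pi$: if $\refu$ ever lies about a flag, the premise of $\varphi^{P,\Xi}$ is violated and the implication is vacuously true, while if $\refu$ is truthful, the flags set to $\top$ point exactly to the safe successors on $\pi'$.

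Concretely, at the initial $\veri$-node over a refuter-chosen $(s_1, A_0) \in S_0^P$, $\sigma$ picks any $s_2 \in S_0$ with $\prov{q^\phi_0}{s_2} \in A_0$. At any subsequent $\veri$-node $\gamenode{((s_1, A_n), (s_2, A_n')), \cdot, \exists}$ reached after a refuter move, $\sigma$ first reconstructs from the game history the current $\aut^\phi$-state $q$ on the $\ap$-projection of the joint prefix (well-defined because $\aut^\phi$ is deterministic), and then picks any $s_2' \in \sucs{s_2}$ with $\prov{q}{s_2'} \in A_n$. The prophecy values $\veri$ writes on $\pi'$ are immaterial and may be fixed to $\emptyset$; if no legal successor exists, $\sigma$ plays arbitrarily.

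The correctness rests on an invariant I would prove by induction on the $\refu$-nodes reached along a $\sigma$-play along which $\refu$ has been truthful so far: letting $t \in \Sigma^\omega$ denote the ultimate universal trace, $q \in Q^\phi$ the current $\aut^\phi$-state, and $s_2$ the current $\pi'$-state at step $n$, one has $t[n,\infty] \in \lpro{q}{s_2}$. The base case uses $\tss \models \varphi$: for the refuter's initial $s_1 \in S_0$ and the universal trace $t$ starting from $s_1$ there is a witness $t' \in \traces{\tss}$ with $t \otimes t' \in \calL(\aut^\phi)$; taking $s_2 := t'(0) \in S_0$ gives $t \in \lpro{q^\phi_0}{s_2}$, so truthfulness forces $\prov{q^\phi_0}{s_2} \in A_0$ and $\sigma$ can select this $s_2$. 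For the inductive step, unfolding $t[n,\infty] \in \lpro{q}{s_2}$ yields a witness path in $\tss_{s_2}$ whose second state $s_2^*$ satisfies $t[n{+}1,\infty] \in \lpro{q'}{s_2^*}$ for $q' = \delta^\phi(q,(L(s_1),L(s_2)))$; truthfulness then forces $\prov{q'}{s_2^*} \in A_{n+1}$, so $\sigma$ finds a valid move, and this move preserves the invariant (again by truthfulness applied to $\prov{q'}{s_2^*}$).

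Finally, on any $\sigma$-play either $\refu$ eventually lies, in which case the premise of $\varphi^{P,\Xi}$ fails on the produced joint trace and the implication holds vacuously, or the invariant is preserved forever. In the latter case no reachable $\aut^\phi$-state lies in $B^\phi$, since a bad state is a trap forcing $\lpro{q}{s_2} = \emptyset$ and contradicting the invariant; the actual joint trace $t \otimes t'$ therefore stays in $\calL(\aut^\phi)$ and $\phi$ holds. Either way $\veri$ wins. The subtlety I expect to require most care is aligning the time step at which $\aut^\phi$ and $\tss$ update with the time step at which $\refu$ writes the prophecy bits, so that the variable $\prov{q}{s_2'}$ that $\sigma$ inspects at step $n{+}1$ is precisely the one whose truthfulness corresponds to $t[n{+}1,\infty] \in \lpro{q'}{s_2'}$ for the $q'$ that $\sigma$ uses to index it.
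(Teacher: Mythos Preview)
Your proposal is correct and follows essentially the same approach as the paper: the same explicit strategy (pick any successor whose prophecy flag is set, else move arbitrarily), the same invariant $t[n,\infty] \in \lpro{q}{s_2}$ maintained under the truthfulness assumption, the same base case via $\tss \models \varphi$ and inductive step via unfolding the definition of $\lpro{q}{s}$, and the same final argument that a bad $\aut^\phi$-state would force $\lpro{q}{\cdot} = \emptyset$ in contradiction to the invariant. The only cosmetic slip is that in the inductive step the invariant is re-established by applying truthfulness to whichever successor $\sigma$ actually chose (not to the witness $s_2^*$, which merely certifies that $\sigma$'s candidate set is nonempty), but this is clearly what you intend.
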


\subsection{Correctness Proof}\label{sec:safetyProof}

In this subsection, we sketch a proof of \Cref{theo:compSafety}.
As a complete proof is rather involved, we restrict ourselves to the construction of a winning strategy for the $\exists$-player and refer to a detailed proof in \ifFull{\refApp{sec:appSafetyCase}}{the full version \cite{fullVersion}}.
Readers less interested in the proof can skip to \Cref{sec:prophCount}.

\subsubsection{Notation}\label{sec:safetyNotation}

We begin by introducing some notation. 
By definition of $\tss^P$, nodes in $\game{\tss^P}{\varphi^{P, \Xi}}$ either have the form $(s, A)$, where $s \in S$ and $A \subseteq P$ or the form $\gamenode{(s, A), (s', A'), q, \flat}$, where $s, s' \in S$, $A, A' \subseteq P$ and $\flat \in \{\forall, \exists\}$. 
Here $q$ is an automaton state in a DPA tracking
\begin{align}\label{eq:LTL}
    \phi^{P, \Xi} \coloneqq \ltlG \Big(\!\!\! \bigwedge_{q \in Q^\phi, s \in S} \!\!\! (\prov{q}{s})_{\pi}  \leftrightarrow \pro{q}{s}\Big)  \rightarrow \phi.
\end{align}%
It is easy to see that in states of the form $\gamenode{(s, A), (s', A'), q, \flat}$ the $A'$ component (stemming from the definition of $\tss^P$) is irrelevant as in (\ref{eq:LTL}) the prophecy variables are only referred to on trace variable $\pi$. 
We, therefore, consider a node $\gamenode{(s, A), (s', A'), q, \flat}$ simply as $\gamenode{(s, A), s', q, \flat}$.
With this conceptual simplification, any finite play $\vertex  \in V^* $ in $\game{\tss^P}{\varphi^{P, \Xi}}$ (starting in some state in $V_\mathit{init}$) of odd-length (where $|\vertex| = 2i + 1$) has the form 
\begin{align}\label{eq:examplePath}
    \begin{split}
        &(s_0, A_0) \to \gamenode{(s_0, A_0), s_0', q_0, \forall} \to \gamenode{(s_1, A_1), s_0', q_1, \exists}\\
        &\to \gamenode{(s_1, A_1), s_1', q_1, \forall} \to \cdots \to \gamenode{(s_i, A_i), s_{i-1}', q_i, \exists}.
    \end{split}
\end{align}
We can extract from $\vertex$ both paths through $\tss$ and the prophecy variables set at each step.
Define $\sSeq{0}\sSeq{1} \cdots \sSeq{i}$ to be the path of the $\forall$-player ($s_0s_1\cdots s_i$ in (\ref{eq:examplePath})), $\bSeq{0}\bSeq{1}\cdots\bSeq{i}$ the sequence of prophecy variables chosen ($A_0A_1 \cdots A_i$ in (\ref{eq:examplePath})), and $\ssSeq{0}\ssSeq{1}\cdots \ssSeq{i-1}$ the path for the $\exists$-player ($s_0's_1'\cdots s_{i-1}'$ in (\ref{eq:examplePath})).
Define $\sTr{k} \coloneqq L(\sSeq{k})$ and $\ssTr{k} \coloneqq L(\ssSeq{k})$ for $0 \leq k \leq i-1$.

\subsubsection{Strategy Construction}

With those definitions at hand, we define an explicit winning strategy $\sigma$ for $\veri$ as follows:

\begin{algorithm}[H]
	\begin{algorithmic}[1]
		\State \textbf{Input:} $\vertex \in V^*$ \textbf{with} $|\vertex| = 2i + 1$
		\If{$i = 0$}
		\State  $T \coloneqq S_0$     \label{line:l1}
		\Else
		\State  $T \coloneqq \sucs{\ssSeq{i-1}}$\label{line:l2}\vspace{-0.8mm}
		\EndIf
		\State $\hat{q} \coloneqq {\delta^{\phi}}^* \big[(\sTr{0}, \ssTr{0}) \cdots (\sTr{i-1},\ssTr{i-1})\big]$\label{line:l3}
		\State $C \coloneqq \{ s' \mid s' \in T \land  \prov{\hat{q}}{s'} \in \bSeq{i} \}$\label{line:l4}
		\If{$C \neq \emptyset$}\label{line:l5}
		\State \textbf{return} any $s' \in C$\label{line:l6}
		\Else
		\State \textbf{return} any $s' \in T$\label{line:l7}
		\EndIf
	\end{algorithmic}
\end{algorithm}

\vspace{-5mm}

\noindent
Note that $\sigma$ directly returns a successor state in $\tss$.

By the structure of $\game{\tss^P}{\varphi^{P, \Xi}}$, any finite path starting in $V_{\mathit{init}}$ that reaches a node in $V_\veri$ is of odd length.
We begin by computing all possible successor states for the $\exists$-player in a set $T$. These are either all initial nodes in the case where $|\vertex| = 1$ (line \ref*{line:l1}) or all successor states of the current state of the $\exists$-player (line \ref*{line:l2}).
We then compute the state $\hat{q}$ of $\aut^\phi$ reached on $\vertex$ in line \ref*{line:l3}.
Note that $\hat{q}$ is a state in $\aut^\phi$ whereas the automaton states occurring in $\vertex$ are states in a DPA tracking (\ref{eq:LTL}).
In line \ref*{line:l4}, we check if any of the possible successors in $T$ are declared safe by the $\forall$-player, i.e., we check for states where the corresponding prophecy variable is set. 
If there is any such state, we pick it (line \ref*{line:l6}).
Otherwise, we choose an arbitrary successor (line \ref*{line:l7}).

\begin{example}
    We can simulate the strategy on abstract prefixes of (\ref{eq:examplePath}).
    Initially, for $\vertex = (s_0, A_0)$ it picks any initial state $s_0' \in S_0$ such that  $\prov{s_0'}{q^\phi_0} \in A_0$.
    For path $\vertex = (s_0, A_0) \to \gamenode{(s_0, A_0), s_0', q_0, \forall} \to \gamenode{(s_1, A_1), s_0', q_1, \exists}$ it computes the current state $\hat{q}$ of $\aut^\phi$ reached on the path $(L(s_0), L(s_0')) \in (\Sigma \times \Sigma)^*$ and picks any successor $s_1'$ of $s_0'$ such that $\prov{s_1'}{\hat{q}} \in A_1$.
\end{example}

It remains to argue the correctness of the just constructed strategy. 
Here, we may assume that all prophecies are set correctly (i.e., the premise of (\ref{eq:LTL}) is true) as otherwise, the play is trivially won by $\veri$. 
Under this assumption, the premise that $\tss \models \varphi$, and by induction on the length of a prefix of (\ref{eq:examplePath}) we can establish that $C$ (as computed in line \ref*{line:l4}) is never empty, so the strategy always selects a successor for which the prophecy holds.
This already implies that the play is winning for the $\exists$-player:
Indeed, if any state $\hat{q}$ in $\aut^\phi$ were bad, we would get $\lpro{\hat{q}}{s} = \emptyset$ for all states $s$, and so the set $C$ computed in line \ref*{line:l4} would be empty as well (as we assumed that the prophecy variables are set correctly).
A detailed proof can be found in \ifFull{\refApp{sec:appSafetyCase}}{the full version \cite{fullVersion}}.

\subsection{On the Number Of Prophecies}\label{sec:prophCount}

As established in \Cref{theo:compSafety}, the size of a complete set of prophecies  is upper bounded by $|S| \cdot |Q^\phi|$.
We can restrict the number of prophecies further (which is relevant in practice but does not offer an asymptotic improvement).
Two states $s_1, s_2$ are trace equivalent, written $s_1 \equiv_\mathit{Trace} s_2$, if $\mathit{Traces}(\tss_{s_1}) = \mathit{Traces}(\tss_{s_2})$.
If $s_1 \equiv_\mathit{Trace} s_2$, we get $\lpro{q}{s_1} = \lpro{q}{s_2}$ for any automaton state $q$, so we can restrict the prophecy construction to the equivalence class of $\equiv_\mathit{Trace}$.

We do not claim that our explicit prophecy construction in \Cref{sec:safetyConstruction} is optimal w.r.t.~the number of prophecies.
We can, however, show that the number of prophecies must necessarily grow with the size of the system, i.e., it cannot be constant (see \ifFull{\refApp{app:lowerBound}}{the full version \cite{fullVersion}} for a proof).

\begin{restatable}{proposition}{lognprop}\label{prop:lb}
    There exists a $\forall\exists$ HyperLTL property $\varphi$ with safety matrix and a family of transition systems $\{\tss_n\}_{n \in \nat}$ such that $\tss_n$ has $\Theta(n)$-many states, and $\tss_n \models \varphi$, and, additionally, any family of prophecies $\{\Xi_n\}_{n \in \nat}$ where $\Xi_n$ is complete for $\tss_n, \varphi$ has at least size $|\Xi_n| \in \Omega(\log n)$.
\end{restatable}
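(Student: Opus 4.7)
The plan is to exhibit a property $\varphi$ with safety matrix together with a family $\{\tss_n\}$ of linear-size systems for which every complete set of prophecies must be information-theoretically large, essentially because each prophecy variable supplies only one bit of the information visible to $\veri$ at step~0 of the game, while $\veri$ must commit at that step to one of $\Theta(n)$ equally plausible $\exists$-initial-state moves.

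Concretely, I would take $\ap = \{a, a_1, \ldots, a_L\}$ with $L = \lceil \log_2 n \rceil$ and let $\tss_n$ be the \emph{universal} system whose state space is $\Sigma = 2^{\ap}$, with every state initial, a complete transition relation, and identity labelling; then $|\tss_n| = 2^{L+1} = \Theta(n)$ and $\traces{\tss_n} = \Sigma^\omega$. As property I would use
\[
\varphi \coloneqq \forall \pi. \exists \pi'\ldot \ltlG \bigwedge_{k=1}^L \bigl((a_k)_{\pi'} \leftrightarrow \ltlN^k a_\pi \bigr),
\]
where $\ltlN^k$ abbreviates $k$ nested $\ltlN$s. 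Each $\ltlG$-conjunct has bounded look-ahead, so the matrix denotes a safety property, and $\tss_n \models \varphi$ by picking $\pi'$ whose $i$-th label records, on $a_1, \ldots, a_L$, the values of $a$ at positions $i{+}1, \ldots, i{+}L$ of $\pi$; such a $\pi'$ exists because $\tss_n$ realises every trace over $\ap$.

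For the lower bound, I would fix an arbitrary $\Xi_n = \{\xi_1, \ldots, \xi_r\}$ with variables $P = \{p_1, \ldots, p_r\}$, assume $\veri$ has a winning strategy $\sigma$ in $\game{\tss_n^P}{\varphi^{P, \Xi_n}}$, and derive $r \ge L$. After fixing one initial state $s_0 \in S_0$, I would, for every pattern $\beta \in \{0,1\}^L$, pick a trace $\pi_\beta \in \traces{\tss_n}$ with $\pi_\beta(0) = s_0$ and $a \in \pi_\beta(k) \iff \beta_k = 1$ for $k = 1, \ldots, L$. Against $\pi_\beta$ the refuter is forced to set the step-$0$ prophecy labelling to the unique $A_0(\pi_\beta) \coloneqq \{p_j \mid \xi_j \text{ holds on } \pi_\beta \text{ at step } 0\}$, because any other setting immediately falsifies the $\ltlG$-premise of $\varphi^{P, \Xi_n}$ and hands the play to $\veri$. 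By the (init) rule $\veri$ has no prior history at step~0, so $\sigma$'s first move is a function of the current node $(s_0, A_0(\pi_\beta))$ alone; the matrix constraint at position~$0$ then forces the chosen $\exists$-initial state to carry exactly the label pattern $\beta$ on $\{a_1, \ldots, a_L\}$. Since the labelling of $\tss_n$ is injective on states, distinct $\beta$ produce distinct $\sigma$-outputs, and because $\sigma$ is a function this yields $A_0(\pi_\beta) \neq A_0(\pi_{\beta'})$ whenever $\beta \neq \beta'$. Hence $\{A_0(\pi_\beta) \mid \beta \in \{0,1\}^L\}$ has $2^L$ distinct elements in $2^P$, giving $2^r \geq 2^L$ and $|\Xi_n| = r \ge L \in \Omega(\log n)$.

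The most delicate step is the step-0 information-flow argument: I will need to justify precisely that, once $s_0$ is fixed, $\veri$'s step-0 commitment cannot exploit any signal beyond the $r$-bit vector $A_0$, and that the prefix constraints on the $\pi_\beta$'s still leave enough freedom in the tail so that all $2^L$ plays are genuinely realisable and the corresponding correct prophecy labellings are well-defined. Using the universal system with $\traces{\tss_n} = \Sigma^\omega$ is what keeps this clean — no extension of any $\pi_\beta$ is disallowed, and the refuter can therefore honestly commit to each $\pi_\beta$ with its matching $A_0(\pi_\beta)$.
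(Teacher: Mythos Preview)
Your information-theoretic argument at step~0 is correct and clean, but there is a genuine gap: the proposition asks for a \emph{single} property $\varphi$ that works for the entire family $\{\tss_n\}_{n \in \nat}$, whereas your $\varphi$ depends on $n$ through $L = \lceil \log_2 n \rceil$. Both your set $\ap = \{a, a_1, \ldots, a_L\}$ of atomic propositions and the conjunction $\bigwedge_{k=1}^L (\cdots)$ grow with $n$, so what you have actually established is the weaker statement in which both the system and the property are allowed to scale --- this is precisely the paper's auxiliary Lemma (``simpleLowerBound''), not the proposition itself. The paper explicitly separates these two results and remarks that the version with a varying property is ``easy''.

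The paper's fix is to push the $\log n$ bits of structure out of the formula and into the transition system. It uses a fixed $\ap = \{a, d\}$ and a fixed property of the shape $\forall\pi.\exists\pi'\ldot d_\pi \to (\neg d_{\pi'} \land \ltlN\ltlG(a_{\pi'} \leftrightarrow \ltlN a_\pi))$. The system $\tss_n$ has a delay state (so the $\exists$-player must commit one step before the $\forall$-player) followed by $n$ disjoint chains of length $\log n$, each emitting a distinct binary word on $a$. The $\exists$-player must pick the correct chain based only on the step-0 prophecy bits, and your counting argument then applies verbatim. This construction yields $\Theta(n\log n)$ states; the paper closes with a rescaling step (reindexing via the inverse of $n \mapsto n\log n$) to recover a $\Theta(n)$-state family while preserving the $\Omega(\log n)$ prophecy bound. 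Your universal-system idea cannot be salvaged directly for a fixed $\varphi$, because with a constant-size $\ap$ the number of distinguishable step-0 labels the $\exists$-player can be forced to produce is bounded independently of $n$ unless the system itself encodes the $n$ alternatives.
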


\section{Completeness for $\omega$-regular Matrix}\label{sec:compGeneral}

So far, the prophecy construction from \Cref{sec:compSafety} is limited to the case where $\phi$ is a safety property. 
In this section, we incrementally modify the construction to support properties where $\phi$ expresses arbitrary $\omega$-regular properties.
To begin with, it is helpful to analyze \emph{why} the construction from \Cref{sec:compSafety} fails when moving beyond safety.

\begin{example}\label{ex:inc}
    As a simple example to see this, we again consider the transition system in \Cref{fig:ilEx1}, generating all traces over $\ap = \{a\}$.
    Define 
    \begin{align*}
        \varphi \coloneqq \forall \pi. \exists \pi'\ldot \ltlG \ltlF\big( a_{\pi'} \leftrightarrow \ltlN a_\pi\big)
    \end{align*}
    which expresses that $\pi'$ should predict the \emph{next} step on $\pi$. 
    Importantly, $\pi'$ should not necessarily predict the next step of $\pi$ at every point but at least infinitely many times. 
    Clearly, $\tss \models \varphi$ but $\veri \not\vdash \game{\tss}{\varphi}$.
    Let $\Xi$ be the set of prophecies constructed in \Cref{sec:compSafety}.
    It is easy to see that for any reachable state $q$ (in the canonical DPA for the matrix of $\varphi$) and any trace $t$, it holds that $t \in \lpro{q}{s_1}$ and $t \in \lpro{q}{s_2}$.
    This is the case as choosing either state is safe, i.e., does lose the game for the $\exists$-player. Even if the current prediction is incorrect, infinitely many correct predictions are still possible in the future. 
    The $\forall$-player can therefore set all prophecy variables to true without invalidating the premise of $\varphi^{P, \Xi}$, so $\veri \not\vdash \game{\tss^P}{\varphi^{P, \Xi}}$; $\Xi$ is an incomplete set of prophecies. 
\end{example}

As evident in \Cref{ex:inc}, the root cause is that the safety prophecies provide enough information to never lose the game (i.e., encounter a situation from which the game cannot be won anymore), but this does (when moving beyond safety) not guarantee that the game is won.

\subsection{Optimal Successors and Prophecy Construction}\label{sec:constructionBuechi}

We begin our extension to support full $\omega$-regularity  by assuming that we can express $\phi$ with a \emph{deterministic} Büchi automaton $\aut^\phi = (Q^\phi, q^\phi_0, \delta^\phi, F^\phi)$.
This is a proper extension of the safety case in \Cref{sec:compSafety} (as all safety properties can be expressed with a deterministic Büchi automaton) but does not capture full $\omega$-regularity yet (we relax this further in \Cref{sec:beyond}).
Note that the property in \Cref{ex:inc} can be recognized by a deterministic Büchi automaton.
We further assume, w.l.o.g., that $\tss$ has a unique initial state.

Following the idea from \Cref{sec:compSafety}, the prophecies should explicitly tell the $\exists$-player which successor to choose. 
The crucial idea underlying our construction is that the prophecies should point to successor states that are safe and, additionally, satisfy that the next visit to an accepting state in $F^\phi$ occurs \emph{as fast as possible} (where the speed refers to the number of steps).
Always choosing such an ``optimal'' successor guarantees that an accepting state is visited infinitely many times. 
A na\"ive idea where prophecies express ``state $s$ is safe, and a visit to an accepting state is possible in $n$ step'' would certainly work (a strategy for the $\exists$-player would always pick a successor where the number of steps is minimal) but cannot be expressed in finitely many prophecies (the number of steps needs to be unbounded). 
The core idea in this section is to express optimality of a state by means of a \emph{relative} compression with possible alternative states.
Perhaps surprisingly, this is possible within the framework of $\omega$-regular prophecies.
For automaton state $q \in Q^\phi$ and system states $x, s \in S$ with $s \in \sucs{x}$ define $\lproRel{q}{x}{s}$ as follows:
 \begin{align*}
    \bigg\{ t &\in \Sigma^\omega \mid \exists t' \in \mathit{Traces}(\tss_{s})\ldot t \otimes t' \in \calL(\aut^\phi_q) \,\land \\
    &\Big[\forall s' \in \sucs{x}\ldot \forall t'' \in \mathit{Traces}(\tss_{s'})\ldot t \otimes t'' \in \calL(\aut^\phi_q)\\
    &\Rightarrow \firstvisit{F^\phi}{\aut^\phi_q}{t \otimes t'}  \leq  \firstvisit{F^\phi}{\aut^\phi_q}{t \otimes t''}\Big] \bigg\}
\end{align*}
Recall that $ \mathit{firstVisit}_{F^\phi}(\aut^\phi_q, t \otimes t')$ denotes the first time point that the unique run on $t \otimes t'$ visits $F^\phi$.
The first line in our new definition is similar to the safety case, i.e., a trace $t$ is in $\lproRel{q}{x}{s}$ if there exists a witness trace $t'$ starting in $s$.
In addition, we require that for any alternative successor $s'$ of $x$ and all traces $t''$ starting in $s'$ that are also winning, the first visit to an accepting state in $F^\phi$ occurs at least as fast on $t'$ as on the alternative trace $t''$.
For a given trace $t \in \lproRel{q}{x}{s}$, choosing $s$ as the successor of $x$ is thus optimal, in the sense that from no other successor of $x$ there is a witness trace that visits an accepting state (strictly) sooner. 

\begin{example}
    We revisit \Cref{ex:inc}.
    With our new construction, a trace $t$ satisfies $t \in \lproRel{q}{x}{s_1}$ (for any automaton state $q$ and any $x \in \{s_1, s_2\}$) if and only if $a \in t(1)$.
    That is, choosing $s_1$ as a successor is optimal iff this correctly predicts the next state of the $\forall$-player, i.e., $a$ holds in the next step on $t$.
    In particular, correctly predicting the move of the $\forall$-player in the current step is better (measured in the number of steps to an accepting state) than misspredicting it now but predicting it correctly sometime in the future. 
    A strategy that follows the recommendations of the new prophecies will \emph{always} (instead of only infinitely many times) correctly predict the next step of the $\forall$-player and is therefore winning. 
\end{example}

\subsection{On $\omega$-Regularity}
It is not immediate that $\lproRel{q}{x}{s}$ is $\omega$-regular.
We begin by showing the following.

\begin{proposition}\label{prop:omegaReagular}
    For any $q \in Q^\phi$ and $s \in \sucs{x}$, $\lproRel{q}{x}{s}$ is $\omega$-regular.
\end{proposition}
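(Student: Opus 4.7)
The plan is to express $\lproRel{q}{x}{s}$ as a QPTL sentence over $\ap$ and then apply \Cref{theo:QPTL} to conclude $\omega$-regularity.

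First, I would encode a candidate path $t' \in \mathit{Traces}(\tss_s)$ by fresh propositional variables $\vec r = (r_{s_?})_{s_? \in S}$ that mark the current state of the path at each position. The predicate ``$\vec r$ encodes a valid path from $s$'' is an LTL formula $\chi^s_{\mathrm{path}}(\vec r)$ built from the transition relation of $\tss$. Since $\aut^\phi$ is deterministic, the unique run of $\aut^\phi_q$ on the zipping $t \otimes L(t')$ can, in the same manner, be tracked by an auxiliary tuple $\vec u = (u_{q_?})_{q_? \in Q^\phi}$ constrained by an LTL formula $\chi_{\mathrm{run}}(\vec r, \vec u)$. Both the Büchi acceptance condition $\chi_{\mathrm{acc}}(\vec u)$ ($F^\phi$ visited infinitely often) and the monotone ``has-been-visited'' predicate $h_{\vec u}$ (true at position $i$ iff $\vec u(j)$ represents a state in $F^\phi$ for some $j \leq i$) are LTL-definable via routine propositional gadgets.

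The QPTL sentence for $\lproRel{q}{x}{s}$ then asserts: there exist $\vec r, \vec u$ encoding a valid path from $s$ with an accepted run of $\aut^\phi_q$, such that for every $s' \in \sucs{x}$ and every pair $\vec r', \vec u'$ similarly encoding a valid accepted path from $s'$, the LTL condition $\ltlG(h_{\vec u'} \to h_{\vec u})$ holds. The crux of the construction is that this $\ltlG$-implication captures the first-visit comparison from the original definition without any explicit counting: because both $h_{\vec u}$ and $h_{\vec u'}$ are monotone, the first index at which each becomes true is exactly the first-visit time of its run, so $\ltlG(h_{\vec u'} \to h_{\vec u})$ is equivalent to ``first-visit of the $\vec u$-run $\leq$ first-visit of the $\vec u'$-run''. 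The conjunction over $\sucs{x}$ is finite, so the whole sentence remains in QPTL, and \Cref{theo:QPTL} delivers $\omega$-regularity.

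The main subtlety to verify carefully is the interaction of the propositional quantifiers with the determinism of $\aut^\phi$: because $\vec u$ (resp.\ $\vec u'$) is uniquely determined by $\vec r$ (resp.\ $\vec r'$) via $\chi_{\mathrm{run}}$, the outer $\forallu \vec r', \vec u'$ effectively ranges only over valid accepted runs, matching the intended quantification in the definition of $\lproRel{q}{x}{s}$. As an aside, an equivalent but more constructive route would build the corresponding nondeterministic Büchi automaton directly via a subset construction on $\tss \times \aut^\phi$, tracking --- in parallel with a nondeterministically guessed witness path from $(s, q)$ --- all product-states reachable from $\sucs{x} \times \{q\}$; the QPTL route above is however cleaner and sidesteps the explicit bookkeeping of which subset-states ``could be on'' an accepted run.
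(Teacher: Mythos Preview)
Your proposal is correct and follows essentially the same route as the paper: encode candidate paths of $\tss$ and runs of $\aut^\phi$ by QPTL-quantified propositional variables, express the first-visit comparison as a temporal constraint between the two encoded runs, and invoke \Cref{theo:QPTL}. The only cosmetic difference is that the paper encodes the comparison directly with a single until-formula $\big(\bigvee_{q' \not\in F^\phi} \hat{p}_{q'}\big)\,\ltlU\,\big(\bigvee_{q' \in F^\phi} p_{q'}\big)$ evaluated at the origin, which sidesteps your auxiliary has-been-visited flags (and avoids the minor wrinkle that $h_{\vec u}$, being a past-dependent predicate, is not itself future-LTL-definable but must be introduced as yet another QPTL-quantified variable).
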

\begin{proof}
    We show that we can express $\lproRel{q}{x}{s}$ as a QPTL formula which already gives the desired $\omega$-regularity by \Cref{theo:QPTL}.
    We make heavy use of propositional quantification in QPTL to encode paths in $\tss$ and corresponding runs of $\aut^\phi$.
    Let $V_\tss \coloneqq \{p_s \mid s \in S\}$ and $V_{\aut^\phi} \coloneqq \{p_q \mid q \in Q^\phi\}$  be propositional QPTL variables.
    We define formula $\mathit{valid}$ as\\
 \scalebox{0.95}{
    \parbox{1\linewidth}{
        \begin{align*}
            \ltlG \bigg[ \mathit{un}(V_\tss) \land \mathit{un}(V_{\aut^\phi})\land \Big( \!\!\!\!\!\!\!\!\!\!\!\!\! \bigvee_{\substack{s \xrightarrow{\tss} s' \\ \delta^\phi(q, (\sigma, L(s))) = q'}} \!\!\!\!\!\!\!\!\!\!\!\!\!  p_s \land \LTLnext p_{s'} \land p_q \land  \LTLnext p_{q'} \land \sigma  \Big)      \bigg]\!\!\!\!\!\!
        \end{align*}%
}}\\
    where $\mathit{un}(A) \coloneqq \bigvee_{a \in A} \big(a \land \bigwedge_{a \neq a' \in A} \neg a'\big)$ asserts that exactly one proposition from $A$ holds. 
    For $\sigma \in \Sigma = 2^\ap$ we write $\sigma$ for the formula $\bigwedge_{a \in \sigma} a \land \bigwedge_{a \not\in \sigma} \neg a$. 
    This formula asserts that the propositions in $V_\tss$ describe a valid path in $\tss$ and the propositions in $V_{\aut^\phi}$ a valid run of $\aut^\phi$ where the first component in $\aut^\phi$ is read as input and the second component is the label of the path described by $V_\tss$.
   
    Similarly, we use propositions $\hat{V}_\tss \coloneqq \{\hat{p}_s \mid s \in S\}$  and $\hat{V}_{\aut^\phi} \coloneqq \{\hat{p}_q \mid q \in Q^\phi\}$ to encode a second path and automaton run (as needed in the definition of $\lproRel{q}{x}{s}$). 
    We define $\widehat{\mathit{valid}}$ analogously to $\mathit{valid}$ but use $\hat{p}_s$ instead of $p_s$ and $\hat{p}_q$ instead of $p_q$.
    Now consider the following QPTL formula $\phi_{q, x, s}$:\\
    \scalebox{0.86}{
        \parbox{1.16\linewidth}{
        \begin{align}
            \!\!\!&\existsu V_\tss \cup V_{\aut^\phi}\ldot \bigg[p_{s}  \land  p_q \land \mathit{valid}  \land  \LTLsquare \LTLdiamond  \bigvee_{q  \in F^\phi} p_q \bigg] \, \land \label{eq:1} \\
            &\quad\Bigg(\forallu \hat{V}_\tss \cup \hat{V}_{\aut^\phi}\ldot \bigg[\Big(\bigvee_{s' \in \sucs{x}} \!\!\!\!\! \hat{p}_{s'} \Big) \land  \hat{p}_q \land \widehat{\mathit{valid}} \land  \LTLsquare \LTLdiamond  \bigvee_{q  \in F^\phi} \hat{p}_q  \bigg] \label{eq:2}\\
            &\quad\quad\rightarrow \Big( \bigvee_{q \not\in F^\phi} \hat{p}_q  \Big) \LTLuntil \Big(\bigvee_{q  \in F^\phi} p_q  \Big) \Bigg) \label{eq:3}
        \end{align}%
}}\\
    This formula closely follows the definition of $\lproRel{q}{x}{s}$.
    We existentially quantify over a path of $\tss$ starting in $s$ and an accompanying run of $\aut^\phi$ (starting in $q$). 
    Taking only (\ref{eq:1}) would result in a direct QPTL formula encoding of the prophecy $\lpro{q}{s}$ from \Cref{sec:compSafety}.
    To encode the optimality, in (\ref{eq:2}) we quantify over an alternative run that starts in some $s' \in \sucs{x}$ and is also accepting. 
    Finally, (\ref{eq:3}) states that the alternative run (described via the $\hat{p}$ propositions) does not visit an accepting state as long as the existentially quantified run has not visited an accepting state. 
    It is easy to see that the QPTL formula $\phi_{q, x, s}$ expresses $\lproRel{q}{x}{s}$.
\end{proof}

\subsection{Correctness Proof}

We show that the resulting set of prophecies is complete.
Let $\proRel{q}{x}{s}$ be a QPTL formula for $\lproRel{q}{x}{s}$ (which exists by \Cref{prop:omegaReagular}).

\begin{restatable}{theorem}{compltnessGeneral}\label{theo:compgeneral}
	Assume $\tss \models \varphi$.
    Define $\Xi = \{ \proRel{q}{x}{s} \mid q \in Q^\phi, s \in \sucs{x}\}$ and let $P = \{ \provRel{q}{x}{s} \mid q \in Q^\phi, s \in \sucs{x}\}$ be a fresh set of atomic propositions. 
    Then $\winss{\veri}{\game{\tss^P}{\varphi^{P, \Xi}}}$.
\end{restatable}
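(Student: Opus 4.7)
The plan is to extend the strategy construction from the proof of \refTheo{compSafety}. Using the notation of \Cref{sec:safetyNotation}, each odd-length history $\vertex$ with $|\vertex| = 2i+1$ determines the current $\exists$-state $x \coloneqq \ssSeq{i-1}$ (when $i \geq 1$) and the current state $\hat{q}$ of $\aut^\phi$ reached on the prefix consumed so far. I would define the strategy $\sigma$ to compute
\begin{align*}
    C \coloneqq \{s' \in \sucs{x} \mid \provRel{\hat{q}}{x}{s'} \in \bSeq{i}\}
\end{align*}
and pick an arbitrary element of $C$ when non-empty, otherwise any successor of $x$. Since $\tss$ has a unique initial state by assumption, the first $\exists$-move is forced. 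As in the safety case, one may assume throughout that $\refu$ sets the prophecy variables consistently with their semantics, because otherwise the premise of $\varphi^{P,\Xi}$ is violated and $\veri$ wins trivially.

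First I would prove, by induction on $i$, that $C$ is non-empty on every play compatible with $\sigma$. The invariant to maintain is that there exists a trace $t' \in \traces{\tss_x}$ with $t[i,\infty] \otimes t' \in \calL(\aut^\phi_{\hat{q}})$, where $t$ denotes the eventual trace of $\refu$; the base case follows from $\tss \models \varphi$. For the step, among all $s' \in \sucs{x}$ and all corresponding witnesses, pick a pair $(s, t^*)$ minimizing $\firstvisit{F^\phi}{\aut^\phi_{\hat{q}}}{t[i,\infty] \otimes t^*}$; by the very definition of $\lproRel{\hat{q}}{x}{s}$ we have $t[i,\infty] \in \lproRel{\hat{q}}{x}{s}$, so the consistency assumption on $\refu$'s prophecy variables places $s$ in $C$.

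The genuinely new content compared to the safety case, and what I expect to be the main obstacle, is arguing that the generated play visits $F^\phi$ infinitely often. For this I would introduce a progress measure $\mu(i) \in \nat$ defined as the minimum of $\firstvisit{F^\phi}{\aut^\phi_{\hat{q}}}{t[i,\infty] \otimes t'}$ taken over all $s' \in \sucs{x}$ and $t' \in \traces{\tss_{s'}}$ with $t[i,\infty] \otimes t' \in \calL(\aut^\phi_{\hat{q}})$; this is finite by the invariant above. The key lemma is that whenever $\mu(i) > 0$ and $\sigma$ selects some $s \in C$, we have $\mu(i+1) = \mu(i) - 1$: one direction follows by shifting the optimal witness from $s$ forward by one step; the other direction holds because any strictly faster witness at step $i+1$ would, by prepending $L(s)$, produce a witness from $s \in \sucs{x}$ at step $i$ with first-visit time below $\mu(i)$, contradicting the optimality encoded in $s \in C$ and $\lproRel{\hat{q}}{x}{s}$. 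Hence $\mu$ strictly decreases to $0$ in finitely many steps, forcing a visit to $F^\phi$; the argument then restarts with a new (necessarily finite) value of $\mu$, yielding infinitely many visits and thus a winning play for $\veri$.
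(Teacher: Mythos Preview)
Your proposal is correct and follows essentially the same approach as the paper: the same strategy (pick any successor whose relative-optimality prophecy variable is set), the same inductive invariant that such a successor always exists (the paper phrases it as $\provRel{\qSeqHat{i}}{\ssSeq{i-1}}{\ssSeq{i}} \in \bSeq{i}$, proved via two lemmas analogous to your base and step), and the same ranking argument via the minimal first-visit distance $\optPath{q}{x}{t}$, which is exactly your $\mu$. Your claim $\mu(i{+}1)=\mu(i)-1$ is slightly sharper than the paper's strict inequality $\optPath{q'}{s}{t[1,\infty]} < \optPath{q}{x}{t}$, but both are true and either suffices; just be careful with the indexing in your stated invariant, since $\hat q$ has already consumed the label of $x=\ssSeq{i-1}$, so the witness should start from a successor of $x$ rather than from $x$ itself.
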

\begin{proofSketch}
    To construct a winning strategy for $\veri$ we use a similar construction as in \Cref{sec:safetyProof}.
    Whenever the $\exists$-player is in a state $x$ and $q$ is the current state of $\aut^\phi$ (reached on the prefix of the game), the strategy checks if any prophecy variable $\provRel{q}{x}{s}$ is set for some $s \in \sucs{x}$ and, if this is the case, selects any such $s$.
    Arguing the correctness of the resulting strategy is more challenging than in the safety case.
    We only sketch the proof here.
    We can assume, that the prophecies are set correctly (so the premise of $\varphi^{P, \Xi}$ holds).
    Under this assumption, we show that there always exists at least one successor state for which the prophecy holds.
    We employ a ranking argument to prove that the resulting play visits $F^\phi$ infinity many times.
    We define a function that maps each $q \in Q^\phi$, $x \in S$, and trace $t$ to an element in $\nat \cup \{\infty\}$ as the shortest number of steps any trace starting in a successor of $x$ needs to take to reach an accepting state. 
    Formally
    \begin{align*}
        \optPath{q}{x}{t} \coloneqq \min\limits_{t' \in \reponse{q}{x}{t}} \firstvisit{F^\phi}{\aut^\phi_q}{t \otimes t'}
    \end{align*}
    where
    \begin{align*}
        \reponse{q}{x}{t} \coloneqq \{ t'  \mid \exists &s \in \sucs{x}\ldot t' \in \traces{\tss_{s}} \; \land \\
        &t \otimes t' \in \calL(\aut^\phi_q) \}.
    \end{align*}
    We can establish that $\mathit{opt}$ serves as a ranking function w.r.t.~our prophecies as follows.
    If $q \not\in F^\phi$, $s \in \sucs{x}$ and $t \in \lproRel{q}{x}{s}$, then $\optPath{q'}{s}{t[1,\infty]} < \optPath{q}{x}{t}$ (where $q' = \delta^\phi(q, (t(0), L(s)))$).
    That is, if a prophecy holds for $s \in \sucs{x}$, then the ranking function is finite and by moving to $s$ the function either decreases \emph{strictly} or an accepting state in $F^\phi$ visited. 
    As $\nat$ is well-founded, this implies that a visit to an accepting state occurs infinity many times. 
    We give a detailed proof in \ifFull{\refApp{sec:appBuchi}}{the full version \cite{fullVersion}}.
\end{proofSketch}

\subsection{Completeness Beyond Deterministic Büchi Automata}\label{sec:beyond}
Up to this point, we assumed that $\aut$ is given as a deterministic Büchi automaton. 
We now sketch how to relax this further.
For this, we assume that $\phi$ is given as a deterministic Rabin automaton (DRA).
In a Rabin automaton, the acceptance condition is given as a set of pairs $(B_1, F_1), \ldots, (B_m, F_m)$. A run $r$ of the automaton is accepting if there exists a $1 \leq i \leq m$ such that $r$ visits states in $B_i$ only finitely many times and states in $F_i$  infinitely many times. As every parity condition is also a Rabin condition, we can translate every LTL formula $\phi$ (or, more generally, any $\omega$-regular property) into an equivalent deterministic Rabin automaton.

\subsubsection{One-pair Rabin Automata}\label{sec:singlePair}

To begin with, we consider the case where $\phi$ can be recognized by a DRA with a \emph{single} pair, i.e., the acceptance condition consists of a set of states $F$ that should be visited infinitely many times and a set of states $B$ that should be visited only finitely many times. 
The previous construction of $\lproRel{q}{x}{s}$ for deterministic Büchi automaton is incomplete as it guarantees that $F$ is visited infinitely many times but does not ensure that $B$ is only visited finitely many times.
We sketch how the definition of $\lproRel{q}{x}{s}$ is modified to support single-pair DRA and refer the reader to \ifFull{\refApp{app:constructionBeyond}}{the full version \cite{fullVersion}} for a formal definition.
We modify $\lproRell{q}{x}{s}$ such that a trace $t$ satisfies $t \in \lproRell{q}{x}{s}$ if either of the following holds:
\begin{itemize}
    \item There exists a trace $t'$ starting in $s$ that is winning (i.e., $t \otimes t' \in \calL(\aut^\phi_q)$), where the unique run \emph{never} visits a state in $B$, and for all other states $s' \in \sucs{x}$ and any winning trace $t''$ starting in $s'$ that also never visits $B$, $t \otimes t'$ visits a state in $F$ at least as fast as $t \otimes t''$, or
    \item There does \emph{not} exist a winning trace that never visits a state in $B$ from any successor of $x$ but there is a trace $t'$ from $s$ that is winning (but visits $B$ at least once), and, for all states $s' \in \sucs{x}$ and any winning trace $t''$ starting in $s'$, the \emph{last} visit to a state in $B$ on $t \otimes t'$ is at least as fast as the last visit on $t \otimes t''$.
\end{itemize}

\noindent
Suppose the $\exists$-player follows the recommendation given by the resulting prophecies (in the sense outlined in the proof sketch of \Cref{theo:compgeneral}).
By doing so, it will construct a witness trace that visits $B$ for the last time as soon as possible and afterward (repeatedly) visits states in $F$ as soon as possible and is therefore winning.

\subsubsection{Beyond One-pair Rabin Automata}

To move from a one-pair DRA to an arbitrary DRA, we simply annotate prophecies with a Rabin pair index. 
Given a DRA with pairs $(B_1, F_1), \ldots, (B_m, F_m)$ we compute the prophecies for a single-pair DRA for each such pair (i.e., the DRA obtained by replacing the set of Rabin pairs with a singleton set). 
For $q \in Q^\phi$, $s \in \sucs{x}$ and $1 \leq i \leq m$ we define $\lproRab{q}{x}{s}{i}$ as the prophecy $\lproRell{q}{x}{s}$ computed on the single-pair Rabin automaton with pair $(B_i, F_i)$ (as in \Cref{sec:singlePair}).
The $\exists$-player can then initially commit to one Rabin pair, say $i$, and afterward, always follow the recommendations of the prophecies where the index matches $i$ (in the sense outlined in the proof sketch of \Cref{theo:compgeneral}).
This strategy constructs a witness trace that is already winning for the DRA fixed to the single pair $(B_i, F_i)$ and therefore also for the general automaton. 
This concludes the proof of \Cref{theo:comp}.

\subsection{On the Number of Prophecies}\label{sec:generllogn}

In our construction, the number of prophecies is linear in the size of the automaton but quadratic in the size of the system. 
More precisely, as we consider prophecies $\lproRel{q}{x}{s}$ where $s \in \sucs{x}$, the number (in the size of the system) is of order $\calO(d \cdot |S|)$ where $d = \max_{s \in S} |\sucs{s}|$ (which is $\mathcal{O}(|S|^2)$ in the worst case).
Using a more efficient binary encoding, we can achieve an exponential decrease in the number of prophecies to $\calO(|S|\log |S|)$ (in the size of the system).
See \ifFull{\refApp{app:nlogn}}{the full version \cite{fullVersion}} for the optimized construction.

\begin{restatable}{proposition}{propnlogn}
	Let $\tss$ be a (finite-state) transition system with state-space $S$ and let $\varphi$ be a $\forall^*\exists^*$ HyperLTL property such that $\tss \models \varphi$.
	There exists a complete set of prophecies $\Xi$ with $|\Xi| \in \calO(|S| \log |S|)$. 
\end{restatable}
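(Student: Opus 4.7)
The plan is to replace the per-successor prophecies of \Cref{sec:compGeneral} by a binary encoding that uses only $\lceil \log |\sucs{x}| \rceil$ prophecies per pair $(q, x)$, rather than one for each successor of $x$. For every $x \in S$ I would fix an arbitrary total order $\prec_x$ on $\sucs{x}$ and a bijective encoding $\mathit{enc}_x : \sucs{x} \to \{0,1\}^{m_x}$, where $m_x = \lceil \log |\sucs{x}| \rceil$. Given $q \in Q^\phi$, $x \in S$ and $t \in \Sigma^\omega$, let $\mathit{OptSuc}(q,x,t) \coloneqq \{ s \in \sucs{x} \mid t \in \lproRel{q}{x}{s}\}$ denote the set of successors flagged by the prophecies of \Cref{sec:compGeneral}, and write $\mathit{sel}(q, x, t)$ for its $\prec_x$-minimum (when this set is non-empty). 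For each triple $(q, x, j)$ with $1 \leq j \leq m_x$, I would then define the binary prophecy
\[
  \lpromod{q}{x}{j} \coloneqq \{ t \in \Sigma^\omega \mid \mathit{OptSuc}(q,x,t) \neq \emptyset \text{ and } \mathit{enc}_x(\mathit{sel}(q,x,t))_j = 1\}.
\]

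Each $\lpromod{q}{x}{j}$ is $\omega$-regular because it can be written as the finite Boolean combination
\[
  \lpromod{q}{x}{j} = \bigcup_{\substack{s \in \sucs{x}\\ \mathit{enc}_x(s)_j = 1}} \Bigl( \lproRel{q}{x}{s} \setminus \bigcup_{s' \prec_x s} \lproRel{q}{x}{s'} \Bigr),
\]
and every $\lproRel{q}{x}{s}$ is $\omega$-regular by \Cref{prop:omegaReagular}. By \Cref{theo:QPTL}, each $\lpromod{q}{x}{j}$ is therefore representable by a QPTL formula.

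For completeness, I would adapt the winning strategy from the proof sketch of \Cref{theo:compgeneral}. When $\veri$ is in system state $x$, the tracked DPA state is $\hat{q}$, and the current valuation of the prophecy variables corresponding to $\lpromod{\hat{q}}{x}{1}, \ldots, \lpromod{\hat{q}}{x}{m_x}$ forms the bit-string $b \in \{0,1\}^{m_x}$, the strategy moves to the unique $s^* \in \sucs{x}$ with $\mathit{enc}_x(s^*) = b$ (when such $s^*$ exists, and arbitrarily otherwise). As before, we may assume that the prophecy variables are set correctly, since otherwise the premise of $\varphi^{P, \Xi}$ is violated and $\veri$ wins the play trivially. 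Under this assumption, the bits decoded by $\veri$ are exactly those of $\mathit{enc}_x(\mathit{sel}(\hat{q}, x, t))$, so $s^* = \mathit{sel}(\hat{q}, x, t) \in \mathit{OptSuc}(\hat{q}, x, t)$. Hence the chosen successor is already recommended by the per-successor prophecy $\lproRel{\hat{q}}{x}{s^*}$ of \Cref{sec:compGeneral}, and the ranking argument of \Cref{theo:compgeneral} transfers verbatim, proving that the strategy wins.

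The final cardinality count is $|\Xi| = \sum_{q \in Q^\phi} \sum_{x \in S} m_x \leq |Q^\phi| \cdot |S| \cdot \lceil \log |S| \rceil$, which is $\calO(|S| \log |S|)$ in the size of the system. The main obstacle will be verifying that $\mathit{OptSuc}(\hat{q}, x, t)$ is non-empty at every $\veri$-move along any play compatible with the strategy; but this is exactly the invariant established in \Cref{theo:compgeneral}, and it carries over to the binary encoding because both families of prophecies are functions of the same universally quantified trace $t$.
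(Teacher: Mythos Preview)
Your proposal is correct and follows essentially the same route as the paper: select the minimal optimal successor, encode its index in binary, and use $\log$-many prophecies per $(q,x)$ pair to communicate the bits, then decode in the strategy and invoke the ranking argument of \Cref{theo:compgeneral}. The only cosmetic differences are that the paper uses a single global encoding of $S$ (padding to a power of two) rather than a per-state encoding of $\sucs{x}$, and that your map $\mathit{enc}_x$ is merely injective, not bijective, when $|\sucs{x}|$ is not a power of two---but you already handle undecodable bit-strings by moving arbitrarily, so this slip is harmless.
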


\section{Prophecy-based Verification and Implementation}\label{sec:eval}

\subsection{Prophecy-based Verification}\label{sec:mc}

\begin{alg}[t!]
\begin{algorithmic}[1]
    \State \textbf{Input}: $\tss = (S, S_0, \varrho, L)$, $\varphi = \forall \pi. \exists \pi'. \phi$
    \State \text{construct} DPA $\aut^\phi = (Q^\phi, q_0^\phi, \delta^\phi, c^\phi)$
    \For{$i = 0 \dots |Q^\phi|\cdot |S|$}
    \For{$X$ \textbf{in} $2^{Q^\phi \times S}_i$}
    
    \State $\Theta \leftarrow \{ \aut_{\lpro{q}{s}}  \mid (q, s) \in X  \}$
    \State \textbf{if} $\winss{\veri}{\game{\tss^P}{\varphi^{P, \Theta}}}$ \textbf{then} \textbf{return} \cmark{} 
    \EndFor
    \EndFor
    \State \textbf{return} \xmark{}
\end{algorithmic}

\caption{Prophecy-based verification for $\forall\exists$ HyperLTL with safety matrix. The algorithm returns \cmark{} if $\tss \models \varphi$ and \xmark{} if $\tss \not\models \varphi$. We write $2^{S \times Q^\phi}_i$ for all subsets of $S \times Q^\phi$ with cardinality $i$.
	Automaton $\aut_{\lpro{q}{s}}$ represents the prophecy $\lpro{q}{s}$.
	This automaton can be computed by constructing the product of $\aut^\phi$ and $\tss$ and is thus linear in the size of $\tss$.
}\label{alg:mc}
\end{alg}

As the completeness result in this paper is constructive and computable, we directly obtain an algorithmic solution to the HyperLTL model checking problem.
We sketch a possible algorithm for the safety case (cf.~\Cref{sec:compSafety}) in \Cref{alg:mc}.
For each number of prophecies $i$ (ranging from $0$ to $|Q^\phi|\cdot |S|$), we consider all possible sets of prophecies $X$ of size $i$, compute an automaton representation $\aut_{\lpro{q}{s}}$ of $\lpro{q}{s}$ for each $(q, s) \in X$, and check if $\winss{\veri}{\game{\tss^P}{\varphi^{P, \Theta}}}$  holds.
By completeness, the prophecies set of size  $i =|Q^\phi|\cdot |S|$ is complete; the algorithm constitutes a sound-and-complete model checking procedure for $\forall\exists$ properties with safety matrix.\footnote{Of course, computing the set of all prophecies identified in \Cref{theo:compSafety} directly (i.e., immediately setting $i = |Q^\phi|\cdot |S|$) also constitutes a complete model checker. Incrementally increasing the size of the prophecy set (as done in \Cref{alg:mc}) often results in successful verification with fewer prophecies and, in consequence, also in faster computation. If, on the other hand, the goal is to \emph{disprove} a property, computing the full set of prophecies directly is, obviously, more efficient.}  

We briefly discuss how we can check if $\winss{\veri}{\game{\tss^P}{\varphi^{P, \Theta}}}$.
We first observe that we can write $\phi^{P, \Xi}$ (the matrix of $\varphi^{P, \Xi}$) as\\
\scalebox{0.98}{\parbox{\columnwidth}{
	\begin{align*}
		\bigg[\ltlG \Big(\bigwedge_{j=1}^n ({p_j}_{\pi_1} \leftrightarrow {\xi_j})\Big) \to \phi\bigg] \equiv \bigg[\Big(\bigvee_{j=1}^n \ltlF({p_j}_{\pi_1} \not\leftrightarrow {\xi_j})\Big) \lor \phi\bigg].\!\!\!\!\!\!
	\end{align*}
}}\\
In \Cref{alg:mc} we compute an NBA representation $\aut \in \Theta$ for each prophecy. 
We can thus construct an NBA for $\phi^{P, \Theta}$ that is at most exponential in the size of the automata in $\Theta$, convert to a DPA, and solve the parity game $\game{\tss^P}{\varphi^{P, \Theta}}$.\footnote{In particular, we get that \Cref{alg:mc} solves the model checking problem in $2$-\texttt{EXPTIME} in the size of the system.
We emphasize that the goal of our completeness proof is not to derive an efficient model checking algorithm.
As we will see in \Cref{sec:experiments}, the actual number of prophecies needed is often much smaller and research into more efficient prophecy constructions is an interesting direction for future work (cf.~\Cref{sec:futureWork}).
}
Alternatively, we can make use of the disjunctive structure of $\phi^{P, \Xi}$ by constructing a DPA for each formula $\ltlF({p_i}_{\pi_1} \not\leftrightarrow {\xi_i})$ \emph{individually} and then solve a generalized parity game (a game where the winning condition is a disjunction of parity objectives) \cite{ChatterjeeHP07}.

\begin{remark}
	\Cref{alg:mc} uses the prophecy construction for HyperLTL formulas with a safety matrix.
	Analogously, we could obtain a complete algorithm for arbitrary $\forall^*\exists^*$ properties by using the more general prophecy construction in \Cref{sec:compGeneral}.
	However, generating automata representations of the prophecies is more challenging (cf.~\Cref{prop:omegaReagular}).
\end{remark}

\subsection{Implementation and Evaluation}\label{sec:experiments}

We have implemented \Cref{alg:mc} (supporting $\forall^*\exists^*$ properties instead of only $\forall\exists$ properties) in a prototype model checker called \texttt{HyPro} (short for \textbf{Hy}perproperty Verification with \textbf{Pro}phecies).
The novelty of \texttt{HyPro} is twofold:
First, it is the first tool to fully automatically synthesize winning strategies for the $\exists$-player (based on the parity-game-based encoding).
And second, \texttt{HyPro} is the first \emph{complete} verification tool for $\forall^*\exists^*$ properties with an LTL safety matrix.

If desired by the user, \texttt{HyPro} applies a bisimulation-based preprocessing of the system.\footnote{For two bisimilar systems $\tss_1$ and $\tss_2$ (see, e.g., \cite{0020348} for a formal definition) it holds that $\winss{\veri}{\game{\tss_1}{\varphi}}$ iff $\winss{\veri}{\game{\tss_2}{\varphi}}$ for every $\varphi$. 
	Therefore, we can apply strategy-based verification to the (in many cases much smaller) bisimulation quotient.
	Note $\winss{\veri}{\game{\tss_1}{\varphi}}$ and $\winss{\veri}{\game{\tss_1}{\varphi}}$ are, in general, not equivalent when $\tss_1$ and $\tss_2$ are only trace equivalent. }
We have disabled this preprocessing for our experiments. 

\begin{table}[t!]
    
    \caption{Evaluation on instances where no prophecies are necessary to verify a property.
    	We give the problem instance, the bitwidth of the variables (Bitwidth), the size of the program's state space after compiling to a transition system (Size), the verification outcome (Res) (\cmark~indicates that the property holds, \xmark~that it is violated), and the overall time taken by \texttt{HyPro} ($t$). Times are given in seconds.}\label{fig:eval1}
    
    \centering
    \def\arraystretch{1.4}
    \setlength\tabcolsep{3.2mm}
    \begin{tabular}{lcccc}
        \toprule
        \textbf{Instance} & \textbf{Bitwidth} & \textbf{Size} & \textbf{Res} & $\boldsymbol{t}$ \\
        \midrule
        
        \multirow{2}{*}{P1 (GNI)} & $1$-bit &  17   & \multirow{2}{*}{\cmark} & 0.1   \\
         & $4$-bit &  129 & & 25.3   \\
         \arrayrulecolor{black!20}\midrule
        P2 (GNI) & $1$-bit &  55 &  \cmark & 0.4  \\
         \arrayrulecolor{black!20}\midrule
        \multirow{2}{*}{P3 (GNI)} & $1$-bit & 20 & \multirow{2}{*}{\cmark} & 0.2   \\
        & $3$-bit & 80  & & 5.1   \\
        \arrayrulecolor{black!20}\midrule
        \multirow{2}{*}{P4 (GNI)} & $1$-bit & 29 &  \multirow{2}{*}{\cmark} & 0.2   \\
         & $3$-bit & 113   & & 9.2   \\
         \arrayrulecolor{black!20}\midrule
        FlipOutput (Sym) & $7$-bit  & 512  & \cmark & 9.6   \\
        FlipConjunction (Sym)  & $2$-bit  & 80 &  \cmark & 1.3  \\
        Switch (Sym) & $3$-bit  & 144  &\cmark & 4.4   \\
        \arrayrulecolor{black}\bottomrule
    \end{tabular}
    
\end{table}

\begin{table}[t!]
    
    \caption{Evaluation on instances where prophecies are needed to verify a property and instances where a property does not hold. 
    	We give the size of the system (Size), the number of prophecies identified using (an optimized version of) \Cref{theo:compSafety} (\#P), the minimal cardinality of a complete prophecy set (MinP),  the (cumulative) size of the automata used to represent this (minimal) complete prophecy set (SizeP), the construction time of the prophecies ($t_P$), the verification outcome (Res), and the overall time taken by \texttt{HyPro} ($t$). Times are given in seconds.}\label{fig:eval2}
    
    \centering
    \def\arraystretch{1.4}
    \setlength\tabcolsep{1.9mm}
    \begin{tabular}{lccccccc}
        \toprule
        \textbf{Instance} & \textbf{Size} & \textbf{\#P }& \textbf{MinP} & \textbf{SizeP} & $\boldsymbol{t}_{P}$ & \textbf{Res} & $\boldsymbol{t}$ \\
        \midrule
        Predict1Small & 4 & 10 & 1 & 4 & 0.1 & \cmark & 0.3   \\
        Predict1Large & 20 & 42 & 1 & 4 & 0.1 & \cmark & 1.2   \\
        Predict2 & 4 & 20 & 3 & 12& 1.0 & \cmark & 4.7   \\
        Example~\ref{sec:runningExample} & 2 & 6 & 1 & 4 & 0.2 & \cmark & 0.6 \\
        Example~\ref{sec:example} & 7 & 14 & 1 & 6 & 0.1 & \cmark & 0.5 \\
        EnforceManyProph & 4 & 16 & 3 & 12 & 0.8  & \cmark & 14.5   \\
        \Cref{ex:ltlex} & 4 & 20 & 1 & 3 & 0.5 & \cmark & 0.8   \\
        \Cref{ex:multipleUniversalTraces} & 2 & 10 & 1 & 2 & 0.1 & \cmark & 0.3   \\
        PredictLiveness & 4 & 20 & 1 & 2 & 0.3 & \cmark & 0.6   \\
        MissingShift & 4 & 5 & - & - & 0.2 & \xmark & 0.3   \\
        ViolationSimple & 4 & 9 & - & - & 0.4 & \xmark & 0.8   \\
        \bottomrule
    \end{tabular}
\end{table}

\subsubsection{Model Checking without Prophecies}

We begin by evaluating \texttt{HyPro} on instances that do not require any prophecies, i.e., instances where $\game{\tss}{\varphi}$ is won by $\veri$ and so \Cref{alg:mc} already terminates for $i = 0$. 
Our benchmarks consist of information-flow policies in the form of GNI and symmetry constraints (i.e., properties that require that for every trace, there exists one with the opposite outcome) on boolean programs (including those from \cite{BeutnerF21}) with varying bitwidths.

We give the verification results in \Cref{fig:eval1}.
Our results confirm that our direct parity-game-based encoding can successfully synthesize strategies for the $\exists$-player  in systems of medium size.\footnote{Note that the size column in \Cref{fig:eval1} gives the size of an individual system. If we, e.g., verify GNI,  the size of the resulting parity game is cubic in the size of the system (as GNI involves three trace quantifiers).}
If we enable \texttt{HyPro}`s bisimulation-based preprocessing, we can verify properties of significantly larger size, as, with increasing bitwidths, the bisimulation quotient stays small. 
With preprocessing enabled, \texttt{HyPro} can successfully verify systems with up to 55k states within a few seconds. 

We can contrast \texttt{HyPro} with the approach implemented in \texttt{MCHyper} \cite{CoenenFST19,FinkbeinerRS15}.
\texttt{MCHyper} requires an explicit \emph{user-provided} strategy for the $\exists$-player, which reduces hyperproperty verification to the verification of a trace property.
Obviously, strategy synthesis (as done by \texttt{HyPro}) operators on a different scale than strategy verification (as done by \texttt{MCHyper}).
This motivates the coexistence of both tools:
A user can either favor a fully automatic verification using \texttt{HyPro} or provide an explicit strategy and make use of the industrial-strength offered \texttt{MCHyper}.
In the former, the tedious, error-prone, and time-consuming task of writing an explicit strategy by hand is avoided, whereas the latter supports larger systems. 

\subsubsection{Model Checking with Prophecies}

As a second benchmark, we compiled a collection of very small transition systems that cannot be verified without the use of prophecies.
Our benchmarks include programs where non-deterministic choices need to be resolved before the information needed is provided or where predictions on future behavior are demanded.
The results are given in \Cref{fig:eval2}.
None of the existing solvers \cite{FinkbeinerRS15,CoenenFST19,BeutnerF21} can verify any of these instances.
Moreover, based on our completeness result, \texttt{HyPro} is the first tool that can prove that a property does \emph{not} hold.

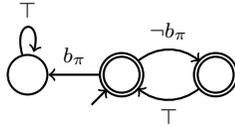
\begin{wrapfigure}{r}{0.38\columnwidth}
	\vspace{-2mm}
	\centering
	\begin{tikzpicture}
		\node[circle, draw,double, thick,minimum size=15pt] at (0,0) (n0) {};
		\node[circle, draw, thick,minimum size=15pt] at (-1.25,0) (n1) {};
		\node[circle, draw,double, thick,minimum size=15pt] at (1.25,0) (n2) {};
		
		\draw[->, thick] (n0) to[out=40,in=140] node[above] {\footnotesize$\neg b_\pi$} (n2);
		\draw[->, thick] (n2) to[out=220,in=-40] node[below] {\footnotesize$\top$} (n0);
		
		\draw[->, thick] (n0) to node[above] {\footnotesize$b_\pi$} (n1);
		
		\path (n1) edge [loop above, thick] node {\footnotesize$\top$} (1);
		
		\draw[->, thick] (n0)+(-0.4, -0.4) -- (n0);
	\end{tikzpicture}

	\caption{Prophecy automaton constructed by \texttt{HyPro} for \Cref{ex:ltlex}.}\label{fig:prop}
\end{wrapfigure}

Even though the prophecies computed by \texttt{HyPro} are only guaranteed to be complete for properties with safety matrix, the construction empirically also works for properties beyond safety (such as  \Cref{ex:ltlex}).
For \Cref{ex:ltlex}, \texttt{HyPro} computes the prophecy depicted as an NBA in \Cref{fig:prop}, which precisely captures the information needed by the $\exists$-player, i.e., it determines if $b_\pi$ never holds at an even position.
Note again that this prophecy is not LTL-definable.  

In \Cref{fig:eval2}, we observe that the actual number of prophecies needed to verify a property (MinP) is often much smaller than the overall number of prophecies (\#P).
This observation is encouraging, as it indicates that the information needed by the $\exists$-player is concise, i.e., expressible with few automata.

We remark that the direct prophecy construction in \Cref{alg:mc} (and implemented in \texttt{HyPro}) is, obviously, limited to very small systems as the number of prophecies scales linearly in the size of the system (also see \Cref{sec:futureWork}). 

\section{Discussion}\label{sec:conclusionFuture}

\subsection{Further Applications of Prophecy-based Verification}

The primary motivation for our work is rooted in the need for efficient and accurate (meaning complete) verification methods for hyperproperties with quantifier alternation. 
Nevertheless, prophecies for hyperproperty verification are also useful beyond just constituting a complete proof method. 
We highlight two such cases in the context of explainable verification results and hyperproperty verification on software.

\subsubsection{Prophecies for Explainable Verification}\label{sec:explain}

Ideally, a verification tool should not only verify that a property holds but convince the user (of, e.g., a security-critical library) \emph{why} this is the case \cite{ChocklerHK08}. 
Certifying verification results of safety trace properties (or $k$-safety hyperproperties) are well understood as the verification tool can provide an (inductive) invariant on the system. 
Understanding verification outcomes in the presence of quantifier alteration, such as for GNI, is much more challenging.
Prophecy-based verification naturally provides a user-understandable certificate.
If a property is verified, a user is provided with (1) an explicit strategy $\sigma$ for the $\exists$-player, (2) an invariant on the plays produced by $\sigma$, and (3) a finite set of prophecies needed by $\sigma$.
This triple allows for a deep investigation into the system as the prophecies directly indicate which future decisions are relevant.
The user can even interactively step through the strategy and prophecies and explore the trace constructed by the strategy. 

\subsubsection{Verification of Infinite-state Systems}

Prophecies are also useful in the context of hyperproperty verification on infinite-state systems. 
For such systems, complementation-based verification is, unsurprisingly, impossible.
In contrast, strategy-based verification is applicable (see, e.g., \cite{BeutnerF22CAV}).
Prophecies can strengthen the $\exists$-player and result in more successful verification instances.

\subsection{Future Work}\label{sec:futureWork}

While \texttt{HyPro} demonstrates that an explicit prophecy construction is applicable in practice, verification is, obviously, restricted to very small systems.
In fact, the direct prophecy-based construction implemented in \texttt{HyPro} is, most likely, easily outperform by complementation-based verification approaches (which are currently not implemented in any tool).
This leaves the construction of more efficient methods to synthesize relevant prophecies as a particularly interesting direction for future work. 
A natural idea would be to, instead of using a \emph{fixed} prophecy construction (as in \Cref{alg:mc}), focus on counter-example guided approaches that iteratively add prophecies by analyzing a spoiling strategy for the $\forall$-player in $\game{\tss^P}{\varphi^{P, \Xi}}$.
Existing techniques for LTL learning \cite{NeiderG18,LemieuxPB15}, or automaton learning \cite{Angluin87,FinkbeinerHT19,FarzanCCTW08} can be used to identify prophecies that distinguish traces on which different future behavior by the $\exists$-player is necessary.
This would exhibit much of the benefits of prophecy-based verification (in particular, the explainability of verification results) while scaling well in the size of the system. 
As we establish with this paper, a well-chosen prophecy generation (that, in the limit, enumerates enough prophecies) would constitute a complete verification method.
Moreover, as demonstrated in \Cref{fig:eval2}, the actual number of prophecies needed in practice is often small.

\section{Conclusion}

In this paper, we have provided a formal footing for the use of prophecy variables for hyperproperty verification by giving a precise characterization of their expressive power.
The main result is that prophecies turn strategy-based verification into a complete verification method for arbitrary $\forall^*\exists^*$ properties.
Our completeness proof is informative in the sense that it provides an explicit, effective, and finite-state-representable ($\omega$-regular) construction of the prophecies.
This new foundation asks for further research to devise prophecy-based (complete) verification methods that scale to larger systems. 

\section*{Acknowledgments}
This work was partially supported by the German Research Foundation (DFG) in project 389792660 (\emph{Foundations of Perspicuous Software Systems}, TRR 248).
R.~Beutner carried out this work as a member of the Saarbrücken Graduate School of Computer Science.

\iffullversion
\else
\balance
\fi

\bibliographystyle{IEEEtran}
{
\interlinepenalty=10000 
\bibliography{references}

\begin{thebibliography}{10}
\providecommand{\url}[1]{#1}
\csname url@samestyle\endcsname
\providecommand{\newblock}{\relax}
\providecommand{\bibinfo}[2]{#2}
\providecommand{\BIBentrySTDinterwordspacing}{\spaceskip=0pt\relax}
\providecommand{\BIBentryALTinterwordstretchfactor}{4}
\providecommand{\BIBentryALTinterwordspacing}{\spaceskip=\fontdimen2\font plus
\BIBentryALTinterwordstretchfactor\fontdimen3\font minus
  \fontdimen4\font\relax}
\providecommand{\BIBforeignlanguage}[2]{{%
\expandafter\ifx\csname l@#1\endcsname\relax
\typeout{** WARNING: IEEEtran.bst: No hyphenation pattern has been}%
\typeout{** loaded for the language `#1'. Using the pattern for}%
\typeout{** the default language instead.}%
\else
\language=\csname l@#1\endcsname
\fi
#2}}
\providecommand{\BIBdecl}{\relax}
\BIBdecl

\bibitem{ClarksonS08}
\BIBentryALTinterwordspacing
M.~R. Clarkson and F.~B. Schneider, ``Hyperproperties,'' in \emph{{IEEE}
  Computer Security Foundations Symposium, {CSF} 2008}.\hskip 1em plus 0.5em
  minus 0.4em\relax {IEEE} Computer Society, 2008. [Online]. Available:
  \url{https://doi.org/10.1109/CSF.2008.7}
\BIBentrySTDinterwordspacing

\bibitem{ClarksonFKMRS14}
\BIBentryALTinterwordspacing
M.~R. Clarkson, B.~Finkbeiner, M.~Koleini, K.~K. Micinski, M.~N. Rabe, and
  C.~S{\'{a}}nchez, ``Temporal logics for hyperproperties,'' in
  \emph{International Conference on Principles of Security and Trust, {POST}
  2014}, ser. Lecture Notes in Computer Science, vol. 8414.\hskip 1em plus
  0.5em minus 0.4em\relax Springer, 2014. [Online]. Available:
  \url{https://doi.org/10.1007/978-3-642-54792-8\_15}
\BIBentrySTDinterwordspacing

\bibitem{RoscoeWW96}
\BIBentryALTinterwordspacing
A.~W. Roscoe, J.~Woodcock, and L.~Wulf, ``Non-interference through
  determinism,'' \emph{J. Comput. Secur.}, vol.~4, no.~1, 1996. [Online].
  Available: \url{https://doi.org/10.3233/JCS-1996-4103}
\BIBentrySTDinterwordspacing

\bibitem{McCullough88}
\BIBentryALTinterwordspacing
D.~McCullough, ``Noninterference and the composability of security
  properties,'' in \emph{{IEEE} Symposium on Security and Privacy, {SP}
  1988}.\hskip 1em plus 0.5em minus 0.4em\relax {IEEE} Computer Society, 1988.
  [Online]. Available: \url{https://doi.org/10.1109/SECPRI.1988.8110}
\BIBentrySTDinterwordspacing

\bibitem{BartheDR11}
\BIBentryALTinterwordspacing
G.~Barthe, P.~R. D'Argenio, and T.~Rezk, ``Secure information flow by
  self-composition,'' \emph{Math. Struct. Comput. Sci.}, vol.~21, no.~6, 2011.
  [Online]. Available: \url{https://doi.org/10.1017/S0960129511000193}
\BIBentrySTDinterwordspacing

\bibitem{FinkbeinerRS15}
\BIBentryALTinterwordspacing
B.~Finkbeiner, M.~N. Rabe, and C.~S{\'{a}}nchez, ``Algorithms for model
  checking {HyperLTL} and {HyperCTL}{$^*$},'' in \emph{International Conference
  on Computer Aided Verification, {CAV} 2015}, ser. Lecture Notes in Computer
  Science, vol. 9206.\hskip 1em plus 0.5em minus 0.4em\relax Springer, 2015.
  [Online]. Available: \url{https://doi.org/10.1007/978-3-319-21690-4\_3}
\BIBentrySTDinterwordspacing

\bibitem{CoenenFST19}
\BIBentryALTinterwordspacing
N.~Coenen, B.~Finkbeiner, C.~S{\'{a}}nchez, and L.~Tentrup, ``Verifying
  hyperliveness,'' in \emph{International Conference on Computer Aided
  Verification, {CAV} 2019}, ser. Lecture Notes in Computer Science, vol.
  11561.\hskip 1em plus 0.5em minus 0.4em\relax Springer, 2019. [Online].
  Available: \url{https://doi.org/10.1007/978-3-030-25540-4\_7}
\BIBentrySTDinterwordspacing

\bibitem{AbadiL91}
\BIBentryALTinterwordspacing
M.~Abadi and L.~Lamport, ``The existence of refinement mappings,'' \emph{Theor.
  Comput. Sci.}, vol.~82, no.~2, 1991. [Online]. Available:
  \url{https://doi.org/10.1016/0304-3975(91)90224-P}
\BIBentrySTDinterwordspacing

\bibitem{BartheCK13}
\BIBentryALTinterwordspacing
G.~Barthe, J.~M. Crespo, and C.~Kunz, ``Beyond 2-safety: Asymmetric product
  programs for relational program verification,'' in \emph{International
  Symposium on Logical Foundations of Computer Science, {LFCS} 2013}, ser.
  Lecture Notes in Computer Science, vol. 7734.\hskip 1em plus 0.5em minus
  0.4em\relax Springer, 2013. [Online]. Available:
  \url{https://doi.org/10.1007/978-3-642-35722-0\_3}
\BIBentrySTDinterwordspacing

\bibitem{UnnoTK21}
\BIBentryALTinterwordspacing
H.~Unno, T.~Terauchi, and E.~Koskinen, ``Constraint-based relational
  verification,'' in \emph{International Conference on Computer Aided
  Verification, {CAV} 2021}, ser. Lecture Notes in Computer Science, vol.
  12759.\hskip 1em plus 0.5em minus 0.4em\relax Springer, 2021. [Online].
  Available: \url{https://doi.org/10.1007/978-3-030-81685-8\_35}
\BIBentrySTDinterwordspacing

\bibitem{LamportS21}
\BIBentryALTinterwordspacing
L.~Lamport and F.~B. Schneider, ``Verifying hyperproperties with {TLA},'' in
  \emph{{IEEE} Computer Security Foundations Symposium, {CSF} 2021}.\hskip 1em
  plus 0.5em minus 0.4em\relax {IEEE}, 2021. [Online]. Available:
  \url{https://doi.org/10.1109/CSF51468.2021.00012}
\BIBentrySTDinterwordspacing

\bibitem{HsuSB21}
\BIBentryALTinterwordspacing
T.~Hsu, C.~S{\'{a}}nchez, and B.~Bonakdarpour, ``Bounded model checking for
  hyperproperties,'' in \emph{International Conference on Tools and Algorithms
  for the Construction and Analysis of Systems, {TACAS} 2021}, ser. Lecture
  Notes in Computer Science, vol. 12651.\hskip 1em plus 0.5em minus 0.4em\relax
  Springer, 2021. [Online]. Available:
  \url{https://doi.org/10.1007/978-3-030-72016-2\_6}
\BIBentrySTDinterwordspacing

\bibitem{BeutnerF22CAV}
R.~Beutner and B.~Finkbeiner, ``Software verification of hyperproperties beyond
  {$k$}-safety,'' in \emph{International Conference on Computer Aided
  Verification, {CAV} 2022}, ser. Lecture Notes in Computer Science.\hskip 1em
  plus 0.5em minus 0.4em\relax Springer, 2022.

\bibitem{CookKP15}
\BIBentryALTinterwordspacing
B.~Cook, H.~Khlaaf, and N.~Piterman, ``On automation of {CTL}{$^*$}
  verification for infinite-state systems,'' in \emph{International Conference
  on Computer Aided Verification, {CAV} 2015}, ser. Lecture Notes in Computer
  Science, vol. 9206.\hskip 1em plus 0.5em minus 0.4em\relax Springer, 2015.
  [Online]. Available: \url{https://doi.org/10.1007/978-3-319-21690-4\_2}
\BIBentrySTDinterwordspacing

\bibitem{LynchV95}
\BIBentryALTinterwordspacing
N.~A. Lynch and F.~W. Vaandrager, ``Forward and backward simulations: I.
  untimed systems,'' \emph{Inf. Comput.}, vol. 121, no.~2, 1995. [Online].
  Available: \url{https://doi.org/10.1006/inco.1995.1134}
\BIBentrySTDinterwordspacing

\bibitem{JungLPRTDJ20}
\BIBentryALTinterwordspacing
R.~Jung, R.~Lepigre, G.~Parthasarathy, M.~Rapoport, A.~Timany, D.~Dreyer, and
  B.~Jacobs, ``The future is ours: prophecy variables in separation logic,''
  \emph{Proc. {ACM} Program. Lang.}, vol.~4, no. {POPL}, 2020. [Online].
  Available: \url{https://doi.org/10.1145/3371113}
\BIBentrySTDinterwordspacing

\bibitem{ZhangFFSL12}
\BIBentryALTinterwordspacing
Z.~Zhang, X.~Feng, M.~Fu, Z.~Shao, and Y.~Li, ``A structural approach to
  prophecy variables,'' in \emph{Annual Conference on Theory and Applications
  of Models of Computation, {TAMC} 2012}, ser. Lecture Notes in Computer
  Science, vol. 7287.\hskip 1em plus 0.5em minus 0.4em\relax Springer, 2012.
  [Online]. Available: \url{https://doi.org/10.1007/978-3-642-29952-0\_12}
\BIBentrySTDinterwordspacing

\bibitem{Vafeiadis08}
V.~Vafeiadis, ``Modular fine-grained concurrency verification,'' Ph.D.
  dissertation, University of Cambridge, {UK}, 2008.

\bibitem{PadonHMPSS21}
\BIBentryALTinterwordspacing
O.~Padon, J.~Hoenicke, K.~L. McMillan, A.~Podelski, M.~Sagiv, and S.~Shoham,
  ``Temporal prophecy for proving temporal properties of infinite-state
  systems,'' \emph{Formal Methods Syst. Des.}, vol.~57, no.~2, 2021. [Online].
  Available: \url{https://doi.org/10.1007/s10703-021-00377-1}
\BIBentrySTDinterwordspacing

\bibitem{CookK11}
\BIBentryALTinterwordspacing
B.~Cook and E.~Koskinen, ``Making prophecies with decision predicates,'' in
  \emph{{ACM} {SIGPLAN} Symposium on Principles of Programming Languages,
  {POPL} 2011}.\hskip 1em plus 0.5em minus 0.4em\relax {ACM}, 2011. [Online].
  Available: \url{https://doi.org/10.1145/1926385.1926431}
\BIBentrySTDinterwordspacing

\bibitem{0020348}
C.~Baier and J.~Katoen, \emph{Principles of model checking}.\hskip 1em plus
  0.5em minus 0.4em\relax {MIT} Press, 2008.

\bibitem{Safra88}
\BIBentryALTinterwordspacing
S.~Safra, ``On the complexity of omega-automata,'' in \emph{Annual Symposium on
  Foundations of Computer Science, {FOCS} 1988}.\hskip 1em plus 0.5em minus
  0.4em\relax {IEEE} Computer Society, 1988. [Online]. Available:
  \url{https://doi.org/10.1109/SFCS.1988.21948}
\BIBentrySTDinterwordspacing

\bibitem{Piterman07}
\BIBentryALTinterwordspacing
N.~Piterman, ``From nondeterministic b{\"{u}}chi and streett automata to
  deterministic parity automata,'' \emph{Log. Methods Comput. Sci.}, vol.~3,
  no.~3, 2007. [Online]. Available:
  \url{https://doi.org/10.2168/LMCS-3(3:5)2007}
\BIBentrySTDinterwordspacing

\bibitem{AlpernS85}
\BIBentryALTinterwordspacing
B.~Alpern and F.~B. Schneider, ``Defining liveness,'' \emph{Inf. Process.
  Lett.}, vol.~21, no.~4, 1985. [Online]. Available:
  \url{https://doi.org/10.1016/0020-0190(85)90056-0}
\BIBentrySTDinterwordspacing

\bibitem{KupfermanV99}
\BIBentryALTinterwordspacing
O.~Kupferman and M.~Y. Vardi, ``Model checking of safety properties,'' in
  \emph{International Conference on Computer Aided Verification, {CAV} 1999},
  ser. Lecture Notes in Computer Science, vol. 1633.\hskip 1em plus 0.5em minus
  0.4em\relax Springer, 1999. [Online]. Available:
  \url{https://doi.org/10.1007/3-540-48683-6_17}
\BIBentrySTDinterwordspacing

\bibitem{martin1975borel}
D.~A. Martin, ``Borel determinacy,'' \emph{Annals of Mathematics}, vol. 102,
  no.~2, 1975.

\bibitem{DiekertG08}
V.~Diekert and P.~Gastin, ``First-order definable languages,'' in \emph{Logic
  and Automata: History and Perspectives}, ser. Texts in Logic and Games,
  vol.~2.\hskip 1em plus 0.5em minus 0.4em\relax Amsterdam University Press,
  2008.

\bibitem{sistla1983theoretical}
A.~P. Sistla, \emph{Theoretical issues in the design and verification of
  distributed systems}.\hskip 1em plus 0.5em minus 0.4em\relax Harvard
  University, 1983.

\bibitem{VardiW94}
\BIBentryALTinterwordspacing
M.~Y. Vardi and P.~Wolper, ``Reasoning about infinite computations,''
  \emph{Inf. Comput.}, vol. 115, no.~1, 1994. [Online]. Available:
  \url{https://doi.org/10.1006/inco.1994.1092}
\BIBentrySTDinterwordspacing

\bibitem{BeutnerCFHK22}
\BIBentryALTinterwordspacing
R.~Beutner, D.~Carral, B.~Finkbeiner, J.~Hofmann, and M.~Krötzsch, ``Deciding
  hyperproperties combined with functional specifications,'' in \emph{Annual
  {ACM/IEEE} Symposium on Logic in Computer Science, {LICS} 2022}.\hskip 1em
  plus 0.5em minus 0.4em\relax {ACM}, 2022. [Online]. Available:
  \url{https://doi.org/10.1145/3531130.3533369}
\BIBentrySTDinterwordspacing

\bibitem{ChatterjeeHP07}
\BIBentryALTinterwordspacing
K.~Chatterjee, T.~A. Henzinger, and N.~Piterman, ``Generalized parity games,''
  in \emph{International Conference on Foundations of Software Science and
  Computational Structures, {FOSSACS} 2007}, ser. Lecture Notes in Computer
  Science, vol. 4423.\hskip 1em plus 0.5em minus 0.4em\relax Springer, 2007.
  [Online]. Available: \url{https://doi.org/10.1007/978-3-540-71389-0\_12}
\BIBentrySTDinterwordspacing

\bibitem{BeutnerF21}
\BIBentryALTinterwordspacing
R.~Beutner and B.~Finkbeiner, ``A temporal logic for strategic
  hyperproperties,'' in \emph{International Conference on Concurrency Theory,
  {CONCUR} 2021}, ser. LIPIcs, vol. 203.\hskip 1em plus 0.5em minus 0.4em\relax
  Dagstuhl, 2021. [Online]. Available:
  \url{https://doi.org/10.4230/LIPIcs.CONCUR.2021.24}
\BIBentrySTDinterwordspacing

\bibitem{ChocklerHK08}
\BIBentryALTinterwordspacing
H.~Chockler, J.~Y. Halpern, and O.~Kupferman, ``What causes a system to satisfy
  a specification?'' \emph{{ACM} Trans. Comput. Log.}, vol.~9, no.~3, 2008.
  [Online]. Available: \url{https://doi.org/10.1145/1352582.1352588}
\BIBentrySTDinterwordspacing

\bibitem{NeiderG18}
\BIBentryALTinterwordspacing
D.~Neider and I.~Gavran, ``Learning linear temporal properties,'' in
  \emph{Formal Methods in Computer Aided Design, {FMCAD} 2018}, N.~Bj{\o}rner
  and A.~Gurfinkel, Eds.\hskip 1em plus 0.5em minus 0.4em\relax {IEEE}, 2018.
  [Online]. Available: \url{https://doi.org/10.23919/FMCAD.2018.8603016}
\BIBentrySTDinterwordspacing

\bibitem{LemieuxPB15}
\BIBentryALTinterwordspacing
C.~Lemieux, D.~Park, and I.~Beschastnikh, ``General {LTL} specification mining
  {(T)},'' in \emph{{IEEE/ACM} International Conference on Automated Software
  Engineering, {ASE} 2015}.\hskip 1em plus 0.5em minus 0.4em\relax {IEEE}
  Computer Society, 2015. [Online]. Available:
  \url{https://doi.org/10.1109/ASE.2015.71}
\BIBentrySTDinterwordspacing

\bibitem{Angluin87}
\BIBentryALTinterwordspacing
D.~Angluin, ``Learning regular sets from queries and counterexamples,''
  \emph{Inf. Comput.}, vol.~75, no.~2, 1987. [Online]. Available:
  \url{https://doi.org/10.1016/0890-5401(87)90052-6}
\BIBentrySTDinterwordspacing

\bibitem{FinkbeinerHT19}
\BIBentryALTinterwordspacing
B.~Finkbeiner, L.~Haas, and H.~Torfah, ``Canonical representations of k-safety
  hyperproperties,'' in \emph{{IEEE} Computer Security Foundations Symposium,
  {CSF} 2019}.\hskip 1em plus 0.5em minus 0.4em\relax {IEEE}, 2019. [Online].
  Available: \url{https://doi.org/10.1109/CSF.2019.00009}
\BIBentrySTDinterwordspacing

\bibitem{FarzanCCTW08}
\BIBentryALTinterwordspacing
A.~Farzan, Y.~Chen, E.~M. Clarke, Y.~Tsay, and B.~Wang, ``Extending automated
  compositional verification to the full class of omega-regular languages,'' in
  \emph{International Conference on Tools and Algorithms for the Construction
  and Analysis of Systems, {TACAS} 2008}, vol. 4963.\hskip 1em plus 0.5em minus
  0.4em\relax Springer, 2008. [Online]. Available:
  \url{https://doi.org/10.1007/978-3-540-78800-3\_2}
\BIBentrySTDinterwordspacing

\end{thebibliography}
}

\iffullversion

\appendices

\renewcommand{\theequation}{\thesection.\arabic{equation}}

\section{Soundness of Strategy-based Verification}\label{sec:appSoundness}

This section is devoted to a proof of \Cref{theo:soundness}:

\soundnessTheo*
\begin{proof}
    Assume that $\sigma$ is a \emph{positional} strategy for $\veri$ that wins from every state in $V_\mathit{init}$.
    We show $\tss \models \varphi$.
    Let $t_1, \ldots, t_k \in \traces{\tss}$ be chosen arbitrary (as in the universal quantification in $\varphi$) and let $p_1, \ldots, p_k \in \paths{\tss}$ be paths that generate those traces (i.e., $L(p_i) = t_i$).
    Define $\vertex_0 \coloneqq ( p_1(0), \ldots, p_k(0))$ which, by definition, is a state in $V_{\mathit{init}}$.
    
    We incrementally construct paths $p_{k+1}, \ldots, p_{k+l}$ (and thereby also traces) for the existentially quantified trace variables as follows.
    We initialize $p_{k+1}, \ldots, p_{k+l} = \epsilon$ (where $\epsilon$ is the empty word).
    For each timestep $x = 0, 1, \ldots$ let 
    \begin{align*}
        \sigma(\vertex_x) = \big\langle(s_1, \ldots, s_k, s_{k+1}, \ldots, s_{k+l}), q, \forall\big\rangle.
    \end{align*}%
    Define for every $k + 1 \leq i \leq k+l$, we extend the finite path $p_i$ by one step by setting $p_i(x) \coloneqq s_i$, i.e., the successor state chosen by $\sigma$.
    We then set
    \begin{align*}
        \vertex_{x+1} \coloneqq \big\langle(p_1(x+1), \ldots, p_k(x+1), p_{k+1}, \ldots,  p_{k+l}), q', \exists\big\rangle
    \end{align*}%
    where $q' = \delta^\phi\big(q, (L(s_1), \ldots, L(s_{k+l})\big)$.
    It is easy to see that for $1 \leq i \leq k$, $s_i \tssto p_i(x+1)$, so $\vertex_{x+1}$ is a successor of $\sigma(\vertex_x)$ in $\game{\tss}{\varphi}$.
    We then repeat with $x \coloneqq x + 1$.
    
    For the constructed runs (which in the limit are infinite), define $t_i \coloneqq L(p_i)$ for $k+1 \leq i \leq k+l$.
    It is easy to see that $[\pi_1 \mapsto t_1, \ldots, \pi_{k+l} \mapsto t_{k+l}] \models \phi$ as in $\game{\tss}{\varphi}$ the automaton $\aut^\phi$ tracks the acceptance of $\phi$ and $\sigma$ is winning. 
    So $t_{k+1}, \ldots, t_{k+l}$ serve as witness traces for $t_1, \ldots, t_k$ and $\tss \models \varphi$ as required.
\end{proof}

\section{Completeness For Safety Matrix}\label{sec:appSafetyCase}

This section is devoted to a detailed proof of \Cref{theo:compSafety}.
That is, we show that the prophecies constructed in \Cref{sec:compSafety} for the case where the matrix is recognizable by a deterministic safety automaton is complete.

Recall that $\aut^\phi = (Q^\phi, q_0^\phi, \delta^\phi, B^\phi)$ is a deterministic safety automaton for $\phi$.
And let $\tss = (S,S_0, \rho, L)$.
For reading convince we recall the definition of our prophecies. For $q \in Q^\phi, s \in S$ we have defined
\begin{align*}
	\lpro{q}{s}\coloneqq \{ t \in \Sigma^\omega \mid \exists t' \in \mathit{Traces}(\tss_s)\ldot  t \otimes t' \in \calL(\aut^\phi_q)  \}.
\end{align*}

\compltnessSafety*

\subsection{Notation}\label{sec:notation}

To make this section as self-contained as possible (and avoid the reader needing to switch between appendix and main body) we repeat some importation definitions from the main part (which agrees with the notation in \Cref{sec:safetyNotation}). 
By definition $\tss^P$, nodes in $\game{\tss^P}{\varphi^{P, \Xi}}$ either have the form $(s, A)$, where $s \in S$ and $A \subseteq P$ or of the form $\gamenode{(s, A), (s', A'), q, \flat}$ where $s, s' \in S$, $A, A' \subseteq P$ and $\flat \in \{\forall, \exists\}$. 
Here $q$ is an automaton state in a DPA tracking
\begin{align}\label{eq:appLTL}
	\phi^{P, \Xi} \coloneqq \ltlG \Big(\!\!\! \bigwedge_{q \in Q^\phi, s \in S} \!\!\! (\prov{q}{s})_{\pi}  \leftrightarrow \pro{q}{s}\Big)  \rightarrow \phi.
\end{align}%
It is easy to see that in states of the form $\gamenode{(s, A), (s', A'), q, \flat}$ the $A'$ component (stemming from the definition of $\tss^P$) is irrelevant as in (\ref{eq:appLTL}) the proposition in $P$ are only referred to on trace variable $\pi$. 
We therefore replace a node $\gamenode{(s, A), (s, A'), q, \flat}$ with $\gamenode{(s, A), s', q, \flat}$.
With this conceptual simplification, any finite or infinite play $\vertex = \vertex(0)\vertex(1) \cdots \in V^* \cup V^\omega$ in $\game{\tss^P}{\varphi^{P, \Xi}}$ starting in $V_\mathit{init}$ has the form 
\begin{align}\label{eq:appexamplePath}
    \begin{split}
        &(s_0, A_0) \to \gamenode{(s_0, A_0), s_0', q_0, \forall} \to \gamenode{(s_1, A_1), s_0', q_1, \exists}\\
        &\to \gamenode{(s_1, A_1), s_1', q_1, \forall} \to \gamenode{(s_2, A_2), s_1', q_2, \forall} \to \cdots
    \end{split}
\end{align}
We can extract from $\vertex$ both paths through $\tss$ and the prophecies set at each step.
Define $\sSeq{0}\ssSeq{1} \cdots$ to be the path of the $\forall$-copy ($s_0s_1\cdots$ in (\ref{eq:appexamplePath})), $\bSeq{0}\bSeq{1}\cdots$ the sequence of prophecies chosen ($A_0A_1 \cdots$ in (\ref{eq:appexamplePath})) and $\ssSeq{0}\ssSeq{1}\cdots$ the path for the $\exists$-copy ($s_0's_1'\cdots$ in (\ref{eq:appexamplePath})).
Note that for any finite $\vertex$ of odd length (i.e., $|\vertex| = 2i + 1$), $\sSeq{k}, \bSeq{k}$ are defined for all $0 \leq k \leq i$ and $\ssSeq{k}$ for all $0 \leq k \leq i - 1$.
Define $\sTr{k} \coloneqq L(\sSeq{k})$ and $\ssTr{k} \coloneqq L(\ssSeq{k})$ for $0 \leq k \leq i-1$.

\begin{strat}[t!]
	\begin{algorithmic}[1]
		\State \textbf{Input:} $\vertex \in V^*$ \textbf{with} $|\vertex| = 2i + 1$
		\If{$i = 0$}
		\State  $T \coloneqq S_0$     \label{line:app1}
		\Else
		\State  $T \coloneqq \sucs{\ssSeq{i-1}}$\label{line:app2}\vspace{-0.8mm}
		\EndIf
		\State $\hat{q} \coloneqq {\delta^\phi}^* \big[(\sTr{0}, \ssTr{0}) \cdots (\sTr{i-1},\ssTr{i-1})\big]$\label{line:app3}
		\State $C \coloneqq \{ s' \mid s' \in T \land  \prov{\hat{q}}{s'} \in \bSeq{i} \}$\label{line:app4}
		\If{$C \neq \emptyset$}
		\State \textbf{return} any $s' \in C$\label{line:app5}
		\Else
		\State \textbf{return} any $s' \in T$\label{line:app6}
		\EndIf
	\end{algorithmic}
	
	\caption{Winning strategy for $\veri$. Identical to the construction in \Cref{sec:safetyProof} and repeated here for the reader's convenience.}\label{fig:appstrategy}
\end{strat}

\subsection{Strategy Construction}

With those definitions at hand, we can describe an explicit winning strategy for $\veri$ in \Cref{fig:appstrategy}.
Note that $\sigma$ returns a successor for the $\exists$-copy which already gives a unique successor vertex in the game (as we identified states $\gamenode{(s, A), (s, A'), q, \flat}$ with $\gamenode{(s, A), s', q, \flat}$). 

By the structure of the parity game, any finite path starting in $V_{\mathit{init}}$ that reaches a node in $V_\veri$ is of odd length.
We begin by computing all possible successor states for the $\exists$-copy in a set $T$. These are either all initial nodes in case where $|\vertex| = 1$ (line \ref*{line:app1}) or all successor states of the current state of the $\exists$-copy (line \ref*{line:app2}).
We then compute the current state of $\aut^\phi$ reached on $\vertex$ in line \ref*{line:app3}.
Note that $\hat{q}$ is a state in $\aut^\phi$ whereas the automaton states occurring in $\vertex$ are states in a DPA for (\ref{eq:appLTL}).
In line \ref*{line:app4}, we check if any of the possible successors in $T$ are declared safe in the sense that the prophecy is set for that state.
If there is any such state we pick it (line \ref*{line:app5}).
Otherwise, we choose an arbitrary successor (line \ref*{line:app6}).

\subsection{Invariant}

We now claim that the constructed strategy is winning. 
For this, let $\vertex  \in V^\omega$ be any infinite play in $\game{\tss^P}{\varphi^{P, \Xi}}$ starting in $V_{\mathit{init}}$ that is compatible with $\sigma$. 
We fix $\vertex$ throughout this section.
We want to show that $\vertex$ is winning for $\veri$.
By the construction of $\game{\tss^P}{\varphi^{P, \Xi}}$, this is equivalent to the fact that for the two traces $\trs \coloneqq \sTr{0}\sTr{1}\cdots$ and $\trss \coloneqq \ssTr{0}\ssTr{1}\cdots$  it holds that $[\pi \mapsto \trs, \pi' \mapsto \trss] \models \phi^{P, \Xi}$ where $\phi^{P, \Xi}$ is the formula given in (\ref{eq:appLTL}).
We can therefore assume that $\trs$ (the trace constructed by the $\forall$-player) satisfies the LTL property (the premise of $\phi^{P, \Xi}$)
\begin{align}\label{eq:headLTL}
    \ltlG \big(\bigwedge_{q \in Q^\phi, s \in S} \prov{q}{s} \leftrightarrow \pro{q}{s}\big) 
\end{align}%
as otherwise $\vertex$ trivially satisfies (\ref{eq:appLTL}) and is thus won by $\veri$.
Phrased differently, this amounts to the following assumption:

\begin{assumption}\label{as:ma}
    For any $q \in Q^\phi$, $s \in S$, and any $i$ we have that $\trs[i, \infty] \in \lpro{q}{s}$ if and only if $\prov{q}{s} \in \bSeq{i}$. 
\end{assumption}

Under this assumption we still need to show that $[\pi \mapsto \trs, \pi' \mapsto \trss] \models \phi$ (the conclusion of $\phi^{P, \Xi}$).

For each $i \in \nat$ define 
\begin{align*}
    \qSeqHat{i} \coloneqq  {\delta^\phi}^* \big[(\sTr{0}, \ssTr{0}) \cdots (\sTr{i-1},\ssTr{i-1})\big]
\end{align*}
Note that $\qSeqHat{0} = q^\phi_0$. 
We can observe two things:
First, the sequence $\qSeqHat{0}\qSeqHat{1}\qSeqHat{2} \cdots$ is the unique run of  $\aut^\phi$ on $\trs \otimes \trss$.
And second, $\qSeqHat{i}$ is the same state computed by $\sigma$ in line \ref*{line:app3} of \Cref{fig:appstrategy} in the $i$th iteration (as $\vertex$ is compatible with $\sigma$). 

We establish the following invariant:

\begin{proposition}\label{prop:claim}
    For every $i \in \nat$ it holds that 
    \begin{align*}
    	\prov{\qSeqHat{i}}{\ssSeq{i}} \in \bSeq{i}.
    \end{align*}
\end{proposition}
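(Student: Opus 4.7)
\begin{proofSketch}
The plan is to prove the statement by induction on $i$, with the inductive invariant strengthened (implicitly) to the equivalent form: at step $i$, $\trs[i,\infty] \in \lpro{\qSeqHat{i}}{\ssSeq{i}}$, i.e.\ there exists some $t' \in \mathit{Traces}(\tss_{\ssSeq{i}})$ such that $\trs[i,\infty] \otimes t' \in \calL(\aut^\phi_{\qSeqHat{i}})$. The translation between this semantic formulation and the syntactic statement $\prov{\qSeqHat{i}}{\ssSeq{i}} \in \bSeq{i}$ is immediate from \Cref{as:ma}. The key additional fact I will use on the way is that, at every step, the set $C$ computed in line \ref*{line:app4} of \Cref{fig:appstrategy} is non-empty; in that case the strategy picks its successor from $C$ (line \ref*{line:app5}), which is precisely what guarantees that $\prov{\qSeqHat{i+1}}{\ssSeq{i+1}} \in \bSeq{i+1}$ holds along the infinite play $\vertex$.

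For the base case $i = 0$, $T = S_0$ and $\qSeqHat{0} = q^\phi_0$. The trace $\trs$ is a trace of $\tss$ (it is produced by a path starting in $S_0$), so the premise $\tss \models \varphi$ yields a witness trace $t^* \in \mathit{Traces}(\tss)$ with $\trs \otimes t^* \in \calL(\aut^\phi)$. Taking $s^* \in S_0$ to be the initial state of a path generating $t^*$ gives $\trs \in \lpro{q^\phi_0}{s^*}$, hence (by \Cref{as:ma}) $\prov{q^\phi_0}{s^*} \in \bSeq{0}$. Thus $s^* \in C$, so the strategy picks $\ssSeq{0} \in C$ and the claim at $i=0$ follows directly.

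For the inductive step, assume $\prov{\qSeqHat{i}}{\ssSeq{i}} \in \bSeq{i}$, equivalently $\trs[i,\infty] \in \lpro{\qSeqHat{i}}{\ssSeq{i}}$. Pick a witness $t' \in \mathit{Traces}(\tss_{\ssSeq{i}})$ with $\trs[i,\infty] \otimes t' \in \calL(\aut^\phi_{\qSeqHat{i}})$, and let $p$ be a path in $\tss$ from $\ssSeq{i}$ generating $t'$. Set $s^* \coloneqq p(1) \in \sucs{\ssSeq{i}}$. Reading one symbol and using $t'(0) = L(\ssSeq{i}) = \ssTr{i}$ and $\trs(i) = \sTr{i}$, the automaton transitions exactly to $\qSeqHat{i+1}$, and $t'[1,\infty] \in \mathit{Traces}(\tss_{s^*})$ with $\trs[i+1,\infty] \otimes t'[1,\infty] \in \calL(\aut^\phi_{\qSeqHat{i+1}})$. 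Hence $\trs[i+1,\infty] \in \lpro{\qSeqHat{i+1}}{s^*}$ and, by \Cref{as:ma}, $\prov{\qSeqHat{i+1}}{s^*} \in \bSeq{i+1}$. Therefore $s^* \in C$ in iteration $i{+}1$, so the strategy picks $\ssSeq{i+1} \in C$, establishing $\prov{\qSeqHat{i+1}}{\ssSeq{i+1}} \in \bSeq{i+1}$.

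The only point that requires a little care is making sure the indexing of the automaton state matches across views: the state $\hat{q}$ computed by the strategy in iteration $i{+}1$ coincides with $\qSeqHat{i+1}$ because $\vertex$ is compatible with $\sigma$, and the sequence $\qSeqHat{0}\qSeqHat{1}\cdots$ is by construction the unique run of $\aut^\phi$ on $\trs \otimes \trss$. Once this bookkeeping is in place, the induction goes through as above, and no subtlety beyond one-step unrolling of the automaton and witness path is needed.
\end{proofSketch}
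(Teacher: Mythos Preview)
Your proof is correct and follows essentially the same approach as the paper: induction on $i$, using \Cref{as:ma} to translate between the semantic membership $\trs[i,\infty] \in \lpro{\qSeqHat{i}}{\ssSeq{i}}$ and the syntactic claim, with the base case derived from $\tss \models \varphi$ and the inductive step via one-step unrolling of the witness path. The only organizational difference is that the paper factors out the base-case existence and the one-step progress as separate lemmas (\Cref{lem:indStart} and \Cref{lem:progress}), whereas you inline their content directly into the induction.
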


Expressed less formally, in every step $i$ the state selected by $\sigma$ (which is $\ssSeq{i}$) is such that the prophecy for that state (and automaton state $\qSeqHat{i}$) holds.
In particular, the set $C$ computed in line \ref*{line:app4} of \Cref{fig:appstrategy} is never empty.\footnote{The set $C$ might be empty if we consider an arbitrary run of $\sigma$ which is why we include the special treatment in line \ref*{line:app6}.  
In our proof, we assume \Cref{as:ma}.
For such plays $C \neq \emptyset$ in every iteration and line \ref*{line:app6} is never reached.
}
Before proving \Cref{prop:claim}, we show that the statement in \Cref{prop:claim} already suffices to show the desired result.

\begin{lemma}
    Play $\vertex$ is won by $\veri$.
\end{lemma}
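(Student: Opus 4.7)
The plan is to reduce the winning condition to a statement about the automaton run and then derive a contradiction from $\qSeqHat{i} \in B^\phi$ using \Cref{prop:claim}. By the construction of $\game{\tss^P}{\varphi^{P, \Xi}}$, the play $\vertex$ is won by $\veri$ iff $[\pi \mapsto \trs, \pi' \mapsto \trss] \models \phi^{P, \Xi}$. Under \Cref{as:ma} the premise of $\phi^{P, \Xi}$ holds, so it suffices to show $[\pi \mapsto \trs, \pi' \mapsto \trss] \models \phi$, i.e., $\trs \otimes \trss \in \calL(\aut^\phi)$. Since $\aut^\phi$ is a deterministic safety automaton, this is equivalent to the unique run of $\aut^\phi$ on $\trs \otimes \trss$ never entering $B^\phi$, and by definition of $\qSeqHat{\cdot}$ this run is exactly $\qSeqHat{0}\qSeqHat{1}\qSeqHat{2}\cdots$.

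Thus the goal reduces to: $\qSeqHat{i} \notin B^\phi$ for all $i \in \nat$. I would prove this by contradiction. Suppose $\qSeqHat{i} \in B^\phi$ for some $i$. Then $\aut^\phi_{\qSeqHat{i}}$ rejects every input (a safety automaton started in a bad state accepts no trace), so $\calL(\aut^\phi_{\qSeqHat{i}}) = \emptyset$. In particular, for every system state $s$ there is no witness trace $t' \in \mathit{Traces}(\tss_s)$ with $\trs[i,\infty] \otimes t' \in \calL(\aut^\phi_{\qSeqHat{i}})$, giving $\lpro{\qSeqHat{i}}{s} = \emptyset$ and hence $\trs[i,\infty] \notin \lpro{\qSeqHat{i}}{\ssSeq{i}}$.

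On the other hand, \Cref{prop:claim} yields $\prov{\qSeqHat{i}}{\ssSeq{i}} \in \bSeq{i}$, and \Cref{as:ma} then gives $\trs[i,\infty] \in \lpro{\qSeqHat{i}}{\ssSeq{i}}$, a contradiction. Consequently the run of $\aut^\phi$ on $\trs \otimes \trss$ avoids $B^\phi$, so $\trs \otimes \trss \in \calL(\aut^\phi)$ and $\vertex$ is won by $\veri$.

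The step doing the real work here is \Cref{prop:claim}; this lemma is essentially a two-line consequence of it, and the only care needed is the semantic manoeuvre of translating the prophecy bit $\prov{\qSeqHat{i}}{\ssSeq{i}} \in \bSeq{i}$ (a fact about the finite game state) into containment $\trs[i,\infty] \in \lpro{\qSeqHat{i}}{\ssSeq{i}}$ of an infinite suffix, which is exactly what \Cref{as:ma} provides. The genuine obstacle, therefore, lies in proving \Cref{prop:claim} itself (by induction on $i$, showing that $\sigma$'s choice always lands in a successor whose prophecy variable is set), not in this step.
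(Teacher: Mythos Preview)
Your proposal is correct and follows essentially the same approach as the paper: reduce to showing the run $\qSeqHat{0}\qSeqHat{1}\cdots$ of the safety automaton $\aut^\phi$ avoids $B^\phi$, observe that a bad state $q$ forces $\lpro{q}{s}=\emptyset$ for all $s$, and combine \Cref{prop:claim} with \Cref{as:ma} to obtain $\trs[i,\infty]\in\lpro{\qSeqHat{i}}{\ssSeq{i}}\neq\emptyset$. The only cosmetic difference is that you phrase the last step as a contradiction while the paper argues it directly.
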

\begin{proof}
    By the structure of the game, it suffices to show that $[\pi \mapsto \trs, \pi' \mapsto \trss] \models \phi$ (the conclusion of $\phi^{P, \Xi}$).
    By definition $\qSeqHat{0}\qSeqHat{1}\cdots$ is the unique run of $\aut^\phi$ on $\trs \otimes \trss$. 
    As $\aut^\phi$ is a safety automaton it is sufficient to show that for every $i$, $\qSeqHat{i}$ is not a bad state. 
    It is easy to see that if $q$ is a bad state in $\aut^\phi$ then $\lpro{q}{s}$ is empty for all $s \in S$.
    According to \Cref{prop:claim}, $\prov{\qSeqHat{i}}{\ssSeq{i}} \in \bSeq{i}$ so by \Cref{as:ma} we get that $\trs[i, \infty] \in \lpro{\qSeqHat{i}}{\ssSeq{i}}$.
    Consequently, $\lpro{\qSeqHat{i}}{\ssSeq{i}}$ is non-empty and so $\qSeqHat{i}$ cannot be bad.
\end{proof}

As this holds for any play $\vertex$ compatible with $\sigma$, we have a proof of \Cref{theo:compSafety}.

\subsection{Proof of the Invariant}

It remains to show the invariant \Cref{prop:claim} .
For our proof, we need the following two lemmas.
They establish that there initially exists at least one prophecy that is set to true and if the prophecy recommendation is followed, there exists at least one prophecy set in the next step.

\begin{lemma}\label{lem:indStart}
    There exists a state $s'_0 \in S_0$ such that $\trs \in \lpro{q^\phi_0}{s'_0}$.
\end{lemma}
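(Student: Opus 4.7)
The plan is to invoke the premise $\tss \models \varphi$ directly on the universally-quantified trace $\trs$, which immediately supplies a witness trace starting from some initial state $s_0'$.

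First I would check that $\trs$ is indeed a trace of the original system $\tss$ (not just of $\tss^P$). By construction, $\trs = L(\sSeq{0})L(\sSeq{1}) \cdots$, where $\sSeq{0}\sSeq{1}\cdots$ is the path taken by the $\forall$-player in $\game{\tss^P}{\varphi^{P,\Xi}}$. Each $\sSeq{i}$ has the form $(s_i, A_i) \in S \times 2^P$ with $\sSeq{0} \in S_0^P$, and the transition relation of $\tss^P$ projects to the transition relation of $\tss$ on the first component. Hence $s_0 s_1 \cdots$ is a path in $\tss$ starting in $S_0$. Since in the definition of $\phi^{P,\Xi}$ (and therefore in $\aut^\phi$, which operates over $\Sigma^2$ with $\Sigma = 2^{\ap}$) only the $\ap$-labels matter, the relevant trace is $L(s_0)L(s_1)\cdots = \trs \in \traces{\tss}$.

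Next I would apply $\tss \models \varphi$. Since $\varphi = \forall \pi. \exists \pi'. \phi$ and $\trs \in \traces{\tss}$, there exists $t' \in \traces{\tss}$ with $[\pi \mapsto \trs, \pi' \mapsto t'] \models \phi$, which is equivalent to $\trs \otimes t' \in \calL(\aut^\phi) = \calL(\aut^\phi_{q_0^\phi})$. Let $s_0' \in S_0$ be the initial state of a path in $\tss$ generating $t'$, so that $t' \in \traces{\tss_{s_0'}}$. Then $s_0'$ witnesses, via $t'$, the membership $\trs \in \lpro{q_0^\phi}{s_0'}$ by the very definition of $\lpro{q_0^\phi}{s_0'}$.

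This argument is essentially a direct unfolding of definitions, and I do not expect any serious obstacle. The only mildly subtle point is bridging between $\tss$ and $\tss^P$ (i.e., observing that extending the state space with prophecy-variable valuations does not alter the $\ap$-labelled trace and therefore does not affect either the premise $\tss \models \varphi$ or the definition of $\lpro{q_0^\phi}{s_0'}$, which is formulated over $\tss$).
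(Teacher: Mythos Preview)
Your proposal is correct and follows essentially the same approach as the paper: use $\tss \models \varphi$ together with $\trs \in \traces{\tss}$ to obtain a witness trace $t'$, then take $s_0'$ to be the initial state of a path generating $t'$. The only difference is that you spell out why $\trs \in \traces{\tss}$ (the paper simply asserts it); note, however, that in the paper's notation $\sSeq{i}$ already denotes the projected state $s_i \in S$ rather than the pair $(s_i, A_i)$, so this step is even simpler than you suggest.
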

\begin{proof}
    By assumption we have that $\tss \models \forall \pi.\exists \pi'. \phi$.
    As $\trs \in \traces{\tss}$ we can plug it into the universal quantifier and get $t' \in \traces{\tss}$ such that $\trs \otimes t' \in \calL(\aut^\phi)$.
    Let $p' \in \paths{\tss}$ be such that $L(p') = t'$.
    Now define $s_0' \coloneqq p'(0)$ which satisfies $s_0' \in S_0$.
    We claim that $\trs \in \lpro{q^\phi_0}{s'_0}$.
    To show this, we can simply plug $t'$ in for the existential quantifier in the definition of $\lpro{q^\phi_0}{s'_0}$.
    By assumption this witness satisfies $\trs \otimes t' \in \calL(\aut^\phi) = \calL(\aut^\phi_{q_0^\phi})$.
\end{proof}

\begin{lemma}\label{lem:progress}
    If $t \in \lpro{q}{s}$ then there exists a state $x \in \sucs{s}$ such that $t[1, \infty] \in \lpro{q'}{x}$ where $q' = \delta^\phi(q, (t(0), L(s)))$.
\end{lemma}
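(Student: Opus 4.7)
The plan is to unpack the definition of $\lpro{q}{s}$, extract a witness trace, and shift everything by one step. By definition, $t \in \lpro{q}{s}$ gives us a witness trace $t' \in \mathit{Traces}(\tss_s)$ with $t \otimes t' \in \calL(\aut^\phi_q)$. Since $t'$ is a trace of $\tss_s$, it is produced by some path $p'$ in $\tss$ starting in $s$, i.e., $p'(0) = s$ and $p'(1) \in \sucs{s}$. I would set $x \coloneqq p'(1)$, which immediately satisfies $x \in \sucs{s}$ as required.

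Next I would verify that $t'[1,\infty]$ serves as a witness for $t[1,\infty] \in \lpro{q'}{x}$. Since $p'(1) = x$, the shifted path $p'[1,\infty]$ starts in $x$ and its labelling is $t'[1,\infty]$, so $t'[1,\infty] \in \mathit{Traces}(\tss_x)$. It then remains to show $t[1,\infty] \otimes t'[1,\infty] \in \calL(\aut^\phi_{q'})$. This follows directly from the determinism of $\aut^\phi$: the unique run of $\aut^\phi_q$ on $t \otimes t'$ reads the first letter $(t(0), t'(0)) = (t(0), L(s))$ and transitions to $\delta^\phi(q, (t(0), L(s))) = q'$, after which it processes $(t \otimes t')[1,\infty] = t[1,\infty] \otimes t'[1,\infty]$. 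Since the whole run is accepting and $\aut^\phi$ is a safety automaton (whose acceptance depends only on the visited states, all of which also appear on the suffix run starting at $q'$), the suffix run starting at $q'$ is also accepting, so $t[1,\infty] \otimes t'[1,\infty] \in \calL(\aut^\phi_{q'})$.

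Plugging this witness back into the definition of $\lpro{q'}{x}$ yields $t[1,\infty] \in \lpro{q'}{x}$, completing the proof. There is no real obstacle here: the lemma is essentially a one-step unrolling that transfers the existential witness from the definition of $\lpro{q}{s}$ to its shifted counterpart, and the only subtlety is being careful that the suffix of a path in $\tss_s$ starting at state $x$ is indeed a path in $\tss_x$, and that the suffix of an accepting run of a deterministic safety automaton remains accepting from the state reached after one transition.
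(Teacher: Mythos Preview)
Your proposal is correct and follows essentially the same approach as the paper: extract the witness trace $t'$ and its generating path $p'$, set $x \coloneqq p'(1)$, and use $t'[1,\infty]$ as the witness for the shifted prophecy, with the acceptance of the suffix following from determinism and the safety acceptance condition. The paper presents the final step as a short chain of implications based on the automaton semantics, while you spell out explicitly why the suffix of an accepting safety run remains accepting; the arguments are otherwise identical.
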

\begin{proof}
    Assume $t \in \lpro{q}{s}$. 
    By definition of $\lpro{q}{s}$ there exists a $t' \in \traces{\tss_s}$ such that $t \otimes t' \in \calL(\aut^\phi_q)$.
    Let $p' \in \paths{\tss_s}$ be such that $L(p') = t'$. 
    Now define $x \coloneqq p'(1)$. 
    We claim that $x$ serves as a witness to the current lemma.
    Obviously, $x \in \sucs{s}$ as $p'(0) = s$ and $p'$ is a valid path in $\tss$.
    We need to show that $t[1, \infty] \in \lpro{q'}{x}$.
    As the witness for the existential quantification in the $\lpro{q'}{x}$ we select $t'' \coloneqq t'[1,\infty]$.
    As $p'' \coloneqq p'[1,\infty]$ satisfies $L(p'') = t''$ and $p''(0) = (p'[1,\infty])(0)  = p'(1) = x$ we have $t'' \in \traces{\tss_x}$ as required.
    It remains to show that $t[1, \infty] \otimes t'' \in \calL(\aut^\phi_{q'})$.
    By assumption we have $t \otimes t' \in \calL(\aut^\phi_q)$. 
    Now
    \begin{align*}
        t\otimes t'  \in \calL(\aut^\phi_{q})
        &\Rightarrow(t(0) t[1, \infty]) \otimes (L(s)t'') \in \calL(\aut^\phi_{q}) \\
        &\Rightarrow (t(0), L(s)) (t[1, \infty] \otimes t'') \in \calL(\aut^\phi_{q}) \\
        &\Rightarrow t[1, \infty] \otimes t'' \in \calL(\aut^\phi_{ \delta^\phi(q, (t(0), L(s)))}) \\
        &\Rightarrow t[1, \infty] \otimes t'' \in \calL(\aut^\phi_{q'})
    \end{align*}%
    where the first implication holds by definition of $t''$, the second  by definition of $\otimes$, the third by the automaton semantics and the forth by the definition unrolling of $q'$. 
\end{proof}

We are now in a position to prove \Cref{prop:claim}.

\begin{proof}[Proof of \Cref{prop:claim}]
    The proof is by induction on $i$.\\
    \textbf{The case $i = 0$:}
      By \Cref{lem:indStart} there exists $s'_0 \in S_0$ such that $\trs \in \lpro{q^\phi_0}{s'_0}$.
        Now by \Cref{as:ma} this implies that $\prov{q^\phi_0}{s_0'} \in \bSeq{0}$.
        For the set $C$ computed in iteration $i = 0$ in line \ref*{line:app4} of \Cref{fig:appstrategy} we have 
        \begin{align*}
            C = \{ \prov{\qSeqHat{0}}{s} \mid \prov{\qSeqHat{0}}{s} \in  \bSeq{0} \land s \in S_0  \}.
        \end{align*}
        In particular, $\prov{q^\phi_0}{s_0'} \in C$ so $C \neq \emptyset$. 
        By construction $\sigma$ will have picked some initial state $\ssSeq{0} \in C$ (not necessarily $s'_0$) which already implies that $\prov{\qSeqHat{0}}{\ssSeq{0}}  \in \bSeq{0}$ as required.
        
       \textbf{The case $i > 0$:}
         By the induction hypothesis we can assume that $\prov{\qSeqHat{i-1}}{\ssSeq{i-1}}  \in \bSeq{i-1}$. 
        By \Cref{as:ma}, this implies that 
        \begin{align*}
            \trs[i-1, \infty] \in \lpro{\qSeqHat{i-1}}{\ssSeq{i-1}}.
        \end{align*}
        By \Cref{lem:progress} there exists $x \in \sucs{\ssSeq{i-1}}$ such that
        \begin{align*}
            (\trs[i-1, \infty])[1, \infty] = \trs[i, \infty] \in \lpro{q'}{x}
        \end{align*}
        where 
        \begin{align*}
            q' = {\delta^\phi}^*(\qSeqHat{i-1}, (\trs(i-1), L(\ssSeq{i-1}))) = \qSeqHat{i}.
        \end{align*}
        As $\trs[i, \infty] \in \lpro{\qSeqHat{i}}{x}$ we can use \Cref{as:ma} again and obtain that $\prov{ \qSeqHat{i}}{x} \in \bSeq{i}$.
        Now by construction of $\sigma$ this implies that $x \in C$ (where $C$ is the set computed by $\sigma$ in the $i$th iteration in line \ref*{line:app4}) so $C \neq \emptyset$ and the strategy will have picked some successor $\ssSeq{i} \in C$ (not necessarily $x$) which already implies that $\prov{\qSeqHat{i}}{\ssSeq{i}} \in \bSeq{i}$ as required.
\end{proof}

This concludes the proof of \Cref{prop:claim} and therefore the proof of \Cref{theo:compSafety}.

\section{Lower Bounds on the Number of Prophecies}\label{app:lowerBound}

This section is devoted to a proof of \Cref{prop:lb}.

We note that if both the size of the system and specification can depend on $n$, deriving a $\log n$ lower bound on the number of prophecies needed is easy, as stated in \Cref{lem:simpleLowerBound}.

\begin{lemma}\label{lem:simpleLowerBound}
	There exists a family of transition systems $\{\tss_n\}_{n \in \nat}$ and $\forall\exists$ HyperLTL properties $\{\varphi_n\}_{n \in \nat}$ such that $\tss_n$ has $\Theta(n)$-many states, and $\varphi_n$ has a safety matrix, and $\varphi_n$ has size $\Theta(n)$, and $\tss_n \models \varphi_n$  and, additionally, any family of prophecies $\{\Xi_n\}_{n \in \nat}$ where where $\Xi_n$ is complete for $\tss_n, \varphi_n$ (i.e., $\winss{\veri}{\game{\tss_n^P}{\varphi_n^{P, \Xi_n}}}$ for some fresh set $P$) has at last size $|\Xi_n| \in \Omega(\log n)$.
\end{lemma}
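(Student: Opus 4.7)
The plan is to exhibit a simple family of systems for which the verifier's initial choice of the existentially quantified trace must depend on a value in $\{1, \ldots, n\}$ that is revealed only by the next move of the refuter. Fix $\ap \coloneqq \{a_1, \ldots, a_n\}$ and let $\tss_n$ have $2n$ states partitioned into initial states $r_1, \ldots, r_n$ and non-initial states $q_1, \ldots, q_n$, with transitions $r_i \to q_j$ for every $i, j$ and self-loops $q_j \to q_j$, and labels $L(r_i) = L(q_i) = \{a_i\}$. I would take
\begin{align*}
    \varphi_n \coloneqq \forall \pi.\, \exists \pi'.\, \ltlG \bigwedge_{i=1}^n \big(a_{i,\pi'} \leftrightarrow \ltlN a_{i,\pi}\big),
\end{align*}
whose matrix is a safety property (any violation at time $t$ is witnessed by a finite prefix of length $t+2$) and whose size is $\Theta(n)$. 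Satisfaction is immediate: given a refuter trace $\pi = r_j q_k q_k \cdots$, the witness is $\pi' = r_k q_k q_k \cdots$, which is a legal path of $\tss_n$.

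For the lower bound I plan a pigeonhole argument at the root of the game. For each $k \in \{1, \ldots, n\}$, consider the refuter trace $t^k \coloneqq \{a_1\}\{a_k\}\{a_k\} \cdots$ produced by the path $r_1 q_k q_k \cdots$. Given a complete set $\Xi_n = \{\xi_1, \ldots, \xi_m\}$ with fresh propositions $P = \{p_1, \ldots, p_m\}$, let $A^k \subseteq P$ denote the faithful initial prophecy vector for $t^k$, i.e., $p_i \in A^k \iff t^k \in \proph_i$. Fix any winning strategy $\sigma$ for $\veri$. From the initial game node $(r_1, A^k)$ the init rule forces $\sigma$ to commit to an initial state for $\pi'$; inspecting the matrix at time $0$ shows that this state must carry the label $\{a_k\}$, and among initial states only $r_k$ qualifies, so $\sigma$'s choice is $r_k$. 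Hence the map $k \mapsto A^k$ is injective into $2^P$, which gives $n \leq 2^m$, i.e.\ $|\Xi_n| = m \in \Omega(\log n)$.

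The main subtlety I expect to have to discharge is the formal justification for restricting attention to plays in which the refuter sets the prophecy variables faithfully. The standard handle, which I plan to invoke, is that any unfaithful assignment immediately falsifies the premise $\ltlG \bigwedge_i (p_{i,\pi} \leftrightarrow \xi_i)$ of $\varphi_n^{P, \Xi_n}$ and hence yields a trivially winning play for $\veri$; consequently a winning $\sigma$ must in particular handle the faithful refuter plays, and against the faithful continuation of $\pi^k$ the argument above forces $\sigma$'s committed initial state to be $r_k$. An ancillary observation is that fixing $j=1$ for the first label is arbitrary (any fixed $j$ yields the same injectivity), and the construction clearly meets the required $\Theta(n)$ bounds on state space and formula size.
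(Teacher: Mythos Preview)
Your proposal is correct and follows essentially the same approach as the paper's proof: construct a system where the verifier's initial move for $\pi'$ must match one of $n$ possible next moves of the refuter, then argue by pigeonhole that the initial prophecy vector must distinguish all $n$ options, forcing $|\Xi_n| \geq \log n$. The paper's construction is a bit leaner—it uses the fully connected $n$-state system over $\{a_1,\ldots,a_n\}$ with all states initial and the matrix $\bigwedge_{i=1}^n (a_i)_{\pi'} \leftrightarrow \ltlN (a_i)_{\pi}$ (no $\ltlG$, no second layer of states)—and its lower-bound argument is stated informally as ``at least $\log n$ bits are required'', whereas your explicit injectivity of $k \mapsto A^k$ is a cleaner formalization of the same idea.
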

\begin{proof}
	For every $n \in \nat$, define the transition system $\tss_n = (S_n, S_{n, 0}, \rho_n, L_n)$ over $\ap_n \coloneqq \{a_1, \ldots, a_n\}$ by $S_n \coloneqq S_{n, 0} \coloneqq \{s_1, \ldots, s_n\}$, $\rho_n \coloneqq \{(s_i, s_j) \mid 1 \leq i, j \leq n\}$, and $L_n(s_i) \coloneqq \{a_i\}$. 
	That is $\tss_n$ is simply the fully connected system generating all traces over $\ap_n$.
	Define $\varphi_n \coloneqq \forall \pi. \exists \pi'. (\bigwedge_{i=1}^n (a_i)_{\pi'} \leftrightarrow \ltlN (a_i)_{\pi})$.
	Formula $\varphi_n$ requires the $\exists$-player to choose the state $s_i$ that the $\forall$-player chooses in the next move. 
	When choosing the initial state, the $\exists$-player has $n$ possible options of which exactly one is the correct (winning) choice.
	The prophecies thus need to communicate information of size $n$ for which at least $\log n$-bits are required. 
	As each prophecy can only provide a single bit of information the lower bound follows. 
\end{proof}

\begin{figure}[!t]
	\centering
	\begin{tikzpicture}[scale=0.95]
		\node[draw, rectangle, thick,align=center, rounded corners=5pt] at (3.5,2) (bdelay) {$s_\mathit{delay}$\textbf{:}\\$a = \top,d=\top$};
		
		\node[draw, rectangle, thick,align=center, rounded corners=5pt] at (3.5,0.5) (binit) {$s_\mathit{init}$\textbf{:}\\$a = \top$};
		
		\node[draw, rectangle, thick,align=center, rounded corners=5pt] at (0,-1.5) (b11) {$s^1_1$\textbf{:}\\$a = \beta_1^1$};
		\node[draw, rectangle, thick,align=center, rounded corners=5pt] at (2,-1.5) (b12) {$s^2_1$\textbf{:}\\$a = \beta_1^2$};
		\node[draw, rectangle, thick,align=center, rounded corners=5pt] at (4,-1.5) (b13) {$s^3_1$\textbf{:}\\$a = \beta_1^3$};
		\node[] at (5.5,-1.5) () {\Large $\cdots$};
		\node[draw, rectangle, thick,align=center, rounded corners=5pt] at (7,-1.5) (b14) {$s^n_1$\textbf{:}\\$a = \beta_1^n$};

		\node[draw, rectangle, thick,align=center, rounded corners=5pt] at (0,-3) (b21) {$s^1_2$\textbf{:}\\$a = \beta_2^1$};
		\node[draw, rectangle, thick,align=center, rounded corners=5pt] at (2,-3) (b22) {$s^2_2$\textbf{:}\\$a = \beta_2^2$};
		\node[draw, rectangle, thick,align=center, rounded corners=5pt] at (4,-3) (b23) {$s^3_2$\textbf{:}\\$a = \beta_2^3$};
		\node[] at (5.5,-3) () {\Large $\cdots$};
		\node[draw, rectangle, thick,align=center, rounded corners=5pt] at (7,-3) (b24) {$s^n_2$\textbf{:}\\$a = \beta_2^n$};

		\node[draw, rectangle, thick,align=center, rounded corners=5pt] at (0,-4.5) (b31) {$s^1_3$\textbf{:}\\$a = \beta_3^1$};
		\node[draw, rectangle, thick,align=center, rounded corners=5pt] at (2,-4.5) (b32) {$s^2_3$\textbf{:}\\$a = \beta_3^2$};
		\node[draw, rectangle, thick,align=center, rounded corners=5pt] at (4,-4.5) (b33) {$s^3_3$\textbf{:}\\$a = \beta_1^3$};
		\node[] at (5.5,-4.5) () {\Large $\cdots$};
		\node[draw, rectangle, thick,align=center, rounded corners=5pt] at (7,-4.5) (b34) {$s^n_3$\textbf{:}\\$a = \beta_3^n$};
		
		\node[] at (0,-5.7) (b41) {};
		\node[] at (2,-5.7) (b42) {};
		\node[] at (4,-5.7) (b43) {};
		\node[] at (7,-5.7) (b44) {};
		
		\node[] at (0,-6.5) () {\Large $\vdots$};
		\node[] at (2,-6.5) () {\Large $\vdots$};
		\node[] at (4,-6.5) () {\Large $\vdots$};
		\node[] at (7,-6.5) () {\Large $\vdots$};
		
		\node[] at (0,-7.3) (bmm1) {};
		\node[] at (2,-7.3) (bmm2) {};
		\node[] at (4,-7.3) (bmm3) {};
		\node[] at (7,-7.3) (bmm4) {};

		\node[draw, rectangle, thick,align=center, rounded corners=5pt] at (0,-8.5) (bm1) {$s^1_m$\textbf{:}\\$a = \beta_m^1$};
		\node[draw, rectangle, thick,align=center, rounded corners=5pt] at (2,-8.5) (bm2) {$s^2_m$\textbf{:}\\$a = \beta_m^2$};
		\node[draw, rectangle, thick,align=center, rounded corners=5pt] at (4,-8.5) (bm3) {$s^3_m$\textbf{:}\\$a = \beta_m^3$};
		\node[] at (5.5,-8.5) () {\Large $\cdots$};
		\node[draw, rectangle, thick,align=center, rounded corners=5pt] at (7,-8.5) (bm4) {$s^n_m$\textbf{:}\\$a = \beta_m^n$};

		\node[draw, rectangle, thick,align=center, rounded corners=5pt] at (3.5,-10.5) (bend) {$s_\mathit{end}$\textbf{:}\\$a = \top$};
		
		\draw[->, thick] (binit)+(-1,0) -- (binit);
		
		\draw[->, thick] (bdelay)+(-1.6,0) -- (bdelay);
		
		\path (bdelay) edge [thick, ->] (binit);
		
		\path (binit) edge [thick, ->] (b11);
		\path (binit) edge [thick, ->] (b12);
		\path (binit) edge [thick, ->] (b13);
		\path (binit) edge [thick, ->] (b14);
		
		\path (b11) edge [thick, ->] (b21);
		\path (b12) edge [thick, ->] (b22);
		\path (b13) edge [thick, ->] (b23);
		\path (b14) edge [thick, ->] (b24);
		
		\path (b21) edge [thick, ->] (b31);
		\path (b22) edge [thick, ->] (b32);
		\path (b23) edge [thick, ->] (b33);
		\path (b24) edge [thick, ->] (b34);
		
		\path (b31) edge [thick, ->,dashed] (b41);
		\path (b32) edge [thick, ->,dashed] (b42);
		\path (b33) edge [thick, ->,dashed] (b43);
		\path (b34) edge [thick, ->,dashed] (b44);

		\path (bmm1) edge [thick, ->,dashed] (bm1);
		\path (bmm2) edge [thick, ->,dashed] (bm2);
		\path (bmm3) edge [thick, ->,dashed] (bm3);
		\path (bmm4) edge [thick, ->,dashed] (bm4);
		
		\path (bm1) edge [thick, ->] (bend);
		\path (bm2) edge [thick, ->] (bend);
		\path (bm3) edge [thick, ->] (bend);
		\path (bm4) edge [thick, ->] (bend);

		\path (bend) edge [loop below, thick, ->] (bend);
	\end{tikzpicture}
	
	\caption{Transition system $\tss_n$ used in the proof of \Cref{prop:lb}. Any complete set of prophecies for this system (combined with the property in the proof of \Cref{prop:lb}) has size at least $\log n$.
		For proposition $a$ we explicit give the truth value in every state.
		Proposition $d$ is always set to $\bot$ except in state $s_\mathit{delay}$ (where it is explicitly set to $\top$).}\label{fig:lowerBoundTs}
\end{figure}
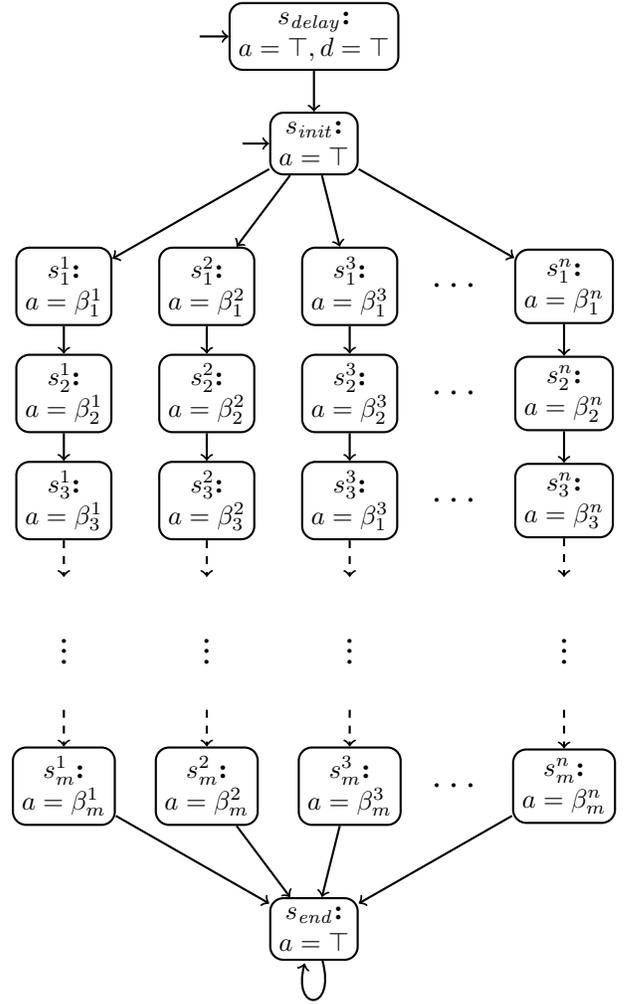

We now use similar ideas as in the above proof, but fix the size of the specification to prove \Cref{prop:lb}.

\lognprop*
\begin{proof}
	To ease the presentation we construct a slightly larger family of transition systems, i.e., a family $\{\tss_n\}_{n \in \nat}$, where $\tss_n$ has $\Theta(n \log n)$-many states.
	We discuss why this suffices at the end of this proof.
	
	Fix any $n \in \nat$. 
	We assume, w.l.o.g., that $n = 2^m$ (and so $m = \log n$).
	The proof idea is similar to that used in \Cref{lem:simpleLowerBound}, i.e., we give the $\exists$-player $n$ options to choose from of which exactly one is the correct one.
	In contrast to \Cref{lem:simpleLowerBound} we fix the size of the property (and thereby also of the number of atomic propositions).
	
	Let $\ap = \{a, d\}$.
	The idea is to use proposition $a$ to encode a binary sequence of length $m$.
	The $\exists$-player can then pick any of those sequences (of which there are $2^m = n$ many) and we require that the $\exists$-player picks the same sequence that the $\forall$-player is choosing. 
	Let $\beta^1, \ldots, \beta^n \in \mathbb{B}^m$ be the $n$ distinct binary sequences of length $m$.
	We write $\beta^i_j$ for the $j$th bit (starting at index $1$) in the $i$th sequences.
	Now consider the transition system $\tss_n$ depicted in \Cref{fig:lowerBoundTs}.
	By moving to a state $s^1_1, \ldots, s^n_1$ any of the $n$ sequences $\beta^1, \ldots, \beta^n$ can be generated. 
	The idea is that the $\forall$-player starts in state $B_\mathit{delay}$ and the $\exists$-player in state $B_\mathit{init}$, so the $\exists$-player has to commit to some state $s^1_1, \ldots, s^n_1$ \emph{before} the $\forall$-player does. 
	We use proposition $d$ to identify the delay state. 
	
	Define the HyperLTL property $\varphi$ as 
	\begin{align*}
		\forall \pi. \exists \pi' \ldot d_\pi \rightarrow \Big( \neg d_{\pi'} \land \ltlN \ltlG (a_{\pi'} \leftrightarrow \ltlN a_\pi  ) \Big).
	\end{align*}
	This formula expresses that for all traces starting in $s_\mathit{delay}$ there exists a trace starting in $s_\mathit{init}$ that traversed the same binary sequence (shifted by one position).
	Note that $\varphi$ is independent of $n$.
	
	It is easy to see that $\tss_n \models \varphi$ as the trace $\pi'$ can simply traverse the same sequence as $\pi$.
	However, to verify this using strategy-based verification, the $\exists$-player needs to fix a state $s^1_1, \ldots, s^n_1$ of which exactly one is the correct (winning) choice.
	Similar to \Cref{lem:simpleLowerBound}, we require at least $\log n$ bits to communicate this.
	Any family of prophecies $\{\Xi_n\}_{n \in \nat}$ where $\Xi_n$ is complete for $\tss_n, \varphi_n$ for every $n$, must thus have size $|\Xi_n| \in \Omega(\log n)$.
	
	For the family $\{\tss_n\}_{n \in \nat}$ constructed above each $\tss_n$ has $\Theta(n \cdot m) = \Theta(n\log n)$-many states. 
	To get a family $\{\tss'_n\}_{n \in \nat}$ where $\tss'_n$ has $\Theta(n)$-many states (as required in the statement we are proving) we use the following:
	Define the family $\{\tss'_n\}_{n \in \nat}$ where $\tss'_n \coloneqq \tss_{\lceil h^{-1}(n) \rceil}$, where $h^{-1}$ is the inverse of $h(n) \coloneqq n \log n$. 
	As we have shown above (for family $\{\tss_n\}_{n \in \nat}$), each  family of prophecies $\{\Xi_n\}_{n \in \nat}$ where $\Xi_n$ is complete for $\tss'_n, \varphi_n$ for every $n$, must thus have size $|\Xi_n| \in \Omega(\log (\lceil h^{-1}(n) \rceil))$.
	To get the desired bound we simply need to check that $\Omega(\log (\lceil h^{-1}(n) \rceil))$ is still in $\Omega(\log n)$.
	For this, note that $\lceil h^{-1}(n) \rceil \geq \lceil \sqrt{n} \rceil$ and $\log (\sqrt{n}) = \tfrac{1}{2} \log n$.
\end{proof}

\begin{remark}
	Recall that we do not claim that our construction using linearly many prophecies (cf.~\Cref{theo:compSafety}) is optimal.
	For the family constructed in \Cref{prop:lb} (and also in \Cref{lem:simpleLowerBound}) $\log n$ is also an upper bound on the number of prophecies needed as we can encode in binary which of the available moves is safe to play. 
	This binary encoding is, in general, not applicable in our construction as the prophecies are oblivious to the current state of the $\exists$-player. 
    Any fixed state identified in a prophecy must thus not be a successor of the $\exists$-player's current state. 
	We discuss this further for the general case \Cref{sec:generllogn}, where we can use binary encodings to gain an exponential improvement in the number of prophecies needed. 
\end{remark}

\section{Completeness for Deterministic Büchi Matrix}\label{sec:appBuchi}

This section is devoted to a proof of \Cref{theo:compgeneral}.
That is, we show that the prophecies constructed in \Cref{sec:compGeneral} for the case where the matrix is recognizable by a deterministic Büchi automaton is complete.

Recall that $\aut^\phi = (Q^\phi, q_0^\phi, \delta^\phi, F^\phi)$ is a deterministic Büchi automaton for $\phi$.
And let $\tss = (S, \{s_0\}, \rho, L)$ (we assumed that $\tss$ has a unique initial state).
For this proof, we also assume that $q_0^\phi \not\in F^\phi$ (this simplifies the proof and can be ensured easily).
For reading convince we recall the definition of our prophecies $\lproRel{q}{x}{s}$ which are defined as:
 \begin{align*}
	\bigg\{ t &\in \Sigma^\omega \mid \exists t' \in \mathit{Traces}(\tss_{s})\ldot t \otimes t' \in \calL(\aut^\phi_q) \,\land \\
	&\Big[\forall s' \in \sucs{x}\ldot \forall t'' \in \mathit{Traces}(\tss_{s'})\ldot t \otimes t'' \in \calL(\aut^\phi_q)\\
	&\Rightarrow \firstvisit{F^\phi}{\aut^\phi_q}{t \otimes t'}  \leq  \firstvisit{F^\phi}{\aut^\phi_q}{t \otimes t''}\Big] \bigg\}
\end{align*}

\compltnessGeneral*

\subsection{Strategy Construction}

We use the same notation that we introduced in \Cref{sec:notation}.
Similar to the safety case in \Cref{sec:compSafety} we give an explicit winning strategy for $\veri$.
We give the strategy in \Cref{fig:strategyBuchi}.
We first compute the set of possible successors in \ref*{line:b3} and then check if there is some successor from the current state (which is $\ssSeq{i-1}$) is set by the prophecies (in \ref*{line:b4}).
The main difference to the safety construction is that the prophecies are now annotated with an additional state from where to start, which is why we use the fixed state $\ssSeq{i-1}$ in \ref*{line:b4}.
As before the main idea of this strategy is to simply follow the recommendation given by prophecies. 

We now claim that the constructed strategy is winning. 
For this, let $\vertex  \in V^\omega$ be any infinite path starting in $V_{\mathit{init}}$ that is compatible $\sigma$. 
We fix $\vertex$ throughout this section.
We want to show that $\vertex$ is winning for $\veri$, i.e., for $\trs \coloneqq \sTr{0}\sTr{1}\cdots$ and $\trss \coloneqq \ssTr{0}\ssTr{1}\cdots$  it holds that $[\pi \mapsto \trs, \pi' \mapsto \trss] \models \phi^{P, \Xi}$, where
\begin{align}\label{eq:appfullLTL}
     \phi^{P, \Xi} \coloneqq \ltlG \Big(\bigwedge_{q \in Q^\phi, x, s \in S} (\provRel{q}{x}{s})_{\pi}  \leftrightarrow \proRel{q}{x}{s}\Big)   \rightarrow \phi.
\end{align}%
As in the safety case, we can therefore assume that the premise of $\phi^{P, \Xi}$ holds:

\begin{assumption}\label{as:maFull}
    For any $x,s \in S$, $q \in Q^\phi$ and any $i$ we have that $\trs[i, \infty] \in \lproRel{q}{x}{s}$ if and only if $\provRel{q}{x}{s} \in \bSeq{i}$. 
\end{assumption}

Under this assumption we still need to show that $[\pi \mapsto \trs, \pi' \mapsto \trss] \models \phi$ (which is the conclusion of $\phi^{P, \Xi}$).

\begin{strat}[t!]
	\begin{algorithmic}[1]
		\State \textbf{Input:} $\vertex \in V^*$ \textbf{where} $|\vertex| = 2i + 1$
		\If{$i = 0$}
		\State  \textbf{return} $s_0$\label{line:b1}
		\Else
		\State  $T \coloneqq \sucs{\ssSeq{i-1}}$\label{line:b2}
		\State $\hat{q} \coloneqq {\delta^\phi}^* \big[(\sTr{0}, \ssTr{0}) \cdots (\sTr{i-1},\ssTr{i-1})\big]$\label{line:b3}
		\State $C \coloneqq \{ s' \mid s' \in T \land  \provRel{\hat{q}}{\ssSeq{i-1}}{s'} \in \bSeq{i} \}$\label{line:b4}
		\If{$C \neq \emptyset$}
		\State \textbf{return} any $s' \in C$\label{line:b5}
		\Else
		\State \textbf{return} any $s' \in T$\label{line:b6}
		\EndIf
		\EndIf
	\end{algorithmic}
	
	\caption{Winning strategy for $\veri$ for the prophecy construction from \Cref{sec:constructionBuechi}.}\label{fig:strategyBuchi}
\end{strat}

\subsection{Main Invariant}

For each $i \in \nat$ we define
\begin{align*}
    \qSeqHat{i} \coloneqq  {\delta^\phi}^*\big[(\sTr{0}, \ssTr{0}) \cdots (\sTr{i-1},\ssTr{i-1})\big]
\end{align*}
as the unique automaton sequence of $\aut^\phi$ on $\trs \otimes \trss$. 
Which coincides with the state computed in the $i$th iteration in line \ref*{line:b3} of \Cref{fig:strategyBuchi}.

Similar to the safety case in \Cref{sec:appSafetyCase}, we first show the following invariant on the play $\vertex$ (under \Cref{as:maFull}).
Note that we restrict $i > 1$.

\begin{proposition}\label{prop:claimFull}
    For every $i \geq 1$ it holds that 
    \begin{align*}
    	\provRel{\qSeqHat{i}}{\ssSeq{i-1}}{\ssSeq{i}} \in \bSeq{i}.
    \end{align*}
\end{proposition}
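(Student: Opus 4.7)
The plan is to mirror the safety-case induction in \Cref{prop:claim}: by induction on $i$, maintaining that the strategy's pick $\ssSeq{i}$ is a successor for which the corresponding prophecy variable is set. The two supporting pieces are an existence lemma for $i=1$ (obtained from $\tss \models \varphi$) and a progress lemma (carrying the in-set assertion from step $i-1$ to step $i$), each translated through \Cref{as:maFull} between membership in $\lproRel{\cdot}{\cdot}{\cdot}$ and the truth value of the corresponding prophecy variable in $\bSeq{i}$.

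First I would prove the progress lemma, analogous to \Cref{lem:progress}: if $t \in \lproRel{q}{x}{s}$ and $q' = \delta^\phi(q, (t(0), L(s)))$, then there exists $y \in \sucs{s}$ with $t[1,\infty] \in \lproRel{q'}{s}{y}$. Existence of \emph{some} successor $y$ together with a witness trace satisfying the acceptance clause is immediate: take the second state of the witness path for $\lproRel{q}{x}{s}$ and truncate its witness trace by one step, exactly as in \Cref{lem:progress}. The main subtlety, absent in the safety case, is the \emph{optimality clause} of the new prophecy definition. The plan is to consider the set of all candidate pairs $(y, t^*)$ with $y \in \sucs{s}$, $t^* \in \traces{\tss_y}$, and $t[1,\infty] \otimes t^* \in \calL(\aut^\phi_{q'})$; this set is non-empty by the previous observation. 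Over it the first-visit time to $F^\phi$ ranges in $\nat \cup \{\infty\}$ and therefore attains a minimum: if the minimum is $\infty$ the required optimality inequality is trivial for all alternatives, and otherwise we pick an achieving pair. Selecting this pair witnesses $t[1,\infty] \in \lproRel{q'}{s}{y}$.

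For the base case $i = 1$, line~\ref*{line:b1} of the strategy fixes $\ssSeq{0} = s_0$. From $\tss \models \varphi$ and the assumption that $\tss$ has unique initial state $s_0$, there exists $t' \in \traces{\tss_{s_0}}$ with $\trs \otimes t' \in \calL(\aut^\phi_{q_0^\phi})$. I would then apply the same minimization argument used in the progress lemma to the set of pairs $(y, t^*)$ with $y \in \sucs{s_0}$, $t^* \in \traces{\tss_y}$, and $\trs[1,\infty] \otimes t^* \in \calL(\aut^\phi_{\qSeqHat{1}})$, non-empty because $(p'(1), t'[1,\infty])$ qualifies for any path $p'$ generating $t'$. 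This yields $y \in \sucs{s_0}$ with $\trs[1,\infty] \in \lproRel{\qSeqHat{1}}{s_0}{y}$. By \Cref{as:maFull}, $\provRel{\qSeqHat{1}}{s_0}{y} \in \bSeq{1}$, so $y$ lies in the set $C$ computed in iteration $1$; hence $C \neq \emptyset$, the strategy selects some $\ssSeq{1} \in C$, and $\provRel{\qSeqHat{1}}{\ssSeq{0}}{\ssSeq{1}} \in \bSeq{1}$.

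For the inductive step $i \geq 2$, the induction hypothesis together with \Cref{as:maFull} gives $\trs[i-1,\infty] \in \lproRel{\qSeqHat{i-1}}{\ssSeq{i-2}}{\ssSeq{i-1}}$. Applying the progress lemma with $x = \ssSeq{i-2}$, $s = \ssSeq{i-1}$, and $t = \trs[i-1,\infty]$ produces $y \in \sucs{\ssSeq{i-1}}$ with $\trs[i,\infty] \in \lproRel{\qSeqHat{i}}{\ssSeq{i-1}}{y}$, using that $\delta^\phi(\qSeqHat{i-1}, (\trs(i-1), L(\ssSeq{i-1}))) = \qSeqHat{i}$. Another application of \Cref{as:maFull} places $y$ into the set $C$ at iteration $i$, so $\ssSeq{i} \in C$ and the invariant holds. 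The main obstacle throughout is the optimality clause in the progress lemma: vacuous in the safety setting, it succeeds here only because the first-visit time attains a minimum on the non-empty candidate set, which is precisely what makes the relative construction of $\lproRel{q}{x}{s}$ self-reproducing under one step of game evolution.
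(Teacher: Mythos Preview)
Your proposal is correct and follows essentially the same approach as the paper: an induction on $i$ backed by a base-case existence lemma and a one-step progress lemma, both proved by a minimization argument over first-visit times, with \Cref{as:maFull} translating between membership in $\lproRel{\cdot}{\cdot}{\cdot}$ and the prophecy variables in $\bSeq{i}$. The only minor presentational difference is that for the base case the paper minimizes over full witness traces from $s_0$ and then extracts the second state (which is why it explicitly uses $q_0^\phi \notin F^\phi$ to relate the shifted first-visit times), whereas you minimize directly over pairs $(y,t^*)$ with $y \in \sucs{s_0}$; both arguments work and yield the same conclusion.
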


The proof is similar to that in the safety case and proceeds by two lemmas and a subsequent induction.
It is complicated by the additional optimality constraint in the definition of $\lproRel{q}{x}{s}$.

\begin{lemma}\label{lem:indStartFull}
    There exists a state $s'_1 \in \sucs{s_0}$ such that $\trs[1, \infty] \in \lproRel{ \qSeqHat{1}}{s_0}{s'_1}$.
\end{lemma}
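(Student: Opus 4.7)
The plan is to leverage the assumption $\tss\models\varphi$ to produce at least one acceptable path starting at some successor of $s_0$, and then select among all such paths one whose first visit to $F^\phi$ is as early as possible. That selected successor will then satisfy both clauses in the definition of $\lproRel{\qSeqHat{1}}{s_0}{\cdot}$.

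First, I would instantiate $\forall \pi.\exists \pi'.\phi$ with $\pi\mapsto\trs$, obtaining $t^\star\in\traces{\tss}$ such that $\trs\otimes t^\star\in\calL(\aut^\phi)$. Since $\tss$ has unique initial state $s_0$, every generating path of $t^\star$ begins at $s_0$; hence $t^\star(0)=L(s_0)=\ssTr{0}$, recalling that line \ref*{line:b1} of the strategy forces $\ssSeq{0}=s_0$. Letting $s_1^\star$ be the second state of such a generating path yields $s_1^\star\in\sucs{s_0}$, $t^\star[1,\infty]\in\traces{\tss_{s_1^\star}}$, and, by reading one step of the unique run of $\aut^\phi$ from $q_0^\phi$, the acceptance $\trs[1,\infty]\otimes t^\star[1,\infty]\in\calL(\aut^\phi_{\qSeqHat{1}})$.

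Second, for each $s'\in\sucs{s_0}$ I would define $m(s')\in\nat\cup\{\infty\}$ as the infimum of $\firstvisit{F^\phi}{\aut^\phi_{\qSeqHat{1}}}{\trs[1,\infty]\otimes t''}$ taken over all $t''\in\traces{\tss_{s'}}$ with $\trs[1,\infty]\otimes t''\in\calL(\aut^\phi_{\qSeqHat{1}})$, with $m(s')=\infty$ if the set is empty. The previous paragraph gives $m(s_1^\star)<\infty$, so at least one successor has finite value. Because $\aut^\phi$ is a Büchi automaton, every accepted word visits $F^\phi$ at least once, so each first-visit value lies in $\nat$; well-ordering of $\nat$ therefore guarantees that whenever $m(s')$ is finite it is actually attained by some concrete $t''$. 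I would then pick $s'_1\in\sucs{s_0}$ minimizing $m$ and a corresponding witness trace $t'\in\traces{\tss_{s'_1}}$ realising $m(s'_1)$.

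Third, the two clauses in the definition of $\lproRel{\qSeqHat{1}}{s_0}{s'_1}$ then hold directly: the existence clause is witnessed by $t'$, and the optimality clause follows from the chain $\firstvisit{F^\phi}{\aut^\phi_{\qSeqHat{1}}}{\trs[1,\infty]\otimes t''}\geq m(s'')\geq m(s'_1)=\firstvisit{F^\phi}{\aut^\phi_{\qSeqHat{1}}}{\trs[1,\infty]\otimes t'}$ for any competitor pair $(s'',t'')$ with $s''\in\sucs{s_0}$. I do not expect any substantive obstacle; the only bit of care is in relating $\qSeqHat{1}$ to the suffix $\trs[1,\infty]\otimes t^\star[1,\infty]$, which uses both the uniqueness of the initial state of $\tss$ and the strategy's forced first move $\ssSeq{0}=s_0$, after which the argument reduces to two standard applications of well-ordering over $\nat$.
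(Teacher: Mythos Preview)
Your proposal is correct and follows essentially the same approach as the paper: obtain a witness from $\tss\models\varphi$, then minimize the first-visit time to $F^\phi$ over all candidate witness traces from successors of $s_0$, and take the successor of a minimizer as $s'_1$. The only cosmetic difference is that the paper minimizes over traces starting at $s_0$ (using the assumption $q_0^\phi\notin F^\phi$ to shift the first-visit index by one), whereas you minimize directly at the shifted level $\qSeqHat{1}$ via the per-successor quantities $m(s')$; your organization is marginally cleaner but not substantively different.
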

\begin{proof}
    Define 
    \begin{align*}
        T \coloneqq \{ t' \in \traces{\tss} \mid \trs \otimes t' \in \calL(\aut^\phi) \}.
    \end{align*}
    As assumed that $\tss \models \forall \pi.\exists \pi'. \phi$ and $\trs \in \traces{\tss}$ we get that $T \neq \emptyset$.
    For each $t \in T$, define $c(t) \coloneqq \firstvisit{F^\phi}{\aut^\phi}{\trs \otimes t}$ (for all $t \in T$ this number must be finite).
    Now pick any $t^\star \in T$ such that $c(t^\star)$ is minimal among all traces in $T$, i.e., a valid witness runs that visits an accepting state as fast as possible. 
    As $T \neq \emptyset$ there exists at least one such $t^\star$.
    Let $p^\star \in \paths{\tss}$ be such that $L(p^\star) = t^\star$.
    Now define $s_1' \coloneqq p^\star(1)$.
    As by definition $p^\star(0) = s_0$ (as this is the only initial state), we have $s'_1 \in \sucs{s_0}$ as required.

    It remains to show that $\trs[1, \infty] \in \lproRel{ \qSeqHat{1}}{s_0}{s'_1}$.
    As the witness for the existential quantification in the definition of $\lproRel{ \qSeqHat{1}}{s_0}{s'_1}$ we choose $t^\star[1, \infty]$.
    We then need to show that
    \begin{align}\label{eq:toshow1}
        \trs[1, \infty] \otimes t^\star[1, \infty] \in \calL(\aut^\phi_{ \qSeqHat{1}})
    \end{align}
    and 
    \begin{align}\label{eq:toshow2}
        \begin{split}
            &\forall s' \in \sucs{s_0}\ldot \forall t'' \in \mathit{Traces}(\tss_{s'}). \\
            &\quad\quad\trs[1, \infty] \otimes t'' \in \calL(\aut^\phi_{ \qSeqHat{1}})\\
            &\quad\quad\Rightarrow \Big(\firstvisit{F^\phi}{\aut^\phi_{ \qSeqHat{1}}}{\trs[1, \infty] \otimes t^\star[1, \infty]}  \\
            &\quad\quad\quad\quad\leq  \firstvisit{F^\phi}{\aut^\phi_{ \qSeqHat{1}}}{\trs[1, \infty] \otimes t''} \Big).
        \end{split}       
    \end{align}
    We first show (\ref{eq:toshow1}).
    By assumption we have $\trs \otimes t^\star \in \calL(\aut^\phi)$ and
    \begin{align*}
        &\trs \otimes t^\star \in \calL(\aut^\phi)\\
        &\Rightarrow (L(s_0)\trs[1,\infty]) \otimes (L(s_0)t^\star[1, \infty]) \in \calL(\aut^\phi_{q^\phi_0})\\
        &\Rightarrow (L(s_0), L(s_0)) (\trs[1, \infty] \otimes t^\star[1, \infty]) \in \calL(\aut^\phi_{q^\phi_0})\\
        &\Rightarrow  \trs[1, \infty] \otimes t^\star[1, \infty] \in \calL(\aut^\phi_{ \qSeqHat{1}}).
    \end{align*}
    Here the first implication follows as $\trs, t^\star \in \traces{\tss}$ and $s_0$ is the unique initial state, the second implication from the definition of $\otimes$ and the third from the definition of $ \qSeqHat{1}$ and the semantics of a deterministic automaton. 
    
    We now show (\ref{eq:toshow2}).
    Let $s' \in \sucs{s_0}$ and $t'' \in \traces{\tss_{s'}}$ be arbitrary such that $\trs[1, \infty] \otimes t'' \in \calL(\aut^\phi_{ \qSeqHat{1}})$.
    This implies that $\trs  \otimes L(s_0)t'' \in \calL(\aut)$ (note that $\qSeqHat{1} = \delta^\phi(q_0^\phi, (L(s_0), L(s_0)))$) and $L(s_0)t''\in \traces{\tss_{s_0}} = \traces{\tss}$.
    By construction of $T$ this implies $L(s_0)t'' \in T$ and so, by selection of $t^\star$, we get $c(t^\star) \leq c(L(s_0)t'')$.
    As we assume that $q^\phi_0 \not\in F^\phi$ we get
    \begin{align*}
    	c(t^\star) &= \firstvisit{F^\phi}{\aut^\phi}{\trs \otimes t^\star} \\
    	&= 1 + \firstvisit{F^\phi}{\aut^\phi_{ \qSeqHat{1}}}{\trs[1, \infty] \otimes t^\star[1, \infty]}.
    \end{align*}
	And similarly
    \begin{align*}
    	c(L(s_0)t'') &= \firstvisit{F^\phi}{\aut^\phi}{\trs \otimes L(s_0)t''} \\
    	&= 1 + \firstvisit{F^\phi}{\aut^\phi_{ \qSeqHat{1}}}{\trs[1, \infty] \otimes t''}.
    \end{align*}
    Combined with $c(t^\star) \leq c(L(s_0)t'')$ we get 
	\begin{align*}
		&\firstvisit{F^\phi}{\aut^\phi_{ \qSeqHat{1}}}{\trs[1, \infty] \otimes t^\star[1, \infty]}\\
		&\quad\quad \leq  \firstvisit{F^\phi}{\aut^\phi_{ \qSeqHat{1}}}{\trs[1, \infty] \otimes t''}
	\end{align*}
	as required in (\ref{eq:toshow2}).
\end{proof}

\begin{lemma}\label{lem:progressFull}
    If $t \in \lproRel{q}{x}{s}$ then there exists a state $y \in \sucs{s}$ such that $t[1, \infty] \in \lproRel{q'}{s}{y}$ where $q' = \delta^\phi(q, (t(0), L(s)))$.
\end{lemma}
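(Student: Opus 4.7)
The plan is to extract a witness from the hypothesis $t \in \lproRel{q}{x}{s}$, peel off one step, and then repeat the optimal-witness construction used in the proof of \refLemma{indStartFull}, but with the set of candidate witnesses now restricted to those launched from some successor of $s$ (rather than from a fixed initial state). First, I would take a witness $t' \in \mathit{Traces}(\tss_s)$ guaranteed by the hypothesis, so that $t \otimes t' \in \calL(\aut^\phi_q)$. Splitting off the first letter and unfolding the definition of $q'$ gives
\begin{align*}
t[1,\infty] \otimes t'[1,\infty] \in \calL(\aut^\phi_{q'}),
\end{align*}
and if $p' \in \paths{\tss_s}$ labels $t'$, then $p'[1,\infty]$ is a $\tss$-path starting in $y_0 := p'(1) \in \sucs{s}$ whose label is exactly $t'[1,\infty]$.

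Second, I would collect all candidate witnesses into the set
\begin{align*}
T := \big\{ \hat{t} \,\big|\, \exists y \in \sucs{s}.\; \hat{t} \in \mathit{Traces}(\tss_y) \text{ and } t[1,\infty] \otimes \hat{t} \in \calL(\aut^\phi_{q'}) \big\},
\end{align*}
which is non-empty by the previous step (containing $t'[1,\infty]$). Since every $\hat{t} \in T$ induces an accepting Büchi run, each value $\firstvisit{F^\phi}{\aut^\phi_{q'}}{t[1,\infty] \otimes \hat{t}}$ is finite, so these values form a subset of $\nat$ and thus possess a minimum. I would fix some $t^\star \in T$ attaining that minimum, and set $y$ to be the initial state of any $\tss$-path labeling $t^\star$; by construction $y \in \sucs{s}$ and $t^\star \in \mathit{Traces}(\tss_y)$.

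Third, I would verify that $t^\star$ witnesses $t[1,\infty] \in \lproRel{q'}{s}{y}$. Membership in $\mathit{Traces}(\tss_y)$ and acceptance $t[1,\infty] \otimes t^\star \in \calL(\aut^\phi_{q'})$ hold directly by the choice of $y$ and $t^\star$. For the optimality clause, any competing $t'' \in \mathit{Traces}(\tss_{y'})$ with $y' \in \sucs{s}$ satisfying $t[1,\infty] \otimes t'' \in \calL(\aut^\phi_{q'})$ lies in $T$, so the minimality of $t^\star$ immediately yields the required first-visit inequality.

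The main subtlety, and the reason this proof is not a verbatim copy of \refLemma{progress} from the safety case, is the scope of optimality: the definition of $\lproRel{q'}{s}{y}$ demands optimality across \emph{all} successors of $s$, not only across traces starting in $y$. This forces the minimization in the second step to range over $T$ as a whole rather than over witnesses starting in any fixed state, and it dictates the order of the construction, namely that $y$ must be chosen \emph{after} an optimal $t^\star$ is fixed (not before). With this ordering, the rest of the argument is routine.
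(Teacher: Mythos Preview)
Your proposal is correct and follows essentially the same approach as the paper: extract a witness $t'$ from the hypothesis to show the candidate set $T$ is non-empty, pick a first-visit-minimal $t^\star \in T$, then choose $y$ as the starting state of a path labelling $t^\star$, and verify both clauses of $\lproRel{q'}{s}{y}$ using minimality of $t^\star$ over all of $T$. You also correctly identify the key subtlety that $y$ must be fixed only after $t^\star$ is chosen because optimality ranges over all successors of $s$.
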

\begin{proof}
    Assume $t \in \lproRel{q}{x}{s}$. 
    Let $t' \in \traces{\tss_s}$ be the witness trace in the definition of $\lproRel{q}{x}{s}$.
    In particular we get 
    \begin{align}\label{eq:cond1}
        t \otimes t' \in \calL(\aut^\phi_q).
    \end{align}
    Now define
    \begin{align*}
        T \coloneqq \{ t' \mid \exists y \in \sucs{s}\ldot &t' \in \traces{\tss_y} \;\land   \\
        &t[1,\infty] \otimes t' \in \calL(\aut^\phi_{q'}) \}.
    \end{align*}
    That is, $T$ contains all witness traces for $t[1, \infty]$ that can be generated from some successor of $s$.
    
    We first argue that $T \neq \emptyset$.
    Define $t'' \coloneqq t'[1, \infty]$. We claim $t'' \in T$.
    We thus need to show that, first $\exists y \in \sucs{s}. t'' \in \traces{\tss_y} $, and second $t[1,\infty] \otimes t'' \in \calL(\aut^\phi_{q'}$.
    The first follows easy as $t' \in \traces{\tss_s}$ and so $t'' = t'[1, \infty]$ is the trace from some successor of $s$.
    For the second, we use (\ref{eq:cond1}) and the fact that
    \begin{align*}
        t\otimes t'  \in \calL(\aut^\phi_{q})
        &\Rightarrow(t(0) t[1, \infty]) \otimes (L(s)t'')) \in \calL(\aut^\phi_{q}) \\
        &\Rightarrow (t(0), L(s)) (t[1, \infty]) \otimes t'') \in \calL(\aut^\phi_{q}) \\
        &\Rightarrow t[1, \infty] \otimes t'' \in \calL(\aut^\phi_{q'}).
    \end{align*}%
    Here the first implication holds by definition of $t''$, the second by definition $\otimes$, and the third by definition of $q'$ as $q' = \delta^\phi(q, (t(0), L(s)))$.
    
    We thus have established that $T \neq \emptyset$.
    Similar to the proof of \Cref{lem:indStartFull}, we pick an optimal trace in $T$.
    For each $t' \in T$, define $c(t') \coloneqq \firstvisit{F^\phi}{\aut^\phi_{q'}}{t[1,\infty] \otimes t'}$ (for all $t' \in T$ this number must be finite).
    Now pick any $t^\star \in T$ such that $c(t^\star)$ is minimal among all traces in $T$, i.e., a valid witness runs that visits an accepting state as fast as possible. 
    As $t^\star \in T$ we have that $t^\star \in \traces{\tss_{y^\star}}$ for some $y^\star \in \sucs{s}$.
    We claim that $y^\star$ is the witness asked for in the present lemma.
    
    It remains to show that $t[1, \infty] \in \lproRel{q'}{s}{y^\star}$.
    For the existential quantifier in the definition of $\lproRel{q'}{s}{y^\star}$ we choose $t^\star$.
    We then need to that
    \begin{align}\label{eq:toShow1}
        t[1, \infty] \otimes t^\star \in \calL(\aut^\phi_{q'})
    \end{align}
    and 
    \begin{align}\label{eq:toShow2}
        \begin{split}
            &\forall y' \in \sucs{s}. \forall t'' \in \mathit{Traces}(\tss_{y'}). \\
            &\quad\quad t[1, \infty] \otimes t'' \in \calL(\aut^\phi_{q'})\\
            &\quad\quad\Rightarrow \Big(\firstvisit{F^\phi}{\aut^\phi_{q'}}{t[1, \infty] \otimes t^\star}  \\
            &\quad\quad\quad\quad\leq  \firstvisit{F^\phi}{\aut^\phi_{q'}}{t[1, \infty] \otimes t''} \Big).
        \end{split}       
    \end{align}
    Here, (\ref{eq:toShow1}) follows directly as $t^\star \in T$.
    To show (\ref{eq:toShow2}) let $y' \in \sucs{s}$ and $t'' \in \mathit{Traces}(\tss_{y'})$ be arbitrary with $t[1, \infty] \otimes t'' \in \calL(\aut^\phi_{q'})$.
    This already gives us that $t'' \in T$.
    By construction of $t^\star$ we get that $c(t^\star) \leq c(t'')$ and so 
    \begin{align*}
        \begin{split}
            c(t^\star) = &\firstvisit{F^\phi}{\aut^\phi_{q'}}{t[1, \infty] \otimes t^\star} \\
            &\quad\quad\leq  \firstvisit{F^\phi}{\aut^\phi_{q'}}{t[1, \infty] \otimes t''} = c(t'')
        \end{split}
    \end{align*}
	as required in (\ref{eq:toShow2}).
\end{proof}

Using \Cref{lem:indStartFull} and \Cref{lem:progressFull} we are now in a position to prove \Cref{prop:claimFull}.

\begin{proof}[Proof of \Cref{prop:claimFull}]
    We prove this by induction on $i$.
    
   \textbf{The case $i = 1$:}  By \Cref{lem:indStartFull} there exists $s'_1 \in \sucs{s_0}$ such that $\trs[1, \infty] \in \lproRel{ \qSeqHat{1}}{s_0}{s'_1}$.
   	By \Cref{as:maFull} this implies that $\provRel{ \qSeqHat{1}}{s_0}{s'_1} \in \bSeq{1}$.
        In the first step of $\vertex$, $\sigma$ returns the unique initial state $s_0$, so $\ssSeq{0} = s_0$.
        In the second step and using the above we get that $s_1' \in C$ (where $C$ is computed in iteration $i = 1$ in line \ref*{line:b4} of \Cref{fig:strategyBuchi}) so $C \neq \emptyset$.
        By construction $\sigma$ will have picked some state $\ssSeq{1} \in C$ (not necessarily $s'_1$) which already implies that $\provRel{\qSeqHat{1}}{\ssSeq{0}}{\ssSeq{1}}  \in \bSeq{1}$ as required.
        
        \textbf{The case $i > 1$:} By the induction hypothesis we can assume that $\provRel{\qSeqHat{i-1}}{\ssSeq{i-2}}{\ssSeq{i-1}}  \in \bSeq{i-1}$. 
        By \Cref{as:maFull}, this implies that 
        \begin{align*}
            \trs[i-1, \infty] \in \lproRel{\qSeqHat{i-1}}{\ssSeq{i-2}}{\ssSeq{i-1}}.
        \end{align*}
        By \Cref{lem:progressFull} there exists $y \in \sucs{\ssSeq{i-1}}$ such that 
        \begin{align}
            (\trs[i-1, \infty])[1, \infty] = \trs[i, \infty] \in \lproRel{q'}{\ssSeq{i-1}}{y}
        \end{align}
        where 
        \begin{align*}
            q' = \delta^\phi(\qSeqHat{i-1}, (\trs(i-1), L(\ssSeq{i-1}))) = \qSeqHat{i}.
        \end{align*}
        As $\trs[i, \infty] \in \lproRel{\qSeqHat{i}}{\ssSeq{i-1}}{y}$ we can, again, use \Cref{as:maFull} and obtain that $\provRel{ \qSeqHat{i}}{\ssSeq{i-1}}{y} \in \bSeq{i}$.
        Now by construction of $\sigma$ this implies that $y \in C$ (where $C$ is the set computed by $\sigma$ in the $i$th iteration in line \ref*{line:b4} of \Cref{fig:strategyBuchi}) so $C \neq \emptyset$ and the strategy will have picked some successor $\ssSeq{i} \in C$ (not necessarily $y$) which already implies that $\provRel{\qSeqHat{i}}{\ssSeq{i-1}}{\ssSeq{i}} \in \bSeq{i}$ as required.
\end{proof}

\subsection{Ranking Function}

Different from the safety case, \Cref{prop:claimFull} does not directly imply that $\veri$ actually wins $\vertex$.
We need some more notation to formalize the idea that choosing a best successor (as in the definition of our prophecies) actually guarantees a repeated visit to $F^\phi$.
We employ an, at first glance, unrelated function that serves as a ranking argument for the next visit to an accepting state, i.e., in each step this variant (into a well-ordered set $\nat$) decreases, thereby guaranteeing infinity many visits to $F^\phi$.

For $q \in Q^\phi, s \in S$ and $t \in \Sigma^\omega$ we define
\begin{align*}
    \reponse{q}{x}{t} \coloneqq \{ t'  \mid \exists &s' \in \sucs{x}. t' \in \traces{\tss_{s'}} \; \land \\
    &t \otimes t' \in \calL(\aut^\phi_q) \}.
\end{align*}
I.e., as all traces starting in some successor of $x$ that can serve as a witness for $t$.
We then define
\begin{align*}
    \optPath{q}{x}{t} \coloneqq \min\limits_{t' \in \reponse{q}{x}{t}} \firstvisit{F^\phi}{\aut^\phi_q}{t \otimes t'}
\end{align*}
as the best possible response to trace $t$ from $q, x$.
If $\reponse{q}{x}{t} = \emptyset$ the minimum ranges over an empty set so $\optPath{q}{x}{t}  = \infty$.
We can establish the following relation between the fact that $t \in \lproRel{q}{x}{s}$ and the definition of $\optPath{q}{x}{t}$.

\begin{proposition}\label{prop:progress}
    For all $q \in Q^\phi$, $s \in \sucs{x}$, and $t \in \Sigma^\omega$ such that $t \in \lproRel{q}{x}{s}$ the following hold:
    \begin{enumerate}
        \item $\optPath{q}{x}{t} < \infty$, and
        \item If $q \not\in F^\phi$ then 
        \begin{align*}
            \optPath{q'}{s}{t[1,\infty]} < \optPath{q}{x}{t}
        \end{align*}
        where $q' = \delta^\phi(q, (t(0), L(s)))$.
    \end{enumerate}
\end{proposition}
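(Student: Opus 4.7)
The proof plan is to unpack the definition of $\lproRel{q}{x}{s}$ and use the existentially quantified witness trace $t'$ in two ways: first to certify that the set $\reponse{q}{x}{t}$ is non-empty and contains a trace reaching $F^\phi$ in finitely many steps, and second to construct a shifted witness that decreases the ranking by exactly one. The optimality clause baked into $\lproRel{q}{x}{s}$ is what makes the witness $t'$ achieve precisely $\optPath{q}{x}{t}$, which is the crucial link between the prophecy semantics and the ranking function.

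For part (1), I would let $t'$ be the trace guaranteed by $t \in \lproRel{q}{x}{s}$, so $t' \in \traces{\tss_s}$ and $t \otimes t' \in \calL(\aut^\phi_q)$. Since $s \in \sucs{x}$, this $t'$ lies in $\reponse{q}{x}{t}$, so the set is non-empty. Because $\aut^\phi_q$ is a Büchi automaton and $t \otimes t'$ is accepted, the unique run visits $F^\phi$ infinitely often, and in particular at least once, so $\firstvisit{F^\phi}{\aut^\phi_q}{t \otimes t'} < \infty$. The minimum defining $\optPath{q}{x}{t}$ is therefore finite.

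For part (2), I would combine the optimality clause with a one-step shift argument. The optimality half of the definition of $\lproRel{q}{x}{s}$ forces the witness $t'$ above to minimize the first-visit time among all candidates in $\reponse{q}{x}{t}$, so $\firstvisit{F^\phi}{\aut^\phi_q}{t \otimes t'} = \optPath{q}{x}{t}$. Call this value $k$. Since $q \notin F^\phi$, the run on $t \otimes t'$ from $q$ cannot visit $F^\phi$ at step $0$, hence $k \geq 1$. Now consider $t'' \coloneqq t'[1,\infty]$: choosing a path $p'$ in $\tss_s$ with $L(p') = t'$, we get $t'' \in \traces{\tss_{p'(1)}}$ and $p'(1) \in \sucs{s}$. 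Unfolding one transition of $\aut^\phi$ from $q$ reading $(t(0), L(s))$ yields state $q'$, and from there the run on $t[1,\infty] \otimes t''$ is exactly the suffix of the run on $t \otimes t'$, so $t[1,\infty] \otimes t'' \in \calL(\aut^\phi_{q'})$, meaning $t'' \in \reponse{q'}{s}{t[1,\infty]}$. Moreover the first $F^\phi$-visit on this shifted run happens at step $k-1$, so $\optPath{q'}{s}{t[1,\infty]} \leq k-1 < k = \optPath{q}{x}{t}$.

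The only delicate point is keeping the two index shifts consistent: the shift of $t'$ to $t'[1,\infty]$, the corresponding move of the $\tss$-path from $s$ to $p'(1) \in \sucs{s}$, and the automaton transition from $q$ to $q'$ must all line up so that the residual run from $q'$ on the shifted pair coincides with the tail of the original run. Once this alignment is made explicit, both parts are immediate, and no well-foundedness argument is needed here — that will come later when this proposition is used to show repeated visits to $F^\phi$ along a play compatible with the strategy in \Cref{fig:strategyBuchi}.
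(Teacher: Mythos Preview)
Your proposal is correct and follows essentially the same approach as the paper's proof: take the witness $t'$ (the paper calls it $t^\dagger$) from the definition of $\lproRel{q}{x}{s}$, use the optimality clause to show it attains $\optPath{q}{x}{t}$, shift to $t'[1,\infty]$ as a candidate in $\reponse{q'}{s}{t[1,\infty]}$, and use $q\notin F^\phi$ to get the strict decrease by one. The paper structures the same steps as three labeled equations (``First Eq'', ``Second Eq'', ``Third Eq'') before combining them, but the logical content is identical to your outline.
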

\begin{proof}
    By definition of $\lproRel{q}{x}{s}$ and as $t \in \lproRel{q}{x}{s}$, we get a witness trace $t^\dagger \in \traces{s}$ such that
    \begin{align}\label{eq:Cond1}
        t \otimes t^\dagger \in \calL(\aut^\phi_q)
    \end{align}
    and 
    \begin{align}\label{eq:Cond2}
        \begin{split}
            &\forall s' \in \sucs{x}. \forall t'' \in \mathit{Traces}(\tss_{s'}). t \otimes t'' \in \calL(\aut^\phi_q)\\
            &\quad\quad\Rightarrow \Big(\firstvisit{F^\phi}{\aut^\phi_q}{t \otimes t^\dagger}  \\
            &\quad\quad\quad\quad\leq  \firstvisit{F^\phi}{\aut^\phi_q}{t \otimes t''} \Big).
        \end{split}       
    \end{align}
    We now show three separate equations:
    
    \textbf{First Eq:}
    We show that 
    \begin{align}\label{eq:eq}
        \optPath{q}{x}{t} = \firstvisit{F^\phi}{\aut^\phi_q}{t \otimes t^\dagger}.
    \end{align}
    That is, $t^\dagger$ is one of the optimal traces in the definition of $\optPath{q}{x}{t}$.
    We show this equality by showing the $\leq$-direction and $\geq$-direction separately.
    For the $\leq$-direction, we use (\ref{eq:Cond1}), the fact that $t^\dagger \in \traces{\tss_s}$, and $s \in \sucs{x}$, to deduce that $t^\dagger \in \reponse{q}{x}{t}$.
    As $\optPath{q}{x}{t}$ is defined as the minimum we thus have $\optPath{q}{x}{t} \leq \firstvisit{F^\phi}{\aut^\phi_q}{t \otimes t^\dagger}$.
    In particular $\optPath{q}{x}{t}  < \infty$ so we have already proven the first point of this proposition.
    For the $\geq$-direction, take an arbitrary $t' \in \reponse{q}{x}{t}$. 
    Now by definition of $\reponse{q}{x}{t}$ we get that $t' \in \traces{\tss_{s'}}$ for some $s' \in \sucs{x}$ and $t \otimes t' \in \calL(\aut^\phi_q)$.
    If we plug $s'$ and $t'$ into (\ref{eq:Cond2}), we get 
    \begin{align*}
        \begin{split}
            \firstvisit{F^\phi}{\aut^\phi_q}{t \otimes t^\dagger}  \leq  \firstvisit{F^\phi}{\aut^\phi_q}{t \otimes t'}.
        \end{split}       
    \end{align*}
    As this holds for every $t' \in  \reponse{q}{x}{t}$, it also holds for the minimum over $\reponse{q}{x}{t}$ so $\optPath{q}{x}{t} \geq \firstvisit{F^\phi}{\aut^\phi_q}{t \otimes t^\dagger}$.
    
    \textbf{Second Eq:}
    Define $t^\star \coloneqq t^\dagger[1, \infty]$ (so $t^\dagger = L(s)t^\star$).
    We show that 
    \begin{align}\label{eq:third}
    	\optPath{q'}{s}{t[1, \infty]} \leq \firstvisit{F^\phi}{\aut^\phi_{q'}}{t[1,\infty] \otimes t^\star}.
    \end{align}
	By definition of $\mathit{opt}$ (which is defined as the minimum) it suffices to show that 
    \begin{align*}
        t^\star \in \reponse{q'}{s}{t[1,\infty]}.
    \end{align*}
    To show this we need to show that $\exists s' \in \sucs{s}\ldot t^\star \in \traces{\tss_{s'}}$ and $t[1, \infty] \otimes t^\star \in \calL(\aut^\phi_{q'})$.
    The first follows directly, as $t^\dagger \in \traces{\tss_s}$ we must have that $t^\star = t^\dagger[1, \infty] \in \traces{\tss_{s'}}$ for some $s' \in \sucs{s}$.
    For the second we can deduce:
    \begin{align*}
        t\otimes t^\dagger  \in \calL(\aut^\phi_{q})
        &\Rightarrow(t(0) t[1, \infty]) \otimes (L(s)t^\star)) \in \calL(\aut^\phi_{q}) \\
        &\Rightarrow (t(0), L(s)) (t[1, \infty]) \otimes t^\star) \in \calL(\aut^\phi_{q}) \\
        &\Rightarrow t[1, \infty] \otimes t^\star \in \calL(\aut^\phi_{q'}).
    \end{align*}%
    Here $t\otimes t^\dagger  \in \calL(\aut^\phi_{q})$ holds by (\ref{eq:Cond1}), the first implication by definition of $t^\star$, the second by definition $\otimes$, and the third by the automaton semantics and as $q' = \delta^\phi(q, (t(0), L(s)))$.
    
    \textbf{Third Eq:}
    As a third step, we derive that
    \begin{align}\label{eq:plus1}
        \begin{split}
            &\firstvisit{F^\phi}{\aut^\phi_q}{t \otimes t^\dagger} \\
            &\quad= \firstvisit{F^\phi}{\aut^\phi_q}{t(0)t[1,\infty] \otimes L(s)t^\star}\\
            &\quad= 1 +  \firstvisit{F^\phi}{\aut^\phi_{q'}}{t[1,\infty] \otimes t^\star}.
        \end{split}
    \end{align}
    Here the first equality follows from the definition of $t^\star$ and the second from the definition of $\mathit{firstVisit}$, the fact that $q' = \delta^\phi(q, (t(0), L(s)))$ and crucially the assumption that $q \not\in F^\phi$.
    
    \textbf{Combination:}
    We can now put all the pieces together and derive
    \begin{align*}
        \optPath{q}{x}{t} &= \firstvisit{F^\phi}{\aut^\phi_q}{t \otimes t^\dagger}\\
        &= 1 +  \firstvisit{F^\phi}{\aut^\phi_{q'}}{t[1,\infty] \otimes t^\star}\\
        &> \firstvisit{F^\phi}{\aut^\phi_{q'}}{t[1,\infty] \otimes t^\star}\\
        &\geq  \optPath{q'}{s}{t[1, \infty]}.
    \end{align*}
    The first equality follows from \ref{eq:eq}, the second one from \ref{eq:plus1}.
    The strict inequality is by simple arithmetic and the last inequality follows from \ref{eq:third}.
    This concludes the proof.
\end{proof}

Using \Cref{prop:claimFull} and \Cref{prop:progress} we can show that $\veri$ wins $\vertex$:

\begin{proposition}\label{prop:resFull}
    Play $\vertex$ is won by $\veri$.
\end{proposition}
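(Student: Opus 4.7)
The plan is to reduce the winning condition on $\vertex$ to an infinite-visit property of the Büchi automaton run, and then discharge that via a well-founded variant argument using $\mathit{opt}$.

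First I would observe that by the structure of $\game{\tss^P}{\varphi^{P,\Xi}}$ together with \Cref{as:maFull}, the play $\vertex$ is won by $\veri$ precisely when $[\pi\mapsto\trs,\pi'\mapsto\trss]\models\phi$. Since $\aut^\phi$ is a deterministic Büchi automaton for $\phi$ and $\qSeqHat{0}\qSeqHat{1}\cdots$ is by construction its unique run on $\trs\otimes\trss$, this further reduces to showing that this run visits $F^\phi$ infinitely often.

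The core of the plan is a ranking argument. Define $r_i \coloneqq \optPath{\qSeqHat{i}}{\ssSeq{i-1}}{\trs[i,\infty]}$ for each $i\geq 1$. Combining \Cref{prop:claimFull} with \Cref{as:maFull} yields $\trs[i,\infty]\in\lproRel{\qSeqHat{i}}{\ssSeq{i-1}}{\ssSeq{i}}$ for every $i\geq 1$, and $\ssSeq{i}\in\sucs{\ssSeq{i-1}}$ holds by validity of the play. The first clause of \Cref{prop:progress} then gives $r_i<\infty$, and its second clause gives $r_{i+1}<r_i$ whenever $\qSeqHat{i}\notin F^\phi$; here one checks that $\delta^\phi(\qSeqHat{i},(\trs(i),L(\ssSeq{i})))=\qSeqHat{i+1}$ by unfolding the definition of $\qSeqHat{i+1}$ and noting that $\trs(i)=L(\sSeq{i})=\sTr{i}$ and $\ssTr{i}=L(\ssSeq{i})$.

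I would then conclude by contradiction: if $F^\phi$ were visited only finitely often, one could pick $N\geq 1$ with $\qSeqHat{i}\notin F^\phi$ for all $i\geq N$, and the previous step would produce an infinite strictly decreasing sequence $r_N>r_{N+1}>\cdots$ in $\nat$, which is impossible. Hence $F^\phi$ is visited infinitely often, $\trs\otimes\trss\in\calL(\aut^\phi)$, and $\vertex$ is winning for $\veri$. The only delicate aspect is the index bookkeeping that links the automaton update in \Cref{prop:progress} to $\qSeqHat{i+1}$; no new machinery beyond \Cref{prop:claimFull} and \Cref{prop:progress} is required.
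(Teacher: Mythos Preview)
Your proposal is correct and follows essentially the same approach as the paper: define the ranking $r_i = \optPath{\qSeqHat{i}}{\ssSeq{i-1}}{\trs[i,\infty]}$, combine \Cref{prop:claimFull} with \Cref{as:maFull} to place $\trs[i,\infty]$ in the appropriate prophecy, and then invoke the two clauses of \Cref{prop:progress} to get finiteness and strict decrease outside $F^\phi$. Your explicit check that the automaton update matches $\qSeqHat{i+1}$ is a nice clarification the paper leaves implicit.
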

\begin{proof}
    We need to show that $\big[\pi \mapsto \trs, \pi' \mapsto \trss\big] \models \phi$.
    That is, the unique run of $\aut^\phi$ on $\trs \otimes \trss$, which is $\qSeqHat{0}\qSeqHat{1}\cdots$, visits states in $F^\phi$ infinity any times. 
    For any $i \geq 1$ define $r_i \in \nat$ by 
    \begin{align*}
        r_i \coloneqq \optPath{\qSeqHat{i}}{\ssSeq{i-1}}{\trs[i,\infty]}.
    \end{align*}
    We claim that for every $i \geq 1$, $r_i < \infty$.
    Moreover for every $i$, either $\qSeqHat{i} \in F^\phi$ or $r_i > r_{i+1}$. 
    This would already imply that $\qSeqHat{i} \in F^\phi$ for infinity many $i$ as $r_i$ can only decrease a finite number of times between two steps where $\qSeqHat{i} \in F^\phi$.
    
    Fix any $i \geq 1$. By \Cref{prop:claimFull} we have that 
    \begin{align*}
        \provRel{\qSeqHat{i}}{\ssSeq{i-1}}{\ssSeq{i}} \in \bSeq{i}.
    \end{align*}
    Using \Cref{as:maFull} this implies that 
    \begin{align*}
        \trs[i,\infty] \in \lproRel{\qSeqHat{i}}{\ssSeq{i-1}}{\ssSeq{i}}.
    \end{align*}
    By the first part in \Cref{prop:progress} and the definition of $r_i$ this already gives us $r_i < \infty$.
    To show the second part of the claim assume that $\qSeqHat{i} \not\in F^\phi$.
    Using the second part in \Cref{prop:progress} we thus get 
    \begin{align*}
        &r_{i+1}= \optPath{\qSeqHat{i+1}}{\ssSeq{i}}{\trs[i+1,\infty]} \\
        &\quad\quad< \optPath{\qSeqHat{i}}{\ssSeq{i-1}}{\trs[i,\infty]} = r_{i}
    \end{align*}
    as required.
\end{proof}

\Cref{prop:resFull} concludes the proof of \Cref{theo:compgeneral}.

\section{Prophecy Construction Beyond Deterministic Büchi}\label{app:constructionBeyond}

In this section, we formalize the prophecy construction beyond deterministic Büchi.
We only sketch an idea of the proof as the main ideas can already be found in the proof of \Cref{theo:compgeneral} for deterministic Büchi automata. 
We first consider the case of a single-pair Rabin automaton where states in $F^\phi$ should be visited infinitely many times and states in $B^\phi$ at most finitely many times.
Similar to the definition of $\firstvisit{F^\phi}{\aut^\phi}{t}$ we define $\lastvisit{B^\phi}{\aut^\phi}{t} \in \nat \cup \{\infty, \bot\}$ as the last time that a state in  $B^\phi$ has been visited.
We define it to be $\infty$ if $B^\phi$ is visited infinity many times (so there is no last visit) and $\bot$ if it \emph{never} visits a state in $B^\phi$. (Note that if $\lastvisit{B^\phi}{\aut^\phi}{t} = 0$ a visit fo $B^\phi$ occurs in the initial state).

With those definitions at hand we can define two sets of traces.
We define $\lproRel{q}{x}{s}^1$ as\\
\scalebox{0.92}{
\parbox{\columnwidth}{
     \begin{align*}
        \Big\{ t &\in \Sigma^\omega \mid \exists t' \in \mathit{Traces}(\tss_{s}). \\
        &t \otimes t' \in \calL(\aut^\phi_q) \land \lastvisit{B^\phi}{\aut^\phi_q}{t \otimes t'} = \bot \,\land \\
        &\Big[\forall s' \in \sucs{x}\ldot \forall t'' \in \mathit{Traces}(\tss_{s'}). \\
        &\quad \big(t \otimes t'' \in \calL(\aut^\phi_q) \land \lastvisit{B^\phi}{\aut^\phi_q}{t \otimes t''} = \bot\big) \\
        &\quad\Rightarrow \firstvisit{F^\phi}{\aut^\phi_q}{t \otimes t'}  \leq  \firstvisit{F^\phi}{\aut^\phi_q}{t \otimes t''}\Big] \Big\}.
    \end{align*}
}
}\\
This definition corresponds to the first point in our informal definition. 
A trace is in $\lproRel{q}{x}{s}^1$ if there exists a witness trace that never visits a state in $B^\phi$ (which we can express as $\lastvisit{B^\phi}{\aut^\phi_q}{t \otimes t'} = \bot$) such that every alternative successor $s'$ and trace from $s'$ that also never visits $B^\phi$ takes at least as long to visit a state in $F^\phi$ for the first time.

We define $\lproRel{q}{x}{s}^2$ as\\
\scalebox{0.92}{
    \parbox{\columnwidth}{
        \begin{align*}
            \Big\{ t &\in \Sigma^\omega \mid \Big[\forall s' \in \sucs{x}\ldot \forall t'' \in \mathit{Traces}(\tss_{s'})\ldot \\
            &\quad \big(t \otimes t'' \in \calL(\aut^\phi_q) \Rightarrow  \lastvisit{B^\phi}{\aut^\phi_q}{t \otimes t''} \neq \bot\big) \Big] \land  \\
            &\quad\exists t' \in \mathit{Traces}(\tss_{s})\ldot t \otimes t' \in \calL(\aut^\phi_q) \, \land \\
            &\quad\Big[\forall s' \in \sucs{x}\ldot \forall t'' \in \mathit{Traces}(\tss_{s'})\ldot  t \otimes t'' \in \calL(\aut^\phi_q) \\
            &\quad\quad\Rightarrow \lastvisit{B^\phi}{\aut^\phi_q}{t \otimes t'}  \leq  \lastvisit{B^\phi}{\aut^\phi_q}{t \otimes t''}\Big] \Big\}.
        \end{align*}
    }
}\\
The definition states that from no successor is it possible to construct a witness such that the run of $\aut^\phi$ never visits a state in $B^\phi$.
Additionally there should be a witness trace from $s$ such that for every possible successor $s'$ of $x$ and all traces $t''$ from $s'$ the last visit to $B^\phi$ (which must be finite as both $t'$ and $t''$ are winning) occurs at least as fast on $t \otimes t'$ as on $t \otimes t''$.

Finally we can define the prophecies for one-pair Rabin automata as the union of both languages, i.e., 
\begin{align*}
    \lproRel{q}{x}{s} \coloneqq \lproRel{q}{x}{s}^1 \cup \lproRel{q}{x}{s}^2.
\end{align*}
We again need to argue that the resulting languages are $\omega$-regular.
Here we can employ the same idea as in \Cref{prop:omegaReagular} to construct QPTL representations of $ \lproRel{q}{x}{s}^1$ and  $\lproRel{q}{x}{s}^2$ and use the fact that $\omega$-regular languages are closed under union.

\section{Reducing the Number Of Prophecies to $\calO(|S| \log |S|)$}\label{app:nlogn}

In this section we sketch a proof to reduce the number of prophecies needed to obtain a proof of the following:

\balance
\propnlogn*
\begin{proof}
	In the construction from \Cref{sec:compGeneral} the number of prophecies scales quadratically in the number of states in $\tss$, as, for each state $x$, we identify all successor states $s$ that are optimal. 
	That is, for a given trace $t$, state $x$ and automaton state $q$, the set $\{s \mid t \in \lproRel{q}{x}{s}\}$ are all states that are ``optimal''.
	For an actual strategy, it is, however, not necessary to obtain the \emph{set} of optimal successors but only a fixed one of those optimal successors. 
	This allows for exponentially fewer predicates (compared to the construction in \Cref{sec:compGeneral}).
	For simplicity, we stick with the case where $\phi$ is recognizable by a deterministic Büchi automaton.
	In our previous construction we defined a prophecy $\lproRel{q}{x}{s}$ only for states $x, s$ where $s \in \sucs{x}$. 
	For simplify we set $\lproRel{q}{x}{s} = \emptyset$ if $s \not\in \sucs{x}$. 
	This way, $\lproRel{q}{x}{s}$ is defined for all $q \in Q^\phi$ and $x, s \in S$.
	The set $\{\lproRel{x}{q}{s}  \}_{q \in Q^\phi, x, s \in S}$ is still complete.

	Assume that $S = \{s_1, \ldots, s_n\}$ and assume w.l.o.g., that $n = 2^m$ for some some $m$ (we can always add unreachable states to the system).
	As a first step, we define modified prophecy variables that only hold if the successor is optimal \emph{and} also minimal (where we order states based on their index).
	For a set of traces $A$, we write $\overline{A}$ for the complement of $A$.
	For $q \in Q^\phi, x,  \in S, s_i \in S$ we define
	\begin{align*}
		\lpromin{x}{q}{s_i} \coloneqq \lproRel{q}{x}{s_i} \cap \bigcap\limits_{j < i} \overline{\lproRel{q}{x}{s_j}}.
	\end{align*}%
	That is, a trace $t$ is in $\lpromin{x}{q}{s}$ if $s$ is a successor for the $\exists$-player that is optimal (in the sense that $t \in \lproRel{q}{x}{s}$) but no smaller state is optimal. 
	It is easy to see that $\{  \lpromin{x}{q}{s}  \}_{q \in Q^\phi, x, s \in S}$ (suing the same proof that showed that $\{  \lproRel{x}{q}{s}  \}_{q \in Q^\phi, x, s \in S}$ is complete , cf.~\Cref{theo:compgeneral}).
	
	We further note that $\{\lpromin{x}{q}{s}\}_{s \in S}$ forms a \emph{partition} of $\bigcup_{s \in S} \lproRel{q}{x}{s}$. 
	In particular, at most one of the prophecies can hold at any given time, i.e., for every trace $t$ there exists \emph{at most once} $s$ with $t \in \lpromin{x}{q}{s}$.
	We now encode this unique state $s$ in binary.
	Let $\encode{\cdot} :  \{1, \ldots, n\} \to \mathbb{B}^m$ and $\decode{\cdot} : \mathbb{B}^m \to \{1, \ldots, n\}$ be some encoding of the states, i.e., for every $i \in \{1, \ldots, n\}$, $\decode{\encode{i}} = i$.
	We write $\encode{i}_j \in \mathbb{B}$ for the $j$th bit in $\encode{i}$.
	For $q \in Q^\phi$, $x \in S$ and $j \in \{1, \ldots, m\}$ we now define
	\begin{align*}
		\lpromod{x}{q}{j} \coloneqq \bigcup_{\substack{i \in \{1, \ldots, n\}\\ \encode{i}_j = \top}} \lpromin{x}{q}{s_i}.
	\end{align*}%
	That is $\lpromod{x}{q}{j}$ contains all traces that are in $\lpromin{x}{q}{s_i}$ for state $s_i$ where the $j$th position in the binary encoding of $i$ is set to true. 
	
	In particular, for every $q \in Q^\phi, x \in S$ we can recover the sets $\lpromin{x}{q}{s_i}$ for each $s_i \in S$ as a boolean combination of $\lpromod{x}{q}{1}, \ldots, \lpromod{x}{q}{m}$.
	We write $\mathit{ite}(c, A, B)$ for $A$ if $b = \top$ and $B$ otherwise.
	With this notation we get that 
	\begin{align}\label{eq:binary}
		\lpromin{x}{q}{s_i} = \bigcap_{j = 1}^m \mathit{ite}(\encode{i}_j,  \lpromod{x}{q}{j} , \overline{ \lpromod{x}{q}{j} } ).
	\end{align}
	Note that here it is crucial that for every $q \in Q^\phi, x \in S$ the sets $\{\lpromin{x}{q}{s}\}_{s \in S}$ are pairwise disjoint.
	
	We claim that $\{  \lpromod{x}{q}{j}  \}_{q \in Q^\phi, x \in S, j \in \{1, \ldots, m\}}$ is a complete set of prophecies.
	Note that this set has the desired size of $\calO(|S| \log |S|)$.
	The idea to construct a strategy for the $\exists$-player is similar to before but now uses the binary encoding.
	In each step (where $x$ is the current system state and $q$ the current state of $\aut^\phi$), the strategy checks the truth value of $\lpromod{x}{q}{1}, \ldots, \lpromod{x}{q}{m}$ via the prophecy variable (which we can assume to be set correctly).
	Let $b \in \mathbb{B}^m$ be this vector.
	The strategy then selects $s_{\decode{b}}$ as its successor state. 
	As established in (\ref{eq:binary}) this decoding recovers the sets $\lpromin{x}{q}{s_i}$, so the strategy selects the successor that is optimal (as identified in $\lpromin{x}{q}{s_i}$).
\end{proof}

\fi

\end{document}